\newtheorem{definition}{Definition}
\newtheorem{proposition}[definition]{Proposition}
\newtheorem{lemma}[definition]{Lemma}
\newtheorem{theorem}[definition]{Theorem}
\newtheorem{corollary}[definition]{Corollary}
\def\squareforqed{\hbox{\rlap{$\sqcap$}$\sqcup$}}
\def\qed{\ifmmode\squareforqed\else{\unskip\nobreak\hfil
\penalty50\hskip1em\null\nobreak\hfil\squareforqed
\parfillskip=0pt\finalhyphendemerits=0\endgraf}\fi}
\def\endenv{\ifmmode\;\else{\unskip\nobreak\hfil
\penalty50\hskip1em\null\nobreak\hfil\;
\parfillskip=0pt\finalhyphendemerits=0\endgraf}\fi}
\newenvironment{proof}{\noindent \textbf{{Proof.~} }}{\qed}
\def\l{\lambda}
\def\r{\rho}
\def\s{\sigma}
\def\ph{\varphi}
\def\ps{\psi}
\def\L{\Lambda}
\newcommand{\nc}{\newcommand}
\nc{\ox}{\otimes}
\nc{\bra}[1]{\langle#1|}
\nc{\ket}[1]{|#1\rangle}
\nc{\proj}[1]{| #1\rangle\!\langle #1 |}
\begin{document}

\title[Additivity and non-additivity]{Additivity and non-additivity of multipartite entanglement measures}

\author{Huangjun Zhu}
\address{Centre for Quantum Technologies, %
National University of Singapore, Singapore 117543, Singapore}
\address{NUS Graduate School for Integrative
Sciences and Engineering, Singapore 117597, Singapore}
\ead{zhuhuangjun@nus.edu.sg}
\author{Lin Chen}
\address{Centre for Quantum Technologies, %
National University of Singapore, Singapore 117543, Singapore}
\ead{cqtcl@nus.edu.sg (Corresponding~Author)}
\author{Masahito Hayashi}
\address{Graduate School of Information Sciences, Tohoku
University, Aoba-ku, Sendai, 980-8579, Japan} \address{Centre for Quantum Technologies, %
National University of Singapore, Singapore 117543, Singapore}
\ead{hayashi@math.is.tohoku.ac.jp}

\begin{abstract}
We study the additivity property of three multipartite entanglement
measures, i.e. the geometric measure of entanglement (GM), the
relative entropy of entanglement and the logarithmic global
robustness. First, we show the additivity of GM of multipartite
states with real and non-negative entries in the computational
basis. Many states of experimental and theoretical interests have
this property, e.g. Bell diagonal states, maximally correlated
generalized Bell diagonal states, generalized Dicke states, the
Smolin state, and the generalization of D\"{u}r's multipartite bound
entangled states. We also prove the additivity of other two measures
for some of these examples. Second, we show the non-additivity of GM
of all antisymmetric states of three or more parties, and provide a
unified explanation of the non-additivity of the three measures of
the antisymmetric projector states. In particular, we derive
analytical formulae of the three measures of one copy and two copies
of the antisymmetric projector states respectively. Third, we show,
with a statistical approach, that almost all multipartite pure
states with sufficiently large number of parties are nearly
maximally entangled with respect to GM and relative entropy of
entanglement. However, their GM is not strong additive; what's more
surprising, for generic pure states with real entries in the
computational basis, GM of one copy and two copies, respectively,
are almost equal. Hence, more states may be suitable for universal
quantum computation, if measurements can be performed on two copies
of the resource states. We also show that almost all multipartite
pure states cannot be produced reversibly with the combination
multipartite GHZ states under asymptotic LOCC, unless relative
entropy of entanglement is non-additive for generic multipartite
pure states.
\end{abstract}

\pacs{03.67.-a, 03.65.Ud, 03.67.Mn,}



\vspace{2pc} \noindent{\it Keywords}: multipartite entanglement,
additivity, relative entropy of entanglement, geometric measure of
entanglement, global robustness, symmetric state, antisymmetric
state

\maketitle


\tableofcontents

\section{\label{sec:Int} Introduction}

Quantum entanglement has attracted intensive attention due to its
intriguing properties and potential applications in quantum
information processing \cite{PV,HHH}, \cite{Ha} (Chapter 8). Some
geometrically motivated entanglement measures have been providing us
new insights on quantum entanglement, e.g. entanglement of formation
\cite{bdsw96}, relative entropy of entanglement (REE)
\cite{vprk97,vp98}, geometric measure of entanglement (GM)
\cite{Shi,wg03}, global robustness (GR) \cite{VT99,HN03}, and
squashed entanglement \cite{CW}. Besides providing a simple
geometric picture, they are closely related to some operationally
motivated entanglement measures, e.g. entanglement of distillation
\cite{bdsw96} and entanglement cost \cite{Hayden}. Their additivity
property for the bipartite case has been studied by many researchers
as a central issue in quantum information theory, because this
property is closely related to operational meanings
\cite{Has,MSW,Shor,rai99,aejpvm01,vw01,mi08,CW}. However, concerning
the multipartite setting, only the additivity of squashed
entanglement has been proved \cite{YHHHOS}, while the additivity
problem on other measures has largely remained open.

In this paper, we focus on the additivity property of  three main
entanglement measures in the multipartite case, i.e. REE, GM, and
logarithmic global robustness (LGR). These entanglement measures and
their additivity property are closely related to operational
concepts in the multipartite case as mentioned below. Our results
may improve the understanding on multipartite entanglement and
stimulate more research work on the three entanglement measures as
well as others, such as the tangle \cite{ckw00} and generalized
concurrence \cite{mkb05}.

 REE is a lower bound for entanglement of formation and an
upper bound for entanglement of distillation in the bipartite case.
It has a clear statistical meaning as the minimal error rate of
mistaking an entangled state for a closest separable state
\cite{vprk97,vp98}. It has also  been employed by Linden {\it et al}
\cite{lpsw99} to study the conditions for reversible state
transformation, and by Ac\'in {\it et al} \cite{AVC} to study the
structure of reversible entanglement generating sets ~\cite{bpr00}
in the tripartite scenario. In addition, Brand\~ao and Plenio
\cite{bp08} have shown that the asymptotic REE equals an asymptotic
smooth modification of LGR and a modified version of entanglement of
distillation and entanglement cost, which means that the asymptotic
REE quantifies the entanglement resources under asymptotic
non-entangling operations. In condensed matter physics, REE is also
useful for characterizing multipartite thermal correlations
\cite{Ved04a} and macroscopic entanglement, such as that in
high-temperature superconductors \cite{Ved04b}.

GM is closely related to the construction of optimal entanglement
witnesses \cite{wg03}, and discrimination of quantum states under
LOCC \cite{hmm06, hmm08, mmv07}.  GM of tripartite pure states is
closely related to the maximum output purity of the quantum channels
corresponding to these states \cite{wh02}. Recently, GM has been
utilized to determine the universality of resource states for
one-way quantum computation \cite{NDMB07,MPMNDB10}. It has also been
applied to show that most entangled states are too entangled to be
useful as computational resources \cite{GFE09}. Furthermore, the
connection between GM defined via the convex roof and a distance
like measure has also been pointed out \cite{skb10}. In condensed
matter physics, GM is useful for studying quantum many-body systems,
such as characterizing ground state properties and detecting phase
transitions \cite{odv08,orus08, ow09}.

GR is closely related to state discrimination under LOCC
\cite{hmm06, hmm08, mmv07} and entanglement quantification with
witness operators \cite{Bra2}. It is best suited to study the
survival of entanglement in  thermal states, and to determine the
noise thresholds in the generation of resource states for
measurement-based quantum computation \cite{MAVMM}.

On the other hand, the additivity property of the three measures
REE, GM and LGR greatly affect the utility of multipartite states.
For example, in state discrimination under LOCC \cite{hmm06,hmm08},
the additivity property of these measures may affect the advantage
offered by joint measurements on multiple copies of input states
over separate measurements. The additivity property of GM of generic
multipartite states is closely related to their universality as
resource states for one-way quantum computation, as we shall see in
\sref{sec:implication}.

The additivity property of the three measures REE, GM and LGR is
also  closely related to the calculation of their asymptotic or
regularized entanglement measures, which are the asymptotic limits
of the regularized quantities with the $n$-copy state. These
asymptotic measures  will be referred to as asymptotic GM, REE and
LGR, and are abbreviated to AGM, AREE and ALGR respectively. They
are useful in the study of classical capacity of quantum
multi-terminal channels \cite{mmv07}. The AREE can be used as an
invariant when we build the minimal reversible entanglement
generating set (MREGS) under asymptotic LOCC. The MREGS is a finite
set of pure entangled states from which all pure entangled states
can be produced reversibly in the asymptotic sense, which is an
essential open problem in quantum information theory
\cite{bpr00,AVC}. The AREE also determines the rate of state
transformation under asymptotic non-entangling operations
\cite{bp08}. In the bipartite case, the AREE provides a lower bound
for entanglement cost and an upper bound for entanglement of
distillation. So it is essential to compute the regularized
entanglement measures. However, the problem is generally very
difficult. One main approach for computing these asymptotic measures
is to prove their additivity, which is another focus of the present
paper. In this case, the asymptotic measures  equal to the
respective one-shot measures.

Our main approach is the following. Under some group theoretical
conditions, Hayashi {\it et al} \cite{hmm08} showed a relation among
REE, LGR and GM. Due to this relation, we can treat the additivity
problem of REE and LGR from that of GM in this special case. Hence
we can concentrate on the additivity problem of GM.

First, we derive a novel and general  additivity theorem for GM of
multipartite states with real and non-negative entries in the
computational basis. Applying this theorem, we show the additivity
of GM of  many multipartite states of either practical or
theoretical interests, such as (1) two-qubit Bell diagonal states;
(2) maximally correlated generalized Bell diagonal states, which is
closely related to local copying \cite{Owari}; (3) isotropic states,
which is closely related to depolarization channel \cite{hh99}; (4)
generalized Dicke states \cite{dicke54}, which is useful for quantum
communication and quantum networking, and can already be realized
using current technologies \cite{kstsw07,pct09,hcr09,iil10}; (5) the
Smolin state \cite{Smolin01}, which is useful for remote information
concentration \cite{mv01}, super activation \cite{sst03} and quantum
secret sharing \cite{ah06} etc; and (6) D\"ur's multipartite
entangled states, which include bound entangled states that can
violate the Bell inequality \cite{dur01}. By means of  the relation
among the three measures GM, REE and LGR, we also show the
additivity of REE of these examples, and the additivity of LGR of
the generalized Dicke states and the Smolin state. As a direct
application, we obtain  AGM and AREE of the above-mentioned
examples, and ALGR of the generalized Dicke states and the Smolin
state.

Our approach is also able to provide a lower bound for AREE and ALGR
for generic multipartite states with non-negative entries in the
computational basis, such as isotropic states \cite{hh99}, mixtures
of generalized Dicke states. In the bipartite scenario, our lower
bound for AREE is also a lower bound for entanglement cost. For
non-negative tripartite pure states, the additivity of GM implies
the multiplicativity of the maximum output purity of the quantum
channels related to these states according to the Werner-Holevo
recipe \cite{wh02}.

Second, we show the non-additivity of GM of antisymmetric states
shared over three or more parties and many bipartite antisymmetric
states. We also quantify how additivity of GM is violated in the
case of antisymmetric projector states, which include antisymmetric
basis states and antisymmetric Werner states as special examples,
and treat the same problem for REE and LGR.   For the antisymmetric
projector states, while the three one-shot entanglement measures are
generally non-additive, we obtain a relation among AREE, AGM and
ALGR. Generalized antisymmetric states \cite{bra03} are also treated
as further counterexamples to the additivity of GM.

Third, we show, with a statistical approach, that almost all
multipartite pure states are nearly maximally entangled with respect
to GM and REE. However their GM is not strong additive; what's more
surprising, for generic pure states with real entries in the
computational basis, GM of one copy and two copies, respectively,
are almost equal. Our discovery has a great implication for the
universality of resource states for one-way quantum computation, and
for asymptotic state transformation. As a twist to the assertion of
Gross {\it et al} \cite{GFE09} that most quantum states are too
entangled to be useful as computational resources, we show that more
states may be suitable for universal quantum computation, if
measurements can be performed on two copies of the resource states.
In addition, we show that almost all multipartite pure states cannot
be prepared reversibly with multipartite GHZ states (with various
numbers of parties) under LOCC even in the asymptotic sense, unless
REE is non-additive for generic multipartite pure states.

For the convenience of the readers, we  summarize the main results
on GM, REE and LGR of the states studied in this paper in
\tref{tab:Add} and \tref{tab:NonAdd}. More details can be found in
the relevant sections of the main text.

\setlength{\tabcolsep}{0.4ex}
\begin{table}
    \caption{\label{tab:Add}
{\bf Additive cases}: GM, REE and LGR of multipartite entangled
states. All states listed here have additive GM and REE, generalized
Dicke states and the Smolin state also have additive LGR, hence the
asymptotic measures are equal to the one-short measures in these
cases. REE  of the Bell diagonal states was calculated in
\cite{vprk97, rai99}. REE   of the maximally correlated generalized
Bell diagonal states and isotropic states as well as their
additivity were obtained in \cite{rai99}. GM of the generalized
Dicke states was calculated in \cite{wg03}, REE and LGR of the
generalized Dicke states were calculated in \cite{weg04, wei08,
hmm08}. REE of the Smolin state was calculated in \cite{mv01,wag04},
REE of the D\"ur's multipartite entangled states was calculated in
\cite{wag04, wei08}.} \centering
  \begin{tabular}{ c c c c c c c }
    \br
    states             & symbol                      & GM        & REE & LGR  \\
    \hline
    Bell diagonal states &   ${\rho_{\mathrm{BD}}(p) \atop p_0\geq\frac{1}{2}}$ & $1-\log(p_0+p_1)$& $1-H(p_0,1-p_0)$&   $\log(2 p_0)$  \\
    [1ex]\hline
\begin{tabular}{c}
  maximally correlated  \\
 generalized  Bell \\diagonal states
\end{tabular}
 & $\rho_{\mathrm{MCB}}(p)$    & $\log d$    & $\log d-H(p)$   &   --\\
[0.5ex]\hline
   isotropic states    & ${\rho_{\mathrm{I},\lambda}\atop \frac{1}{d}\leq \lambda\leq 1}$& $\log\frac{d(d+1)}{\lambda d+1}$ &
   \begin{math}\begin{array}{l}
     \log d+\lambda\log \lambda\\
+(1-\lambda)\log\frac{1-\lambda}{d-1}
   \end{array}\end{math}
   &  $\log(d\lambda)$ \\
  [0.5ex]\hline
\begin{tabular}{c}
    generalized  \\
  Dicke states \\
\end{tabular}
 & $|N,\vec{k}\rangle $      &
$\log\Big[\frac{\prod_{j=0}^{d-1}
k_j!}{N!}\prod_{j=0}^{d-1}\bigl(\frac{N}{k_j}\bigr)^{k_j}\Big]$       & GM  & GM   \\
                           [0.5ex]\hline
    Smolin state       & $\rho_{ABCD} $                 & 3          & 1 & 1  \\
[0.5ex]\hline
\begin{tabular}{c}
 D\"ur's multipartite \\
 entangled states \\
\end{tabular}      & ${\rho_N(x)\atop N\geq4}$& $\begin{array}{cc}
                            \log\frac{2N}{1-x} &0\leq x\leq \frac{1}{N+1}\\
                            \log\frac{2}{x}& \frac{1}{N+1}\leq x\leq 1
                          \end{array}$                       & $x$ & -- \\
    \br
  \end{tabular}
\end{table}

 \setlength{\tabcolsep}{1ex}
\begin{table}
\caption{\label{tab:NonAdd}
{\bf Non-additive cases}:  GM, REE, and LGR of some antisymmetric
and generalized antisymmetric
  states.  All states listed  satisfy $E_{\mathrm{R}}(\rho)=R_{\mathrm{L}}(\rho)=G(\rho)-S(\rho)$, except two copies of generalized antisymmetric
  states, where it is not known. When $N=2$, the antisymmetric projector state reduces to
  the antisymmetric Werner state.
  GM of single copy of the antisymmetric basis state and generalized antisymmetric  state
  was calculated in \cite{bra03, hmm08}. REE and LGR of single copy of the antisymmetric basis state and generalized antisymmetric state
  were calculated in \cite{weg04, wei08, hmm08}.  }
  \centering
  \begin{tabular}{c c c c c c c }
    \br
    states                  & symbol                      & GM             & REE        & LGR
    \\ [1ex]
\hline
  \multirow{2}{26ex}{antisymmetric basis state (Slater determinant state)} &   $|\psi_{N-}\rangle$         & $\log N! $      &  $\log N!$    & $\log N!$   \\
    &   $|\psi_{N-}\rangle^{\otimes 2}$         & $N\log N $      &  $N\log N$    & $N\log N$
    \\ [1ex]\hline

  \multirow{3}{20ex}{antisymmetric projector state}  &   $\rho_{d,N}$             & $\log\frac{d!}{(d-N)!}  $    &  $\log N!$           &$\log N! $
  \\[0.5ex]
                  &   $\rho_{d,N}^{\otimes 2}$            & $\log\frac{ d^Nd!}{N!(d-N)!}$  &  $\log\frac{ d^NN!(d-N)!}{d!}$ & $\log\frac{ d^NN!(d-N)!}{d!}$        \\[0.5ex]
                   &   ${\rho_{d_1,N}\otimes \rho_{d_2,N} \atop (N\leq d_1\leq d_2) } $          & $\log\frac{ d_1^Nd_2!}{N!(d_2-N)!}$  &  $\log\frac{ d_1^NN!(d_1-N)!}{d_1!}$ & $\log\frac{ d_1^NN!(d_1-N)!}{d_1!}$        \\
                  [2ex]\hline
 \multirow{2}{20ex}{ generalized
   antisymmetric state}
 & $|\psi_{d,p,d^p}\rangle$ &  $\log \bigl[(d^p) !\bigr]$ &  $\log \bigl[(d^p)!\bigr]$& $\log \bigl[(d^p)!\bigr]$      \\
            & $|\psi_{d,p,d^p}\rangle^{\otimes 2}$ &  $d^p\log d^p$ & --
            &--
            \\[0.8ex]
    \br
  \end{tabular}

\end{table}

The paper is organized as follows. \Sref{sec:pre} is devoted to
reviewing the preliminary knowledge and terminology, and to showing
 the relations among the three measures REE, GM and LGR. In
\sref{sec:additivity}, we prove a general additivity theorem for GM
of multipartite states with non-negative entries in the
computational basis, and apply it to many multipartite states, e.g.
Bell diagonal states, maximally correlated generalized Bell diagonal
states,  isotropic states, generalized Dicke states, mixtures of
Dicke states, the Smolin state, and D\"{u}r's multipartite entangled
states. Also, we treat the additivity problem of REE and LGR  of
these examples, and discuss the implications of these results for
state transformation. In \sref{sec:NonAdditivity}, we focus on the
antisymmetric subspace, and show the non-additivity of GM of states
in this subspace when there are  three or more parties. We also
establish a simple relation among the three measures for the tensor
product of antisymmetric projector states, and compute GM, REE and
LGR for one copy and two copies of antisymmetric projector states,
respectively. Generalized antisymmetric states are also treated as
further counterexamples to the additivity of GM. In
\sref{sec:NonAddG}, we show that  GM is not strong additive for
almost all multipartite pure states, and that it is non-additive for
almost all multipartite pure states with real entries in the
computational basis. We then discuss the implications  of these
results for  the universality of resource states in one-way quantum
computation and for asymptotic state transformation. We conclude
with a summary and some open problems.

\section{\label{sec:pre} Preliminary knowledge and terminology}

In this section, we  recall the definitions and basic properties of
the three main multipartite  entanglement measures, that is, the
\emph{relative entropy of entanglement}, the \emph{geometric measure
of entanglement} and the \emph{global robustness of entanglement},
and introduce the additivity problem on these entanglement measures.
We also present a few known results concerning the relations among
these measures, which will play an important role later. The impact
of permutation symmetry on GM and the connection between GM of
tripartite pure states and the maximum output purity of quantum
channels are also discussed briefly.

\subsection{Geometric measure, relative entropy and global robustness of entanglement}

Consider an $N$-partite  state $\r$ shared over the parties $A_1,
\ldots, A_N$ with joint Hilbert space $\ox^{N}_{j=1} \mathcal{H}_j$.
REE  measures the minimum distance in terms of relative entropy
between the given state $\r$ and the set of separable states, and is
defined as \cite{vp98}
\begin{equation}\label{eq:REEdef}
  E_{\mathrm{R}}(\r):=\min_{\sigma\in \mathrm{SEP}}
S(\rho|\!|\sigma),
\end{equation}
where $S(\rho|\!|\sigma)=\tr \rho( \log \rho- \log\sigma)$ is the
quantum relative entropy, and the logarithm has base 2 throughout
this paper. Here ``SEP'' denotes the set of fully separable states,
which are of the form $\s = \sum_j \s^1_j \ox \cdots \ox \s^N_j$,
such that $\s^k_j$ is a single-particle state of the $k$th party.
For a pure state $\rho=|\psi\rangle\langle\psi|$,
$E_{\mathrm{R}}(|\psi\rangle)$ is used to denote
$E_{\mathrm{R}}(\rho)$ through this paper, similarly for other
entanglement measures to be introduced. Any state $\sigma$
minimizing \eref{eq:REEdef} is  a \emph{closest separable state} of
$\rho$. As its definition involves the minimization over all
separable states, REE is known only for a few examples, such as
bipartite pure states \cite{vprk97,vp98,pv01}, Bell diagonal states
\cite{vprk97, rai99}, some two-qubit states \cite{mi08}, Werner
states \cite{aejpvm01, vw01, rai01}, maximally correlated states,
isotropic states \cite{rai99}, generalized Dicke states \cite{weg04,
wei08, hmm08}, antisymmetric basis states \cite{weg04,hmm08}, some
graph states \cite{mmv07}, the Smolin state, and D\"ur's
multipartite entangled states \cite{wag04, wei08}. A numeric method
for computing REE of bipartite states has been proposed in
\cite{vp98}.

REE with respect to the set of states with positive partial
transpose (PPT) $E_{\mathrm{R},\mathrm{PPT}}$, which is obtained by
replacing the set of separable states in \eref{eq:REEdef} with the
set of PPT states, has also received much attention
\cite{rai99,aejpvm01,AMVW02}. However, in this paper, we shall
follow the definition in \eref{eq:REEdef}.

GM measures  the closest distance in terms of overlap between a
given state and the set of separable states, or equivalently, the
set of pure product states, and is defined as \cite{wg03}
\begin{eqnarray}
  \label{eq:EE}
  \Lambda^2(\rho)&:=&
\max_{\sigma\in \mathrm{SEP}}
\mathrm{tr}(\rho\sigma)
  =
\max_{|\ph\rangle\in \mathrm{PRO}}
  \langle\ph|\rho|\ph\rangle,\\
    \label{eq:gm}
  G(\rho) &:=& -2~\log \Lambda(\rho).
\end{eqnarray}
Here ``PRO" denotes the set of fully product pure states in the
Hilbert space $\ox^{N}_{j=1} \mathcal{H}_j$. Any pure product state
maximizing \eref{eq:EE} is  a  \emph{closest product state} of
$\rho$. It should be emphasized that, for mixed states, the
 GM defined in \eref{eq:gm} is not an entanglement measure proper, and there are alternative definitions of GM
 through the convex roof construction \cite{wg03}. However, GM of $\rho$ defined in \eref{eq:gm} is closely related to GM of
 the purification of $\rho$ \cite{jung08}, and also to REE and LGR of $\rho$, as we shall see later. Meanwhile, this definition is  useful in
 the construction of optimal entanglement witnesses \cite{wg03}, and in the study of state discrimination under LOCC \cite{hmm06, hmm08}.
 Thus we shall follow the definition in \eref{eq:gm} in this paper. GM is  known  only for a few  examples too, such as bipartite pure states, GHZ
 type states, generalized Dicke states \cite{wg03}, antisymmetric basis
states \cite{bra03, hmm08},  pure symmetric three-qubit states
\cite{cxz10, twp09,tkk09}, some other pure three-qubit states
\cite{wg03,tpt08, cxz10}, and some graph states \cite{mmv07}.
Several numerical methods for computing GM of multipartite states
have been proposed in \cite{EHGC04, SB07}.

Different from the above two  entanglement measures, GR \cite{VT99,
HN03} measures how sensitive an entangled state is to the mixture of
noise, and is defined as follows,
\begin{eqnarray}
  \label{eq:GR} R_{\mathrm{g}}(\rho):=\min\Bigl\{t: t\geq0, \exists \mbox{ a state }
  \Delta, \frac{\rho+t\Delta}{1+t}\in \mathrm{SEP}\Bigr\}.
\end{eqnarray}
The logarithmic global robustness of entanglement (LGR) is defined
as
\begin{eqnarray}
  \label{eq:LGR} R_{\mathrm{L}}(\rho):=\log(1+R_{\mathrm{g}}).
\end{eqnarray}
LGR is known for even fewer examples, such as bipartite pure states
\cite{VT99, HN03}, generalized Dicke states, antisymmetric basis
states \cite{wei08, hmm08}, some graph states \cite{mmv07}. A
numerical method  for computing  LGR has  been proposed in
\cite{NOP1,NOP2}.

\subsection{Additivity problem on multipartite entanglement
measures} In quantum information processing, it is generally more
efficient to process a family of quantum states together rather than
process each one individually. In this case, entanglement measures
can still serve as invariants under reversible LOCC transformation,
provided that we consider the family of states as a whole. A
fundamental problem in entanglement theory is  whether the
entanglement of the tensor product of  states is the sum of that of
each individual. First we need to make it clear what the
entanglement of the tensor product of states means. Take two states
as an example, let $\rho$ be an $N$-partite sate shared over the
parties $A_1^1,\ldots, A_N^1$, and $\sigma$ be another $N$-partite
state shared over the parties $A_1^2,\ldots,A_N^2$, where we have
added superscripts to the names of the parties to distinguish the
two states. Now there are $2N$ parties involved in the tensor
product state $\rho\otimes \sigma$, however, in most scenarios that
we are concerned, the pair of parties $A_j^1, A_j^2$ for each
$j=1,\ldots,N $ are in the same lab, and can be taken as a single
party $A_j$. In this sense, $\rho\otimes \sigma$ can be seen as an
$N$-partite state shared over the parties $A_1,\ldots, A_N$. The
definition of any entanglement measure, such as GM, REE and LGR of
the tensor product state $\rho\otimes\sigma$ follows this convention
throughout this paper; similarly for the tensor product of more than
two states, except when stated otherwise.

A particularly important case is the entanglement of the tensor
product of multiple copies of the same state. In the limit of large
number of copies, we obtain the regularized or asymptotic
entanglement measure, which reads
\begin{eqnarray}
  E^\infty(\rho):=\lim_{n\rightarrow\infty} \frac{1}{n}E(\rho^{\otimes
  n}),
\end{eqnarray}
where $E$ is the entanglement measure under consideration. When  $E$
is taken as $E_{\mathrm{R}}, G$ and $R_{\mathrm{L}}$, respectively,
the resulting regularized measures are referred to as asymptotic REE
(AREE) $E^{\infty}_{\mathrm{R}}$, asymptotic GM (AGM) $G^{\infty}$,
and asymptotic LGR (ALGR) $R^{\infty}_{\mathrm{L}}$, respectively.

The entanglement $E$ of an $N$-partite state $\rho$ is called
\emph{additive} if $E^\infty(\rho)=E(\rho)$, and \emph{strong
additive} if the equality  $E(\rho\otimes \sigma)=E(\rho)+E(\sigma)$
holds for any $N$-partite state $\sigma$. Obviously, strong
additivity implies additivity. An entanglement measure itself is
called (strong) additive if it is (strong) additive for any state.
Similarly, the entanglement of the two states $\rho,\sigma$ is
called additive if the  equality $E(\rho\otimes
\sigma)=E(\rho)+E(\sigma)$ holds.

Historically, both GM and REE had been conjectured to be additive,
until counterexamples were found. The first counterexample to the
additivity of REE is the antisymmetric Werner state found by
Vollbrecht and Werner \cite{vw01}. The first counterexample to the
additivity of GM is the tripartite antisymmetric basis state found
by Werner and Holevo \cite{wh02}. Coincidentally, the two
counterexamples are both antisymmetric states, and the tripartite
antisymmetric basis state is exactly a purification of the
two-qutrit antisymmetric Werner state. We shall reveal the reason
behind  this coincidence in \sref{sec:NonAdditivity}.

For bipartite pure states,  REE is equal to the Von Neumann entropy
of each reduced density matrix \cite{vprk97, vp98, pv01};  GM is
equal to the logarithm of the inverse of the largest eigenvalue of
each reduced density matrix \cite{wg03}; and LGR is equal to one
half the logarithm of the trace of the positive square root of each
reduced density matrix \cite{VT99,HN03}; thus REE, GM and LGR are
all additive. GM and REE are also additive for any multipartite pure
states with generalized Schmidt decomposition, such as the GHZ
state. More generally, REE (GM, LGR) of a multipartite pure state is
additive if it is equal to the same measure under some bipartite
cut. For example, some graph states have additive REE, GM and LGR
for this reason \cite{mmv07}. In general, it is very difficult to
prove the additivity or non-additivity of  GM, REE and LGR of a
given state, or to compute AGM, AREE and ALGR. The additivity of REE
is known to hold  for a few other examples, such as  maximally
correlated states, isotropic states \cite{rai99}, two-qubit Werner
states \cite{aejpvm01, rai01}, and some other two-qubit states
\cite{rai99, mi08}. Little is known about the additivity property of
GM and LGR.

\subsection{Relations among the three measures}

There is a simple inequality among the three measures  REE, GM and
LGR \cite{hmm06, wei08},
\begin{equation}
  R_{\mathrm{L}}(\rho)\geq E_{\mathrm{R}}(\rho)\geq G(\rho)-S(\rho), \label{eq:REE-GM}
\end{equation}
where $S(\rho)$ is the von Neumann entropy. So the inequality
$R_{\mathrm{L}}(\rho)\geq E_{\mathrm{R}}(\rho)\geq G(\rho)$ holds
when $\rho$ is a pure state. The same is true if the three measures
are replaced by their respective regularized measures. This
inequality and its equality condition are crucial in translating our
results on  GM to that on REE and LGR in the later  sections.

A sufficient condition for the equality is given as lemma 9
in Appendix  C of \cite{hmm08}. For convenience, we reproduce it in the following proposition, 
\begin{proposition}
\label{ProH} Assume that a projector state $\frac{P}{\mathrm{tr}P}$
satisfies the following. There exist a compact group $H$, its
unitary representation $U$, and a product state $\ket{\varphi_N}$
such that (1) $U(g)$ is a local unitary for all $g \in H$. (2)
$U(g)PU(g)^\dagger= P$. (3) The state $|\varphi_N\rangle$ is one of
the closest product states of $P$. (4) $\int_H
U(g)\proj{\varphi_N}U(g)^\dagger \mu( \mathrm{d} g) \ge
\frac{\langle\varphi_N|P|\varphi_N\rangle}{\mathrm{tr}P} P$, where
$\mu$ is the invariant probability measure on $H$. Then,
\begin{eqnarray}
  R_{\mathrm{L}}\Bigl(\frac{P}{\mathrm{tr}P}\Bigl)= E_{\mathrm{R}}\Bigl(\frac{P}{\mathrm{tr}P}\Bigl)= G\Bigl(\frac{P}{\mathrm{tr}P}\Bigl)-\log\mathrm{tr}P. \label{eq:H1}
\end{eqnarray}
Under condition (1), conditions (2)-(4) are satisfied if (5) the
range of $P$ is an irreducible representation of $H$ whose
multiplicity is one in the representation $U$.
\end{proposition}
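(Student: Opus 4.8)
The plan is to establish the chain of equalities in \eref{eq:H1} by sandwiching all three quantities between two bounds: the general inequality \eref{eq:REE-GM}, which gives $R_{\mathrm{L}}(P/\tr P)\geq E_{\mathrm{R}}(P/\tr P)\geq G(P/\tr P)-S(P/\tr P)$, and noting $S(P/\tr P)=\log\tr P$ since $P/\tr P$ is maximally mixed on the range of $P$. Thus it suffices to prove the reverse inequality $R_{\mathrm{L}}(P/\tr P)\leq G(P/\tr P)-\log\tr P$, i.e. to exhibit a separable decomposition witnessing that small value of the global robustness. Writing $\Lambda^2=\langle\varphi_N|P|\varphi_N\rangle/\tr P$ for the maximal squared overlap (by condition (3), $\ket{\varphi_N}$ is a closest product state), the target is $R_{\mathrm{g}}(P/\tr P)\leq \Lambda^{-2}\cdot\frac{1}{\tr P}\cdot\tr P-1$; more precisely $1+R_{\mathrm{g}}=2^{G-\log\tr P}=\Lambda^{-2}/\tr P\cdot\tr P$... let me restate: $G(P/\tr P)-\log\tr P = -\log(\Lambda^2\tr P)$, so we need $1+R_{\mathrm{g}}(P/\tr P)\leq (\Lambda^2\tr P)^{-1}$.

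The key construction uses the group averaging in condition (4). Consider the state $\omega:=\int_H U(g)\proj{\varphi_N}U(g)^\dagger\,\mu(\mathrm{d}g)$. By condition (4), $\omega\geq \Lambda^2 P$ (as operators). Hence $\omega-\Lambda^2 P\geq 0$, and since $\tr\omega=1$ while $\tr(\Lambda^2 P)=\Lambda^2\tr P$, the operator $\Delta:=\frac{\omega-\Lambda^2 P}{1-\Lambda^2\tr P}$ is a legitimate density matrix (assuming $\Lambda^2\tr P<1$, which holds unless $P$ is rank one, a trivial case). Rearranging, $\Lambda^2 P + (1-\Lambda^2\tr P)\Delta = \omega$, so
\begin{equation}
\frac{P/\tr P + t\Delta}{1+t}=\frac{\omega}{\Lambda^2\tr P}\cdot\frac{1}{1+t}\quad\text{with}\quad t=\frac{1-\Lambda^2\tr P}{\Lambda^2\tr P},
\end{equation}
and $1+t=(\Lambda^2\tr P)^{-1}$, so the mixture equals $\omega$ exactly. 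Now $\omega$ is separable: it is an average of pure product states $U(g)\proj{\varphi_N}U(g)^\dagger$, each of which is a product state because $U(g)$ is local by condition (1). Therefore $R_{\mathrm{g}}(P/\tr P)\leq t$, giving $1+R_{\mathrm{g}}\leq(\Lambda^2\tr P)^{-1}$, i.e. $R_{\mathrm{L}}(P/\tr P)\leq G(P/\tr P)-\log\tr P$. Combined with \eref{eq:REE-GM} this forces all three to coincide.

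For the last sentence of the proposition, I would show that condition (5) implies (2)--(4) given (1). Condition (2) is immediate: if $\mathrm{range}(P)$ is $H$-invariant, then $U(g)PU(g)^\dagger=P$. For (3) and (4), the point is that by Schur's lemma, $\int_H U(g)XU(g)^\dagger\,\mu(\mathrm{d}g)$, for any $X$, is block-diagonal with respect to the isotypic decomposition of $U$, and on the block corresponding to $\mathrm{range}(P)$ (which appears with multiplicity one) it acts as a scalar times $P$, namely $\frac{\tr(PX)}{\tr P}P$. Applying this with $X=\proj{\varphi_N}$ gives $\int_H U(g)\proj{\varphi_N}U(g)^\dagger\,\mu(\mathrm{d}g) = \frac{\langle\varphi_N|P|\varphi_N\rangle}{\tr P}P + (\text{non-negative terms in other blocks})\geq \frac{\langle\varphi_N|P|\varphi_N\rangle}{\tr P}P$, which is exactly (4). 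Condition (3) follows because the overlap $\langle\varphi_N|P|\varphi_N\rangle$ is independent of which product state in the $H$-orbit structure one picks among those achieving the maximum — more carefully, one uses that $\Lambda^2(P/\tr P)$ is attained and the averaging argument shows any product state has overlap at most $\frac{1}{\tr P}\max$...

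The main obstacle I anticipate is the degenerate/boundary case $\Lambda^2\tr P = 1$ (equivalently $\tr P$ rank one, or $P$ proportional to a product projector), where $\Delta$ is undefined and $t=0$; this must be handled separately but is trivial since then $P/\tr P$ is itself a product state with all three measures zero. A secondary subtlety is justifying that $\ket{\varphi_N}$ can be taken as in condition (3) when invoking (5) — one needs that the maximal overlap over product states is achieved and that Schur's lemma delivers the bound uniformly; this is where the precise statement of "multiplicity one" does the work, and I would cite or reprove the relevant consequence of Schur's lemma for compact groups. The rest is bookkeeping with the definitions of $R_{\mathrm{g}}$, $G$, and $S$.
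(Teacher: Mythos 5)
Your argument is correct and is essentially the standard proof (the one in Appendix C of \cite{hmm08}, which this paper only cites rather than reproduces): the lower bound comes from \eref{eq:REE-GM} with $S(P/\mathrm{tr}P)=\log\mathrm{tr}P$, and the matching upper bound on $R_{\mathrm{L}}$ comes from exhibiting the twirled product state $\omega=\int_H U(g)\proj{\varphi_N}U(g)^\dagger\mu(\mathrm{d}g)$ as the separable mixture, with condition (4) guaranteeing positivity of the noise term and Schur's lemma (multiplicity one killing the off-diagonal blocks of the averaged operator) yielding (2)--(4) from (5). The only blemish is your description of the boundary case $\langle\varphi_N|P|\varphi_N\rangle=1$: this does not force $P$ to be rank one, but condition (4) then forces $\omega=P/\mathrm{tr}P$, so the state is itself separable and all three quantities vanish, which closes that case.
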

For example, generalized Dicke states, antisymmetric basis states
\cite{hmm08,weg04}, and some graph states \cite{mmv07,hmm08} satisfy
the conditions (1)-(4), so they satisfy \eref{eq:H1}. In this case,
if GM is additive, then both LGR and REE are additive, which follows
from proposition~\ref{ProH2} below. If in addition condition (5) is
satisfied, then LGR, REE and GM  are simultaneously additive or
simultaneously non-additive, which follows from
proposition~\ref{ProH3} below.

\begin{proposition}\label{ProH2}
Assume that two multipartite states $\rho, \sigma$ satisfy
$R_{\mathrm{L}}(\rho)=E_{\mathrm{R}}(\rho)=G(\rho)-S(\rho)$,
$R_{\mathrm{L}}(\sigma)=E_{\mathrm{R}}(\sigma)=G(\sigma)-S(\sigma)$,
and $G(\rho\otimes\sigma)=G(\rho)+ G(\sigma)$, then the following
relations hold,
\begin{eqnarray}\label{eq:H2}
  R_{\mathrm{L}}(\rho \otimes \sigma) &= R_{\mathrm{L}}(\rho)+R_{\mathrm{L}}(\sigma), \qquad
  E_{\mathrm{R}}(\rho \otimes \sigma) &= E_{\mathrm{R}}(\rho)+E_{\mathrm{R}}(\sigma).
\end{eqnarray}
\end{proposition}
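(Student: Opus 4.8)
The plan is to sandwich $R_{\mathrm{L}}(\rho\otimes\sigma)$ between two quantities that both equal $R_{\mathrm{L}}(\rho)+R_{\mathrm{L}}(\sigma)$, and likewise for $E_{\mathrm{R}}$. First I would record the two-sided bounds that are always available. For the upper bounds, both LGR and REE are subadditive under tensor product for the multipartite cut described in the previous subsection: given optimal decompositions witnessing $R_{\mathrm{g}}(\rho)$ and $R_{\mathrm{g}}(\sigma)$, their tensor product witnesses $R_{\mathrm{g}}(\rho\otimes\sigma)\le (1+R_{\mathrm{g}}(\rho))(1+R_{\mathrm{g}}(\sigma))-1$ since the product of separable states across the common cut is separable, giving $R_{\mathrm{L}}(\rho\otimes\sigma)\le R_{\mathrm{L}}(\rho)+R_{\mathrm{L}}(\sigma)$; similarly, tensoring closest separable states gives $E_{\mathrm{R}}(\rho\otimes\sigma)\le E_{\mathrm{R}}(\rho)+E_{\mathrm{R}}(\sigma)$ by additivity of the relative entropy on product states. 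For the lower bounds I would invoke \eref{eq:REE-GM} applied to $\rho\otimes\sigma$:
\begin{equation}
R_{\mathrm{L}}(\rho\otimes\sigma)\ge E_{\mathrm{R}}(\rho\otimes\sigma)\ge G(\rho\otimes\sigma)-S(\rho\otimes\sigma).
\end{equation}

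Next I would feed in the three hypotheses. Since the von Neumann entropy is additive, $S(\rho\otimes\sigma)=S(\rho)+S(\sigma)$, and by assumption $G(\rho\otimes\sigma)=G(\rho)+G(\sigma)$, so the rightmost term equals $\bigl(G(\rho)-S(\rho)\bigr)+\bigl(G(\sigma)-S(\sigma)\bigr)$. Now the equality hypotheses on the individual states say precisely that $G(\rho)-S(\rho)=R_{\mathrm{L}}(\rho)=E_{\mathrm{R}}(\rho)$ and similarly for $\sigma$, so this lower bound is exactly $R_{\mathrm{L}}(\rho)+R_{\mathrm{L}}(\sigma)=E_{\mathrm{R}}(\rho)+E_{\mathrm{R}}(\sigma)$. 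Combining with the upper bounds from the first paragraph forces
\begin{equation}
R_{\mathrm{L}}(\rho)+R_{\mathrm{L}}(\sigma)\le G(\rho\otimes\sigma)-S(\rho\otimes\sigma)\le E_{\mathrm{R}}(\rho\otimes\sigma)\le R_{\mathrm{L}}(\rho\otimes\sigma)\le R_{\mathrm{L}}(\rho)+R_{\mathrm{L}}(\sigma),
\end{equation}
so every inequality in the chain is an equality, yielding both claimed identities simultaneously.

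The only genuinely nontrivial input is the subadditivity of LGR under the multipartite (grouped) cut, so I would state and justify it carefully — it follows because if $(\rho+t\Delta)/(1+t)$ and $(\sigma+s\Delta')/(1+s)$ are fully separable then their tensor product is fully separable with respect to the parties $A_j=A_j^1A_j^2$, and expanding $\rho\otimes\sigma+\bigl[(1+t)(1+s)-1\bigr]\Delta''$ identifies an admissible noise state $\Delta''$. I expect this bookkeeping step (and checking that the relative-entropy subadditivity genuinely uses only product separable states, which it does) to be the main place where care is needed; everything else is just chaining \eref{eq:REE-GM}, additivity of $S$, and the two per-state equality hypotheses. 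No regularization or group-theoretic machinery is required for this proposition — it is purely a squeeze argument.
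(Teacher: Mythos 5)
Your proof is correct and follows essentially the same route as the paper: a squeeze argument chaining the subadditivity of LGR, the inequality \eref{eq:REE-GM} applied to $\rho\otimes\sigma$, the additivity of the von Neumann entropy, and the hypotheses. The paper's one-line proof takes the subadditivity of LGR for granted, whereas you spell it out; your extra invocation of REE subadditivity is harmless but not needed, since the squeeze through $R_{\mathrm{L}}$ already pins down $E_{\mathrm{R}}(\rho\otimes\sigma)$.
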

\begin{proof}
\begin{eqnarray}
&&R_{\mathrm{L}}(\rho)+R_{\mathrm{L}}(\sigma)\geq
R_{\mathrm{L}}(\rho\otimes\sigma)\geq E_{\mathrm{R}}(\rho\otimes
\sigma)\geq
G(\rho\otimes\sigma)-S(\rho\otimes\sigma),\nonumber\\
&&
=G(\rho)+G(\sigma)-S(\rho)-S(\sigma)=R_{\mathrm{L}}(\rho)+R_{\mathrm{L}}(\sigma).
\end{eqnarray}
\end{proof}

\begin{proposition}\label{ProH3}
Assume that $n$ projector states $\frac{P_j}{\mathrm{tr}P_j}$ for
$j=1,\ldots,n$ satisfy conditions (1) and (5) of
proposition~\ref{ProH}, then
\begin{eqnarray}
  &&R_{\mathrm{L}}\Biggl(\bigotimes_{j=1}^n\frac{P_j}{\mathrm{tr}P_j}\Biggl)=
  E_{\mathrm{R}}\Biggl(\bigotimes_{j=1}^n\frac{P_j}{\mathrm{tr}P_j}\Biggl)
  = G\Biggl(\bigotimes_{j=1}^n\frac{P_j}{\mathrm{tr}P_j}\Biggl)-\sum_{j=1}^n\log\mathrm{tr}P_j. \label{eq:H3}
\end{eqnarray}
\end{proposition}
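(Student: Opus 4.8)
The plan is to deduce Proposition~\ref{ProH3} by combining Proposition~\ref{ProH} with (an iterated form of) Proposition~\ref{ProH2}, the only genuinely new input being the multiplicativity of GM for tensor products of projector states whose ranges are irreducible, multiplicity-free representations of the respective groups. First I would record what Proposition~\ref{ProH} already gives us: since each $P_j/\tr P_j$ satisfies conditions (1) and (5), the last sentence of Proposition~\ref{ProH} tells us that conditions (2)--(4) hold as well, hence $R_{\mathrm{L}}(P_j/\tr P_j)=E_{\mathrm{R}}(P_j/\tr P_j)=G(P_j/\tr P_j)-\log\tr P_j$. Writing $S(P_j/\tr P_j)=\log\tr P_j$ (the von Neumann entropy of a maximally mixed state on the range of $P_j$), this is exactly the hypothesis $R_{\mathrm{L}}(\rho_j)=E_{\mathrm{R}}(\rho_j)=G(\rho_j)-S(\rho_j)$ of Proposition~\ref{ProH2} for each factor $\rho_j:=P_j/\tr P_j$.

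The second step is to establish $G\bigl(\bigotimes_{j=1}^n P_j/\tr P_j\bigr)=\sum_{j=1}^n G(P_j/\tr P_j)$. I would do this by an induction on $n$, reducing to the two-factor case $G(\rho\otimes\sigma)=G(\rho)+G(\sigma)$, and for the two-factor case I would argue via the group-representation structure: the product group $H_1\times H_2$ acts on $P_1\otimes P_2$ by the local unitaries $U_1(g_1)\otimes U_2(g_2)$, and the range of $P_1\otimes P_2$ is the irreducible, multiplicity-free tensor-product representation. One direction, $G(\rho\otimes\sigma)\le G(\rho)+G(\sigma)$, is immediate by taking the tensor product of closest product states; the nontrivial direction is the lower bound. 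Here I expect to invoke the averaging inequality of condition~(4): if $\ket{\varphi_N^{(j)}}$ is a closest product state of $P_j$, then $\ket{\varphi_N^{(1)}}\otimes\ket{\varphi_N^{(2)}}$ is a product state, and applying the group average over $H_1\times H_2$ to its projector, together with the inequality in condition~(4) for each factor (which holds because (5)$\Rightarrow$(4)), sandwiches $\bra{\varphi}(P_1\otimes P_2)\ket{\varphi}$ from above by $\Lambda^2(\rho)\Lambda^2(\sigma)\,\tr P_1\tr P_2$ for any product $\ket{\varphi}$, giving $\Lambda^2(\rho\otimes\sigma)\le\Lambda^2(\rho)\Lambda^2(\sigma)$ and hence the matching lower bound on $G$. (I would double-check that this is exactly the mechanism used in the proof of Proposition~\ref{ProH} in~\cite{hmm08}, and cite it rather than reprove it if so.)

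The final step is purely formal: with $G(\bigotimes_j\rho_j)=\sum_j G(\rho_j)$ in hand and each $\rho_j$ satisfying the entropy/GM/LGR equalities above, I apply Proposition~\ref{ProH2} inductively — first to $\rho_1$ and $\sigma_1:=\bigotimes_{j\ge2}\rho_j$, noting that $\sigma_1$ itself inherits the equality $R_{\mathrm{L}}(\sigma_1)=E_{\mathrm{R}}(\sigma_1)=G(\sigma_1)-S(\sigma_1)$ from the inductive hypothesis and the additivity of $S$ and $G$ on the tail product — and peel off one factor at a time. Chaining the resulting equalities yields $R_{\mathrm{L}}(\bigotimes_j P_j/\tr P_j)=E_{\mathrm{R}}(\bigotimes_j P_j/\tr P_j)=\sum_j\bigl(G(P_j/\tr P_j)-\log\tr P_j\bigr)=G(\bigotimes_j P_j/\tr P_j)-\sum_j\log\tr P_j$, which is \eref{eq:H3}. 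The main obstacle is Step~2, the multiplicativity of $\Lambda^2$ (equivalently the sub-multiplicativity, since super-multiplicativity is trivial) for these structured projector states; everything else is bookkeeping on top of Propositions~\ref{ProH} and~\ref{ProH2}.
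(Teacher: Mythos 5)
Your proposal goes wrong at Step~2, and the error is not a fixable gap but a false intermediate claim. You set out to prove $G\bigl(\bigotimes_j P_j/\mathrm{tr}P_j\bigr)=\sum_j G(P_j/\mathrm{tr}P_j)$, i.e.\ multiplicativity of $\Lambda^2$ for these projector states, and then feed that into proposition~\ref{ProH2}. But conditions (1) and (5) do \emph{not} imply additivity of GM: the antisymmetric projector states $\rho_{d,N}$ satisfy both conditions, proposition~\ref{ProH3} is applied to them in \sref{sec:ass3}, and yet their GM is strictly non-additive for $N\geq 3$ (indeed $\Lambda^2(\rho_{d,N}^{\otimes 2})>\Lambda^2(\rho_{d,N})^2$, with the closest product state of the two-copy state being maximally entangled across the copies). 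Your proposed sandwich argument fails precisely there: condition (4) for each factor controls averages of the factor's own closest product state, but it gives no upper bound of the form $\bra{\varphi}(P_1\otimes P_2)\ket{\varphi}\leq\Lambda^2(\rho_1)\Lambda^2(\rho_2)\,\mathrm{tr}P_1\,\mathrm{tr}P_2$ for product states $\ket{\varphi}$ of the joint system that are entangled across the two factors. Since proposition~\ref{ProH2} requires $G(\rho\otimes\sigma)=G(\rho)+G(\sigma)$ as a hypothesis, your final step collapses with Step~2.

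The key point you missed is that \eref{eq:H3} does not assert additivity of anything; it only asserts the mutual relation $R_{\mathrm{L}}=E_{\mathrm{R}}=G-\log\mathrm{tr}$ for the tensor-product state itself. The paper's proof is therefore one line: the direct product group $\prod_j\times H_j$ acts by the local unitaries $\bigotimes_j U_j(g_j)$, the range of $\bigotimes_j P_j$ is an irreducible, multiplicity-one subrepresentation of this product representation (product of irreducibles of the factor groups, product of multiplicities), so $\bigotimes_j P_j/\prod_j\mathrm{tr}P_j$ itself satisfies conditions (1) and (5) of proposition~\ref{ProH}, and \eref{eq:H1} applied to this single state is exactly \eref{eq:H3} since $\log\mathrm{tr}\bigl(\bigotimes_j P_j\bigr)=\sum_j\log\mathrm{tr}P_j$. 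You actually state the needed representation-theoretic fact in passing ("the range of $P_1\otimes P_2$ is the irreducible, multiplicity-free tensor-product representation") but then head in the wrong direction; propositions~\ref{ProH2} and the additivity of GM play no role in this result.
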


\begin{proof}
Let $H_j$ and $U_j$ be the group and the local unitary
representation satisfying the conditions (1) and (5) of proposition
\ref{ProH} concerning the projector state
$\frac{P_j}{\mathrm{tr}(P_j)}$ for $j=1,\ldots,n$. Define the
representation $\prod_{j=1}^n\times U_j$ of the direct product group
$\prod_{j=1}^n \times G_j$ by $\bigl(\prod_{j=1}^n\times
U_j\bigr)(g_1,\ldots,g_n):=\prod_{j=1}^n\otimes U_j(g_j)$. This
satisfies the conditions (1) and  (5) of proposition~\ref{ProH}
concerning the projector state $\frac{\bigotimes_{j=1}^n P_j}{
\prod_{j=1}^n\mathrm{tr}P_j}$, which implies \eref{eq:H3}.
\end{proof}

Next, we present two known results concerning the relation between a
given  entanglement measure of  a pure multipartite state and that
of its reduced states after tracing out one party. Let
$|\psi\rangle$ be an $N$-partite pure state, and $\rho$ one of its
($N-1$)-partite reduced states. First, Jung {\it et al}
\cite{jung08} have proved that the following equality holds:
\begin{eqnarray}
  G(\rho)&=&G(\ket{\psi}). \label{eq:ReduceGM}
\end{eqnarray}
So the additivity problem on  an $N$-partite pure state is
equivalent to that on its ($N-1$)-partite reduced states.

Second, Plenio and Vedral \cite{pv01} have proved a useful
inequality concerning REE,
\begin{eqnarray}
  E_{\mathrm{R}}(\ket{\psi})\geq E_{\mathrm{R}}(\rho)+S(\rho), \label{eq:ReduceREE}
\end{eqnarray}
which means that the reduction in entanglement is no less than the
increase in entropy due to deletion of a subsystem. If
$G(|\psi\rangle)=E_{\mathrm{R}}(|\psi\rangle)$ (this is true if, for
example, proposition~\ref{ProH} is satisfied), combining
\eref{eq:REE-GM}, \eref{eq:ReduceGM} and \eref{eq:ReduceREE}, we
obtain an interesting equality,
\begin{eqnarray}\label{eq:GMREEreduce}
G(|\psi\rangle)=G(\rho)=E_{\mathrm{R}}(|\psi\rangle)=E_{\mathrm{R}}(\rho)+S(\rho).
\end{eqnarray}
In this case, the total entanglement $E_{\mathrm{R}}(|\psi\rangle)$
is the sum of the remaining entanglement $E_{\mathrm{R}}(\rho)$
after losing a subsystem and the increase in entropy $S(\rho)$.
Moreover, if GM of $|\psi\rangle$ is additive, then GM of $\rho$,
REE of $|\psi\rangle$ and that of $\rho$ are all additive.

\subsection{Geometric measure and permutation symmetry}

Permutation symmetry plays an important role in the study of
multipartite entanglement.  A multipartite state is called
(permutation) \emph{symmetric} (\emph{antisymmetric}) if its support
is contained in the symmetric (antisymmetric) subspace, and
\emph{permutation invariant} if it is invariant under permutation of
the parties. Note that both symmetric states and antisymmetric
states are permutation invariant. Hayashi {\it et al} \cite{HMMOV09}
and Wei {\it et al} \cite{ws} have shown that the closest product
state to a  symmetric pure state with non-negative amplitudes in the
computational basis can be chosen to be symmetric. H\"ubener {\it et
al} \cite{hkw09}  have shown this fact for general symmetric states
(corollary 5). In addition, if $\rho$ is a pure state shared over
three or more parties, the closest product state is necessarily
symmetric (lemma 1). Here we  present a stronger result on general
symmetric states shared over three or more parties.
\begin{proposition}\label{pro:symmetric}
The closest product state to any $N$-partite pure or mixed symmetric
state with $N\geq3$ is necessarily symmetric.
\end{proposition}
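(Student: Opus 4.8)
The plan is to reduce the mixed case to the pure case, which is Lemma~1 of H\"ubener {\it et al} \cite{hkw09}, by purifying $\rho$ in a way that respects the permutation symmetry and then projecting the reference system back out.

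So fix a closest product state $|\varphi\rangle=\bigotimes_{j=1}^N|\varphi_j\rangle$ of $\rho$; the goal is to show $|\varphi_1\rangle=\cdots=|\varphi_N\rangle$ up to phases. Since $\mathrm{supp}\,\rho$ is contained in the symmetric subspace, every eigenvector of $\rho$ is symmetric, so the purification $|\Psi\rangle:=\sum_i\sqrt{p_i}\,|\psi_i\rangle_{A_1\cdots A_N}\ox|i\rangle_R$ built from the spectral decomposition $\rho=\sum_i p_i\proj{\psi_i}$ is invariant under every permutation of the parties $A_1,\dots,A_N$, the reference $R$ being a spectator. For any unit states $|\varphi_1\rangle,\dots,|\varphi_N\rangle$ one has $\max_{|\varphi_R\rangle}|\langle\varphi_1\cdots\varphi_N\varphi_R|\Psi\rangle|^2=\|\langle\varphi_1\cdots\varphi_N|\Psi\rangle\|^2=\langle\varphi_1\cdots\varphi_N|\rho|\varphi_1\cdots\varphi_N\rangle$, so maximizing over $|\varphi_1\rangle,\dots,|\varphi_N\rangle$ gives $\Lambda^2(|\Psi\rangle)=\Lambda^2(\rho)$ (this is the reasoning behind \eref{eq:ReduceGM}), and for the chosen optimal $|\varphi_1\rangle,\dots,|\varphi_N\rangle$ the state $|\varphi_R\rangle\propto\langle\varphi_1\cdots\varphi_N|\Psi\rangle$ extends them to a closest product state of $|\Psi\rangle$; this $|\varphi_R\rangle$ is well defined since $\||\varphi_R\rangle\|^2=\langle\varphi|\rho|\varphi\rangle=\Lambda^2(\rho)>0$.

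Next I would project $R$ out again: set $|w\rangle:=\langle\varphi_R|\Psi\rangle\in\bigotimes_{j=1}^N\mathcal{H}_j$. Because $|\Psi\rangle$ is invariant under permutations of $A_1,\dots,A_N$ while $\langle\varphi_R|$ acts only on $R$, the vector $|w\rangle$ is permutation invariant, so $|w\rangle/\||w\rangle\|$ is a \emph{pure symmetric} $N$-partite state; it is a genuine state because $\||w\rangle\|\ge|\langle\varphi_1\cdots\varphi_N|w\rangle|=|\langle\varphi_1\cdots\varphi_N\varphi_R|\Psi\rangle|=\Lambda(|\Psi\rangle)>0$. Moreover $|\varphi_1\rangle\cdots|\varphi_N\rangle$ is one of its closest product states, since for every unit product vector $|a_1\cdots a_N\rangle$ one has $|\langle a_1\cdots a_N|w\rangle|=|\langle a_1\cdots a_N\varphi_R|\Psi\rangle|\le\Lambda(|\Psi\rangle)$, with equality attained at $(\varphi_1,\dots,\varphi_N)$. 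Now Lemma~1 of \cite{hkw09} --- every closest product state of a pure symmetric state of three or more parties is symmetric --- applies to $|w\rangle/\||w\rangle\|$ and forces $|\varphi_1\rangle=\cdots=|\varphi_N\rangle$ up to phases, so $|\varphi\rangle$ is symmetric. (For a pure $\rho=\proj{\psi}$ one may take $|\Psi\rangle=|\psi\rangle\ox|0\rangle_R$, and the argument simply returns Lemma~1 of \cite{hkw09}.)

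The substantive content is thus imported from \cite{hkw09}; the rest is bookkeeping, and the only things to verify carefully are that permutation invariance on $A_1,\dots,A_N$ and the property of being a closest product state are both preserved in passing from $\rho$ to $|\Psi\rangle$ and then to $|w\rangle/\||w\rangle\|$, and that $|\varphi_R\rangle$ and $|w\rangle$ do not vanish. I expect no genuine obstacle beyond this. The only place where real work would be required is Lemma~1 of \cite{hkw09} itself, whose proof rests on a Takagi (Autonne) decomposition of the two-party reduction of a symmetric pure state together with the stationarity condition at a third party --- precisely the point where $N\ge3$ enters, and the reason the statement fails for $N=2$.
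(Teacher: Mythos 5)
Your proof is correct and follows essentially the same route as the paper's: purify $\rho$ symmetrically, use the fact (theorem~1 of \cite{jung08}, i.e.\ the reasoning behind \eref{eq:ReduceGM}) that a closest product state of $\rho$ extends by a reference-system vector to a closest product state of the purification, project the reference out to obtain an unnormalized pure symmetric $N$-partite state for which $|\varphi_1\cdots\varphi_N\rangle$ is still a closest product state, and invoke lemma~1 of \cite{hkw09}. The only difference is that you rederive the purification step and verify the non-vanishing of $|\varphi_R\rangle$ and $|w\rangle$ explicitly, where the paper simply cites \cite{jung08}; the substance is identical.
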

\begin{proof}
Let $\rho$ be an $N$-partite symmetric state with $N\geq3$. Assume
that  $\rho$ is mixed, otherwise the proposition is already proved
as lemma 1 in \cite{hkw09}. Suppose $|\psi\rangle$ is a purification
of  $\rho$, and $|\varphi_N\rangle$ a closest product state to
$\rho$. According to theorem 1 in \cite{jung08}, there exists a
single-particle state $|a\rangle$, such that
$|\varphi_N\rangle\otimes |a\rangle$ is a closest product state to
$|\psi\rangle$; thus $|\varphi_N\rangle$ is a closest product state
to the unnormalized state $\langle a|\psi\rangle$. Since the
purification has the form $\ket{\ps} = \sum_j
\ket{\ps_j}\otimes\ket{j}$ with each $\ket{\ps_j}$ a symmetric
$N$-partite state,  $\langle a|\psi\rangle$ is an unnormalized
$N$-partite pure symmetric state with $N\geq3$. According to lemma 1
in \cite{hkw09}, $|\varphi_N\rangle$ is necessarily symmetric too.
\end{proof}

We shall prove an analog of proposition \ref{pro:symmetric} for
antisymmetric states in \sref{sec:ass1}.

\subsection{Geometric measure of tripartite pure states and maximum output purity of quantum channels}

Finally, we mention a interesting connection between GM of
tripartite pure states and the maximum output purity of quantum
channels established by Werner and Holevo \cite{wh02}.  Let $\Phi$
be a CP map with the Kraus form $\Phi(\rho)=\sum_k A_k\rho
A_k^\dag$. The maximum output purity of the map $\Phi$ is defined as
\begin{eqnarray}
\nu_p(\Phi):=\max_{\rho} \|\Phi(\rho)\|_p,
\end{eqnarray}
where $|\!|\rho|\!|_p=(\mathrm{tr} \rho^p)^{1/p}$, and the maximum
is taken over all quantum states. From the Kraus representation of
the map $\Phi$, one can construct a tripartite state $|\Phi\rangle$
(not necessarily normalized) with components $\langle
h_j|A_k|e_l\rangle$ and vice versa, where $|h_j\rangle$s and
$|e_j\rangle$s are orthonormal bases in the appropriate Hilbert
spaces, respectively. Note that, as far as entanglement measures are
concerned, it does not matter which Kraus representation of the map
$\Phi$ is chosen, because different representations  lead  to
tripartite states which are equivalent under local unitary
transformations. It should be emphasized that the map constructed
from a generic tripartite pure state according to the above
correspondence may not be trace preserving.

The maximum output purity of the channel $\Phi$ and GM of the
tripartite state $|\Phi\rangle$ is related to each other through the
following simple formula \cite{wh02}:
\begin{eqnarray}
\nu_\infty(\Phi)=\Lambda^2(|\Phi\rangle).
\end{eqnarray}
According to this result, we can get GM of a tripartite pure state
by computing the maximum output purity $\nu_\infty$ of the
corresponding map and vice versa. Generally speaking, the
computation of the maximum output purity involves far fewer
optimization parameters. Moreover, we can  translate the
multiplicativity property about the maximum output purity to the
additivity property about GM and vice versa. Actually, the
non-additivity of GM of the tripartite antisymmetric basis state
corresponds exactly to the non-multiplicativity of the maximum
output purity $\nu_\infty$ of the Werner-Holevo channel \cite{wh02}.

\section{\label{sec:additivity}
Additivity of geometric measure of non-negative multipartite states}

A density matrix is called \emph{non-negative} if all its entries in
the computational basis are non-negative.  Many states of either
theoretical or  practical interests can be written as non-negative
states, with an appropriate choice of basis, such as (1) two-qubit
Bell diagonal states, (2) maximally correlated generalized Bell
diagonal states,  (3) isotropic states, (4) generalized Dicke
states, (5)  the Smolin state, (6) D\"ur's multipartite entangled
states. 

In this section, we prove a general theorem on the strong additivity
of GM of  non-negative states, and show the  additivity of REE and
LGR for many states mentioned in the last paragraph. For general
non-negative states, our additivity result on GM can provide a lower
bound for AREE and ALGR. These  results can be used to study state
discrimination under LOCC \cite{hmm06,hmm08}, and the classical
capacity of quantum multi-terminal channels \cite{mmv07}. The result
on AREE can be utilized to determine the possibility of reversible
transformation among certain multipartite states under asymptotic
LOCC, and determine the transformation rate under asymptotic
non-entangling operations. For non-negative bipartite states, our
results also provide a lower bound for entanglement of formation and
entanglement cost. For non-negative pure tripartite states, the
additivity of GM implies the multiplicativity of the maximum output
purity of the quantum channels related to these states according to
the Werner-Holevo recipe \cite{wh02}.

In \sref{sec:add-Gen}, we prove the strong additivity of GM of
arbitrary non-negative states, and provide a nontrivial lower bound
for AREE and ALGR, which translates to a lower bound for
entanglement of formation and entanglement cost in the bipartite
case. In \sref{sec:add-Bip}, we prove the strong additivity of GM of
Bell diagonal states, maximally correlated generalized Bell diagonal
states, isotropic states, and the additivity of REE of Bell diagonal
states, maximally correlated generalized Bell diagonal states. In
\sref{sec:add-Dic}, we prove the strong additivity of GM and
additivity of REE of generalized Dicke states and their reduced
states after tracing out one party, as well as  the additivity of
LGR of generalized Dicke states. The implications of these results
for asymptotic state transformation are also discussed briefly. In
\sref{sec:add-Mix}, we give a lower bound for AREE of  mixtures of
Dicke states. In \sref{sec:add-Smo}, we  prove the strong additivity
of GM, and the additivity of REE and LGR of the Smolin state. In
\sref{sec:add-Dur}, we prove the strong additivity of GM and
additivity of REE of D\"ur's multipartite  entangled states.

\subsection{\label{sec:add-Gen} General additivity theorem for geometric measure of non-negative states}

We start by proving our main theorem of this section.

\begin{theorem}
  \label{thm:additivityGnon-negative2}
  GM of any non-negative $N$-partite state $\rho$ is strong additive; that
  is, for any other $N$-partite state $\sigma$, the following
  equalities hold:  $\Lambda(\rho \ox \sigma) = \Lambda(\rho) \Lambda(\sigma)$,
  $G(\rho \ox \sigma) = G(\rho) +G(\sigma)$.
\end{theorem}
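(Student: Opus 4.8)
One inequality is free: taking a closest product state $|\varphi\rangle$ of $\rho$ and $|\varphi'\rangle$ of $\sigma$ and regrouping $|\varphi\rangle\otimes|\varphi'\rangle$ into a product state over the parties $A_1,\dots,A_N$ (using the convention $A_j=A_j^1A_j^2$) shows $\Lambda^2(\rho\otimes\sigma)\ge\Lambda^2(\rho)\Lambda^2(\sigma)$; and since $G=-2\log\Lambda$, the whole statement reduces to proving the reverse bound $\Lambda^2(\rho\otimes\sigma)\le\Lambda^2(\rho)\Lambda^2(\sigma)$, which I would establish for an \emph{arbitrary} state $\sigma$ using only the non-negativity of $\rho$ together with $\sigma\ge0$.

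The plan is: pick a closest product state $|\Phi\rangle=\bigotimes_{j=1}^N|\phi_j\rangle$ of $\rho\otimes\sigma$, where the $j$-th party holds $|\phi_j\rangle\in\mathcal{H}_j^1\otimes\mathcal{H}_j^2$; crucially $|\phi_j\rangle$ need \emph{not} be a product across that internal cut, and dealing with this is the whole point. Expanding $|\phi_j\rangle=\sum_m|m\rangle\otimes|\phi_j^{(m)}\rangle$ in the computational basis of $\mathcal{H}_j^1$ (with the $|\phi_j^{(m)}\rangle\in\mathcal{H}_j^2$ unnormalized and $\sum_m\langle\phi_j^{(m)}|\phi_j^{(m)}\rangle=1$), and writing $\rho=\sum_{\vec m,\vec n}\rho_{\vec m\vec n}|\vec m\rangle\langle\vec n|$, a direct computation gives
\[
  \Lambda^2(\rho\otimes\sigma)=\langle\Phi|\rho\otimes\sigma|\Phi\rangle=\sum_{\vec m,\vec n}\rho_{\vec m\vec n}\,\langle\Psi_{\vec m}|\sigma|\Psi_{\vec n}\rangle,\qquad |\Psi_{\vec m}\rangle:=\bigotimes_{j=1}^N|\phi_j^{(m_j)}\rangle .
\]
Then I would invoke $\sigma\ge0$: Cauchy--Schwarz for the semi-inner product $\langle\cdot|\sigma|\cdot\rangle$, applied to the \emph{product} vectors $|\Psi_{\vec m}\rangle$, bounds $|\langle\Psi_{\vec m}|\sigma|\Psi_{\vec n}\rangle|\le\Lambda^2(\sigma)\,\|\Psi_{\vec m}\|\,\|\Psi_{\vec n}\|$ with $\|\Psi_{\vec m}\|=\prod_jq_j^{(m_j)}$, where $q_j^{(m)}:=\|\phi_j^{(m)}\|$. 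Using $\rho_{\vec m\vec n}\ge0$ and the triangle inequality, the sum is then at most $\Lambda^2(\sigma)\sum_{\vec m,\vec n}\rho_{\vec m\vec n}\prod_jq_j^{(m_j)}\prod_jq_j^{(n_j)}$.

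The last move is to recognize that residual sum. The vector $|\chi_j\rangle:=\sum_mq_j^{(m)}|m\rangle\in\mathcal{H}_j^1$ is a \emph{normalized} state, because $\sum_m(q_j^{(m)})^2=\sum_m\|\phi_j^{(m)}\|^2=1$, with non-negative amplitudes; hence $|\chi\rangle:=\bigotimes_j|\chi_j\rangle$ is an honest product state and $\sum_{\vec m,\vec n}\rho_{\vec m\vec n}\prod_jq_j^{(m_j)}\prod_jq_j^{(n_j)}=\langle\chi|\rho|\chi\rangle\le\Lambda^2(\rho)$. Chaining the inequalities yields $\Lambda^2(\rho\otimes\sigma)\le\Lambda^2(\rho)\Lambda^2(\sigma)$, finishing the argument; note this never needs to know in advance that the closest product state of $\rho$ is itself non-negative, that fact simply drops out. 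I expect the only delicate step to be this Cauchy--Schwarz ``decoupling'', which simultaneously peels off $\sigma$'s overlap and converts the $\mathcal{H}^2$-side norms $q_j^{(m)}$ into legitimate amplitudes of a product state on the $\mathcal{H}^1$ side; everything else is multi-index bookkeeping. Since $\sigma$ is completely arbitrary, the conclusion is \emph{strong} additivity, and specializing $\sigma=\rho^{\otimes(n-1)}$ gives $G^{\infty}(\rho)=G(\rho)$ for every non-negative $\rho$.
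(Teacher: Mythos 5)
Your proposal is correct and follows essentially the same route as the paper's proof: expand the closest product state of $\rho\otimes\sigma$ in the computational basis on the $\rho$ side, apply the Cauchy--Schwarz inequality for the semi-inner product defined by $\sigma$ to peel off $\Lambda^2(\sigma)$, use the non-negativity of $\rho$'s matrix entries to take absolute values termwise, and recognize the residual sum as $\langle\chi|\rho|\chi\rangle\le\Lambda^2(\rho)$ for an honest product state $|\chi\rangle$ built from the norms $q_j^{(m)}$. The only cosmetic difference is that the paper factors the amplitudes $a_{l j_l}$ out of the $\mathcal{H}^2$-side vectors from the start rather than normalizing at the Cauchy--Schwarz step.
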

\begin{proof}
Assume that $|\varphi_N\rangle$ is a closest product state to
$\rho\otimes\sigma$, we can write it in the following form:
\begin{eqnarray}
&|\varphi_N\rangle=\bigotimes_{l=1}^N|a_l\rangle_{A_l^1,A_l^2}=\bigotimes_{l=1}^N
\sum_{j_l=0}^{d_l-1}
a_{lj_l}|j_l\rangle_{A_l^1}|c_{lj_l}\rangle_{A_l^2},
\end{eqnarray}
where $|j_l\rangle_{A_l^1}$s for given $l$ form an orthonormal
basis, $|c_{lj_l}\rangle_{A_l^2}$s are normalized states, and
$a_{lj_l}\geq0, \quad \sum_{j_l=0}^{d_l-1} a_{lj_l}^2=1$.
    \begin{eqnarray}
  \label{al:GMmax3}
\fl  &&\Lambda^2(\rho\otimes\sigma)=\langle\varphi_N|\rho\otimes\sigma|\varphi_N\rangle,\nonumber\\
\fl  &&= \left|
  \sum^{}_{k_1,j_1,\ldots,k_N,j_N} \left\{ \Biggl(\prod_{l=1}^N
  a_{l,j_l}a_{l,k_l}\Biggr)
    \Biggl[\Biggl(\bigotimes^N_{l=1} \bra{k_l}\Biggr)\rho\Biggl(\bigotimes^{N}_{l=1}
  \ket{j_l}\Biggr)\Biggr]
  \Biggl[\Biggl(\bigotimes^{N}_{l=1} \bra{c_{l,k_l}} \Biggr)\sigma\Biggl(\bigotimes^{N}_{l=1}
  \ket{c_{l,j_l}}\Biggr)\Biggr]\right\}
  \right |,
  \nonumber\\
\fl  &&\leq
  \sum^{}_{k_1,j_1,\ldots,k_N,j_N}  \left\{ \Biggl(\prod_{l=1}^N
  a_{l,j_l}a_{l,k_l}\Biggr)
  \Biggl[\Biggl(\bigotimes^N_{l=1} \bra{k_l}\Biggr)\rho\Biggl(\bigotimes^{N}_{l=1}
  \ket{j_l}\Biggr)\Biggr]\right\}
  \Lambda^2(\sigma)
\leq
  \Lambda^2(\rho)
  \Lambda^2(\sigma).
  \end{eqnarray}
In the above derivation, the next to last inequality is due to the
assumption that $\rho$ is non-negative, and the following
inequality:
\begin{eqnarray}\label{eq:Schwarz}
\fl \left|\Biggl(\bigotimes^{N}_{l=1} \bra{c_{l,k_l}}
\Biggr)\sigma\Biggl(\bigotimes^{N}_{l=1}
\ket{c_{l,j_l}}\Biggr)\right| \leq \left[\Biggl(\bigotimes^{N}_{l=1}
\bra{c_{l,j_l}} \Biggr)\sigma\Biggl(\bigotimes^{N}_{l=1}
\ket{c_{l,j_l}}\Biggr) \Biggl(\bigotimes^{N}_{l=1} \bra{c_{l,k_l}}
\Biggr)\sigma\Biggl(\bigotimes^{N}_{l=1}
\ket{c_{l,k_l}}\Biggr)\right]^{1/2} \nonumber\\
\fl \leq \Lambda^2(\sigma),
\end{eqnarray}
which follows from the Schwarz inequality and the definition of
$\Lambda^2(\sigma)$ \footnote{Tzu-Chieh Wei showed an alternative
proof of the  inequality in \eref{eq:Schwarz} in his comment to our
manuscript (private communication).}. Combining with the opposite
inequality $\Lambda^2(\rho\otimes\sigma)\geq\Lambda^2(\rho)
  \Lambda^2(\sigma)$, which follows from the subadditivity of GM, we conclude that $\Lambda^2(\rho\otimes\sigma)=\Lambda^2(\rho)
  \Lambda^2(\sigma)$, $G(\rho \ox \sigma) = G(\rho) +G(\sigma)$. Evidently, the
   closest product state to $\rho\otimes\sigma$
    can be chosen as the tensor product of the closest product states to
   $\rho$ and $\sigma$, respectively.
\end{proof}

Theorem~\ref{thm:additivityGnon-negative2} provides a new way to
compute  GM of the tensor product of multipartite states, when  GM
of each member is  known. In particular, it enables us to calculate
AGM of non-negative states, which are a large family of multipartite
states.

\begin{corollary}
  \label{thm:AGM}
  For any non-negative state $\rho$, $G^\infty(\rho)=G(\rho)$.
\end{corollary}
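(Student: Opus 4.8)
The plan is to derive Corollary~\ref{thm:AGM} as an immediate consequence of Theorem~\ref{thm:additivityGnon-negative2}. The key observation is that the strong additivity established in the theorem applies with $\sigma$ taken to be any $N$-partite state, in particular a further tensor power of $\rho$ itself. So first I would note that $\rho^{\otimes n}$ is again a non-negative $N$-partite state (the entries of a tensor product of non-negative matrices in the product computational basis are products of non-negative numbers, hence non-negative), and then apply the theorem inductively.

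Concretely, the argument runs as follows. By Theorem~\ref{thm:additivityGnon-negative2} with $\sigma = \rho^{\otimes(n-1)}$ (which is a legitimate $N$-partite state after grouping the $j$-th copies of each party into a single party $A_j$, per the convention fixed in \sref{sec:pre}), we get $G(\rho^{\otimes n}) = G(\rho) + G(\rho^{\otimes(n-1)})$ for every $n \ge 2$. A trivial induction on $n$ then yields $G(\rho^{\otimes n}) = n\,G(\rho)$ for all $n \ge 1$. Dividing by $n$ and taking the limit gives
\begin{eqnarray}
G^\infty(\rho) = \lim_{n\to\infty}\frac{1}{n}G(\rho^{\otimes n}) = \lim_{n\to\infty}\frac{1}{n}\cdot n\,G(\rho) = G(\rho).
\end{eqnarray}
In fact the limit is unnecessary since the sequence $\frac1n G(\rho^{\otimes n})$ is constant; one could equally invoke the general fact that strong additivity implies additivity, which was already stated in the discussion of the additivity problem in \sref{sec:pre}.

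There is essentially no obstacle here: the only point that needs a word of care is that the induction is applying the theorem with a \emph{fixed} non-negative first factor $\rho$ and an arbitrary second factor, which is exactly the ``strong'' part of strong additivity, and that the grouping of parties so that $\rho^{\otimes n}$ counts as an $N$-partite (not $nN$-partite) state is the standing convention. Both are already in place, so the corollary is genuinely a one-line deduction and I would present it as such.
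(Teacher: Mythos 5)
Your proof is correct and is exactly the deduction the paper intends: the corollary is stated as an immediate consequence of theorem~\ref{thm:additivityGnon-negative2}, using that strong additivity implies additivity (applying the theorem with $\sigma=\rho^{\otimes(n-1)}$ under the standing convention that $\rho^{\otimes n}$ is regarded as an $N$-partite state). Nothing is missing.
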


For a non-negative pure tripartite state, the additivity of GM
translates immediately to the multiplicativity of the maximum output
purity $\nu_\infty$ of the corresponding quantum channel constructed
according to the Werner-Holevo recipe \cite{wh02}. Thus,
theorem~\ref{thm:additivityGnon-negative2} may also be  useful in
the study of the additivity problem concerning quantum channels.

In addition, theorem \ref{thm:additivityGnon-negative2}  gives a
lower bound for AREE and ALGR for non-negative states. This lower
bound is often nontrivial as we shall see later. According to
\eref{eq:REE-GM},   for non-negative states $\rho_j$s,
$R_{\mathrm{L}}(\otimes_j\rho_j)\geq
E_{\mathrm{R}}(\otimes_j\rho_j)\geq
G(\otimes_j\rho_j)-S(\otimes_j\rho_j)=\sum_j G(\rho_j)-\sum_j
S(\rho_j)$, where we have employed the additivity of Von Neumann
entropy. In particular, $R_{\mathrm{L}}^\infty(\rho)\geq
E_{\mathrm{R}}^\infty(\rho)\geq G(\rho)-S(\rho)$.

For a generic bipartite state $\rho$, recall that
$E_{\mathrm{F}}(\rho)\geq E_{\mathrm{R}}(\rho)$ and
$E_{\mathrm{c}}(\rho)\geq E_{\mathrm{R}}^\infty(\rho)$, where
$E_{\mathrm{F}}$ and $E_{\mathrm{c}}$ denote entanglement of
formation and entanglement cost, respectively. Therefore, when
$\rho$ is non-negative, $G(\rho)-S(\rho)$ also gives a lower bound
for entanglement of formation and entanglement cost.

\begin{theorem}
  \label{thm:AREE}
  Both ALGR and AREE of any non-negative
  state $\rho$ are  lower bounded by the difference between GM and the Von Neumann entropy of the state, $R_{\mathrm{L}}^\infty(\rho)\geq E_{\mathrm{R}}^\infty(\rho)\geq
  G(\rho)-S(\rho)$. The bound for AREE is tight if $E_{\mathrm{R}}(\rho)=
  G(\rho)-S(\rho)$, and the bound for ALGR is tight if $R_{\mathrm{L}}(\rho)=E_{\mathrm{R}}(\rho)=
  G(\rho)-S(\rho)$. If $\rho$ is a non-negative bipartite state, $G(\rho)-S(\rho)$
  is also a lower bound for entanglement of formation and entanglement cost, $E_{\mathrm{F}}(\r) \geq E_{\mathrm{c}}(\rho)\geq G(\rho)-S(\rho)$.
\end{theorem}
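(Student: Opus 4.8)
The plan is to assemble Theorem~\ref{thm:AREE} from pieces already in place, since the main analytic work was done in Theorem~\ref{thm:additivityGnon-negative2} and inequality~\eref{eq:REE-GM}. First I would establish the chain for a single non-negative state $\rho$: by the basic inequality \eref{eq:REE-GM} applied to the $n$-fold tensor power, $R_{\mathrm{L}}(\rho^{\ox n}) \ge E_{\mathrm{R}}(\rho^{\ox n}) \ge G(\rho^{\ox n}) - S(\rho^{\ox n})$. Now invoke Theorem~\ref{thm:additivityGnon-negative2}: since $\rho$ is non-negative, its GM is strong additive, so $G(\rho^{\ox n}) = nG(\rho)$; and the von Neumann entropy is additive, so $S(\rho^{\ox n}) = nS(\rho)$. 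Dividing by $n$ and letting $n\to\infty$ gives $R_{\mathrm{L}}^\infty(\rho) \ge E_{\mathrm{R}}^\infty(\rho) \ge G(\rho) - S(\rho)$, which is the asserted lower bound.

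Next I would address tightness. If $E_{\mathrm{R}}(\rho) = G(\rho) - S(\rho)$, then since AREE is subadditive (the regularization of a subadditive sequence does not exceed the one-shot value), $E_{\mathrm{R}}^\infty(\rho) \le E_{\mathrm{R}}(\rho) = G(\rho) - S(\rho)$; combined with the lower bound just proved, equality holds. The argument for ALGR is identical: if in addition $R_{\mathrm{L}}(\rho) = E_{\mathrm{R}}(\rho) = G(\rho) - S(\rho)$, then $R_{\mathrm{L}}^\infty(\rho) \le R_{\mathrm{L}}(\rho) = G(\rho) - S(\rho)$, so that bound is tight too.

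Finally, for the bipartite statement I would simply quote the two standard inequalities recalled in the text immediately before the theorem: $E_{\mathrm{F}}(\rho) \ge E_{\mathrm{R}}(\rho)$ and $E_{\mathrm{c}}(\rho) \ge E_{\mathrm{R}}^\infty(\rho)$. Chaining these with the lower bound on $E_{\mathrm{R}}^\infty(\rho)$ gives $E_{\mathrm{c}}(\rho) \ge G(\rho) - S(\rho)$, and since $E_{\mathrm{F}}(\rho) \ge E_{\mathrm{R}}(\rho) \ge G(\rho) - S(\rho)$ (using \eref{eq:REE-GM} directly on $\rho$ itself), we also get $E_{\mathrm{F}}(\rho) \ge G(\rho) - S(\rho)$. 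The one point that warrants care — and which I'd expect to be the only real obstacle — is justifying that $R_{\mathrm{L}}^\infty$ and $E_{\mathrm{R}}^\infty$ are well-defined and satisfy the subadditivity needed for the tightness half: this follows from Fekete's lemma once one checks that $E_{\mathrm{R}}(\rho^{\ox(m+n)}) \le E_{\mathrm{R}}(\rho^{\ox m}) + E_{\mathrm{R}}(\rho^{\ox n})$, i.e. that $E_{\mathrm{R}}$ (and likewise $R_{\mathrm{L}}$) is subadditive on tensor products, which is a known structural property of these measures. Everything else is bookkeeping.
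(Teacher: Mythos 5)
Your proposal is correct and follows essentially the same route as the paper: apply \eref{eq:REE-GM} to $\rho^{\otimes n}$, invoke the strong additivity of GM from theorem~\ref{thm:additivityGnon-negative2} together with the additivity of the von Neumann entropy, take the limit, and then chain with $E_{\mathrm{F}}\geq E_{\mathrm{R}}$ and $E_{\mathrm{c}}\geq E_{\mathrm{R}}^\infty$ for the bipartite claims. The explicit appeal to subadditivity and Fekete's lemma for the tightness direction is a welcome bit of extra care that the paper leaves implicit.
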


Next, we prove a useful lemma concerning the closest product states
of non-negative states.
\begin{lemma}
  \label{le:closest2}
  The closest product state to any non-negative  state $\rho$ can be chosen to be non-negative.
\end{lemma}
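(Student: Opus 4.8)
The plan is to show that given any closest product state $\ket{\varphi_N} = \bigotimes_{l=1}^N \ket{a_l}$ of a non-negative state $\rho$, we can replace each single-particle factor $\ket{a_l} = \sum_{j} a_{lj}\ket{j}$ by the state $\ket{\tilde a_l} = \sum_j |a_{lj}|\,\ket{j}$ obtained by taking the absolute values of its amplitudes in the computational basis, and that this modification does not decrease the overlap $\langle\varphi_N|\rho|\varphi_N\rangle$. Since each $\ket{\tilde a_l}$ is still a normalized single-particle state, $\ket{\tilde\varphi_N}=\bigotimes_l\ket{\tilde a_l}$ is still a product state, and it is non-negative by construction; if its overlap is at least as large as that of $\ket{\varphi_N}$, then by optimality it is also a closest product state, which is what we want.

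First I would write out the overlap explicitly in the computational basis:
\begin{eqnarray}
\langle\varphi_N|\rho|\varphi_N\rangle
= \sum_{\vec j,\vec k}\Biggl(\prod_{l=1}^N \overline{a_{l,k_l}}\,a_{l,j_l}\Biggr)
\Biggl[\Biggl(\bigotimes_{l=1}^N\bra{k_l}\Biggr)\rho\Biggl(\bigotimes_{l=1}^N\ket{j_l}\Biggr)\Biggr].
\end{eqnarray}
Because $\rho$ is non-negative, every matrix element $(\bigotimes_l\bra{k_l})\rho(\bigotimes_l\ket{j_l})$ is a non-negative real number. Hence, applying the triangle inequality to the sum and then $|\overline{a_{l,k_l}}\,a_{l,j_l}| = |a_{l,k_l}|\,|a_{l,j_l}|$ term by term,
\begin{eqnarray}
\langle\varphi_N|\rho|\varphi_N\rangle
\le \sum_{\vec j,\vec k}\Biggl(\prod_{l=1}^N |a_{l,k_l}|\,|a_{l,j_l}|\Biggr)
\Biggl[\Biggl(\bigotimes_{l=1}^N\bra{k_l}\Biggr)\rho\Biggl(\bigotimes_{l=1}^N\ket{j_l}\Biggr)\Biggr]
= \langle\tilde\varphi_N|\rho|\tilde\varphi_N\rangle,
\end{eqnarray}
where the last equality is just the same expansion read backwards for the state $\ket{\tilde\varphi_N}$, whose amplitudes are the non-negative reals $|a_{l,j_l}|$. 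This establishes $\Lambda^2(\rho) = \langle\varphi_N|\rho|\varphi_N\rangle \le \langle\tilde\varphi_N|\rho|\tilde\varphi_N\rangle \le \Lambda^2(\rho)$, so equality holds throughout and $\ket{\tilde\varphi_N}$ is a non-negative closest product state.

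There is essentially no hard obstacle here: the argument is a direct triangle-inequality estimate that crucially uses the non-negativity of the matrix entries of $\rho$, exactly as in the last-but-one inequality in the proof of theorem~\ref{thm:additivityGnon-negative2}. The only points requiring a little care are bookkeeping ones — confirming that taking absolute values of amplitudes preserves the normalization $\sum_{j}|a_{lj}|^2 = \sum_j a_{lj}^2 = 1$ of each single-particle factor (so $\ket{\tilde\varphi_N}$ is genuinely an admissible competitor in the maximization defining $\Lambda^2$), and noting that the amplitudes $a_{lj}$ may a priori be complex, in which case $|a_{lj}|$ is still well defined and the estimate goes through verbatim. So I would expect the proof to be short, with the verification that the new factors remain unit vectors being the only routine check.
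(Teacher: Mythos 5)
Your proof is correct and follows essentially the same route as the paper's: expand $\langle\varphi_N|\rho|\varphi_N\rangle$ in the computational basis, use the non-negativity of the matrix elements of $\rho$ together with the triangle inequality to show that replacing each amplitude by its modulus cannot decrease the overlap, and conclude by optimality. The only cosmetic slip is writing $\sum_j|a_{lj}|^2=\sum_j a_{lj}^2$, which presumes real amplitudes, but as you note the argument goes through verbatim for complex amplitudes since $\sum_j|a_{lj}|^2=1$ is just the norm of $\ket{a_l}$.
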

\begin{proof}
Represent $\rho$ in the computational basis,
\begin{eqnarray}
&&\rho=\sum_{k_1,j_1,\ldots,k_N,j_N}\rho_{k_1,\ldots,k_N;j_1,\ldots,j_N}|k_1,\ldots,k_N\rangle\langle
j_1,\ldots,j_N|,
 \end{eqnarray}
where $\rho_{k_1,\ldots,k_N;j_1,\ldots,j_N}\geq0$. Assume that
$|\varphi_N\rangle$ is a closest product state to $\rho$ which reads
\begin{eqnarray}
  \ket{\varphi_N}  &:= \ket{a_1} \ox \cdots \ox \ket{a_N}\quad
  \mbox{with}\quad
     \ket{a_l} &:=  (b_{0,l}, \ldots, b_{d_l-1,l})^T \quad\forall l.
\end{eqnarray}
 \begin{eqnarray}
\fl\langle\varphi_N|\rho|\varphi_N\rangle
&=&\sum_{k_1,j_1,\ldots,k_N,j_N}\rho_{k_1,\ldots,k_N;j_1,\ldots,j_N}\langle
a_1,\ldots,a_N|k_1,\ldots,k_N\rangle\langle
j_1,\ldots,j_N|a_1,\ldots,a_N\rangle,\nonumber\\
\fl&\leq&
\sum_{k_1,j_1,\ldots,k_N,j_N}\rho_{k_1,\ldots,k_N;j_1,\ldots,j_N}\prod_{l=1}^{N}
|b_{k_l,l}^*b_{j_l,l}|,
\end{eqnarray}
the inequality is saturated when  $b_{j_l, l}$s are all
non-negative, that is $|\ph_N\rangle$ is non-negative.
\end{proof}

In the rest of this section, we illustrate the power of
theorems~\ref{thm:additivityGnon-negative2}, \ref{thm:AREE} and
lemma~\ref{le:closest2} with many concrete examples. In particular,
we prove the strong additivity of GM of the following states: Bell
diagonal states, maximally correlated generalized Bell diagonal
states, isotropic states, generalized Dicke states, mixtures of
Dicke states, the Smolin state, and D\"ur's multipartite  entangled
states. Moreover, we prove the additivity of REE  of Bell diagonal
states, maximally correlated generalized Bell diagonal states,
generalized Dicke states, generalized Dicke states with one party
traced out, the Smolin state, and  D\"ur's multipartite  entangled
states. The additivity of LGR of generalized Dicke states and the
Smolin state is also shown. The implications of these results for
state transformation under asymptotic LOCC and asymptotic
non-entangling operations, respectively, are also discussed briefly.

\subsection{\label{sec:add-Bip}  Bipartite mixed states and tripartite pure states}

In the bipartite scenario, for any pure states, REE, GM and LGR can
be easily calculated and their additivity has been shown
\cite{vprk97,vp98,wg03,VT99, HN03}. Note that any bipartite pure
state is non-negative in the Schmidt basis; hence, its  GM is
strong additive according to
theorem~\ref{thm:additivityGnon-negative2}. The same is true for any
multipartite state with a generalized Schmidt decomposition.
However, even in the bipartite scenario, the calculation of REE, GM
and LGR is not so trivial for mixed states. Moreover, the additivity
problem on generic mixed states is notoriously difficult. Due to
 \eref{eq:ReduceGM}, the difficulty in GM for bipartite
mixed states is equivalent to  that for  tripartite pure states.

As one of the most simple examples of bipartite mixed states, we
focus on maximally correlated generalized Bell diagonal states.
Maximally correlated states are known as a typical example where REE
is known to be additive \cite{rai99}. By applying a suitable local
unitary transformation, any maximally correlated generalized Bell
diagonal state can be transformed into the following form,
\begin{eqnarray}
  \label{al:mcgbelldiagonal}
  \r_{\mathrm{MCB}}(p) &:= \sum^{d-1}_{k=0} p_k \proj{\Psi_k}\qquad
  \mbox{with}\qquad
  \ket{\Psi_k} := \frac{1}{\sqrt d} \sum^{d-1}_{j=0} \rme^{ \frac{2\pi \rmi k}{d} j }\ket{jj},
\end{eqnarray}
where $p=(p_0,\ldots, p_{d-1})$ is a probability distribution. It's
easy to see that $\L^2(\r_{\mathrm{MCB}}(p)) =\max_{|\ph\rangle}
\bra{\ph,\ph}\r_\mathrm{MCB}(p)\ket{\ph,\ph}\leq
\max_{|\ph\rangle,k} \langle\ph,\ph\proj{\Psi_k}\ph,\ph\rangle \leq
\frac{1}{d}$, and the upper bound is achievable  by setting
$\ket{\ph}=\ket{j}, \forall j$. In addition, the state
$\r_{\mathrm{MCB}}(p)$ can be converted into a non-negative state
via a suitable local unitary transformation, such as the
simultaneous local Fourier transformation. According to
theorem~\ref{thm:additivityGnon-negative2}, we get
\begin{proposition}
  \label{pro:AREE}
  The maximally correlated generalized Bell diagonal state in
  \eref{al:mcgbelldiagonal} has strong additive GM, and thus $G^{\infty}(\r_{\mathrm{MCB}}(p))
= G(\r_{\mathrm{MCB}}(p)) =  \log{d}$.
\end{proposition}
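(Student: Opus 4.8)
The claim has two parts: computing $G(\r_{\mathrm{MCB}}(p))=\log d$, and upgrading this to strong additivity via Theorem~\ref{thm:additivityGnon-negative2}. For the first part, the excerpt already establishes $\L^2(\r_{\mathrm{MCB}}(p))\leq \tfrac1d$ by the bound $\bra{\ph,\ph}\r_{\mathrm{MCB}}(p)\ket{\ph,\ph}\leq \max_k \langle\ph,\ph|\proj{\Psi_k}|\ph,\ph\rangle$ together with $|\langle\ph,\ph|\Psi_k\rangle|^2 = \tfrac1d|\sum_j \rme^{2\pi\rmi kj/d}\langle\ph|j\rangle^2|^2$, and by Cauchy--Schwarz (or just $|\sum_j z_j|\leq \sum|z_j|$ combined with $\sum_j|\langle\ph|j\rangle|^2=1$) this is at most $\tfrac1d$. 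The matching lower bound $\L^2\geq\tfrac1d$ comes from plugging in the product state $\ket{j}\otimes\ket{j}$ for any fixed $j$, which gives $\bra{jj}\r_{\mathrm{MCB}}(p)\ket{jj}=\sum_k p_k|\langle jj|\Psi_k\rangle|^2=\sum_k p_k\cdot\tfrac1d=\tfrac1d$. Hence $\L^2(\r_{\mathrm{MCB}}(p))=\tfrac1d$ and $G(\r_{\mathrm{MCB}}(p))=-2\log\L=\log d$.

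The second part is essentially immediate. The state $\r_{\mathrm{MCB}}(p)$ is not itself non-negative because of the phases $\rme^{2\pi\rmi kj/d}$, but the simultaneous local Fourier transformation $F\otimes F$ (or $F\otimes \bar F$, whichever diagonalizes it correctly) maps $\ket{\Psi_k}$ to a standard maximally entangled-type vector with real non-negative entries; indeed one checks $\tfrac1{\sqrt d}\sum_j\rme^{2\pi\rmi kj/d}\ket{jj}$ is carried to (a phase times) $\ket{kk}$ or to $\tfrac1{\sqrt d}\sum_j\ket{j,j\oplus k}$ under the appropriate choice, so the transformed state is a non-negative mixture of non-negative rank-one projectors, hence non-negative. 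Since GM is invariant under local unitaries, and since Theorem~\ref{thm:additivityGnon-negative2} applies to the transformed (non-negative) state, we get $\L(\r_{\mathrm{MCB}}(p)\otimes\sigma)=\L(\r_{\mathrm{MCB}}(p))\L(\sigma)$ for every $N$-partite $\sigma$ (here $N=2$), i.e.\ strong additivity. In particular $G^\infty(\r_{\mathrm{MCB}}(p))=G(\r_{\mathrm{MCB}}(p))=\log d$ by Corollary~\ref{thm:AGM}.

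The only genuinely fiddly point is verifying that the simultaneous local Fourier transformation really does produce non-negative entries: one must pick the correct pairing (e.g.\ apply $F$ on one side and its conjugate on the other, or equivalently relabel the second system's basis by $j\mapsto -j$ first) so that the Kronecker-delta structure emerges with a $+1$ coefficient rather than an overall phase. This is a one-line Fourier identity, $\tfrac1d\sum_j \rme^{2\pi\rmi(k-k')j/d}=\delta_{k,k'}$, applied inside $\langle\Psi_k|$ transformed; I would state it as the observation that $(F\otimes \bar F)\ket{\Psi_k}=\ket{kk}$, from which non-negativity of $(F\otimes\bar F)\r_{\mathrm{MCB}}(p)(F\otimes\bar F)^\dagger = \sum_k p_k\proj{kk}$ is obvious. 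No real obstacle remains; the substance of the proposition is just the combination of the elementary overlap computation with the already-proved additivity theorem.
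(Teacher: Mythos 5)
Your proof follows the paper's own route exactly: the overlap bound $\Lambda^2(\r_{\mathrm{MCB}}(p))=\frac1d$ attained at $\ket{jj}$, conversion to a non-negative state by a simultaneous local Fourier transformation, and then theorem~\ref{thm:additivityGnon-negative2} plus corollary~\ref{thm:AGM}. One correction to your closing identity: $(F\otimes\bar F)\ket{\Psi_k}$ cannot equal $\ket{kk}$, since a local unitary cannot map a maximally entangled state to a product state; the correct image is $\frac{1}{\sqrt d}\sum_l\ket{l,\,l\oplus k}$, which is the second alternative you yourself listed, and the transformed density matrix $\sum_k \frac{p_k}{d}\sum_{l,l'}\ket{l,l\oplus k}\bra{l',l'\oplus k}$ is still manifestly non-negative, so the argument goes through unchanged.
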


Let $\sigma=\sum_{j}\frac{1}{d}\ket{jj}\bra{jj}$, we have
$E_{\mathrm{R}}^{\infty}( \r_{\mathrm{MCB}}(p))\le E_{\mathrm{R}}(
\r_{\mathrm{MCB}}(p)) \le S(\r_{\mathrm{MCB}}(p)\|\sigma)= \log{d} -
H(p)$, where $H(p)$ is the Shannon entropy of the distribution $p$.
Applying the inequality \eref{eq:REE-GM} and its asymptotic version
to the maximally correlated generalized Bell diagonal state
$\r_{\mathrm{MCB}}(p)$, we obtain the additivity of REE for
$\r_{\mathrm{MCB}}(p)$:
\begin{eqnarray}
E_{\mathrm{R}}^{\infty}(  \r_{\mathrm{MCB}}(p))= E_{\mathrm{R}}(
\r_{\mathrm{MCB}}(p)) = \log{d} - H(p).
\end{eqnarray}
The same result has been obtained by Rains \cite{rai99} with a
different method.

In the two-qubit system, any rank-two Bell diagonal state, a mixture
of two orthogonal Bell states, can always be converted into the
 form in \eref{al:mcgbelldiagonal}, with a suitable local unitary transformation. So, any
two-qubit rank-two Bell diagonal state has strong additive GM and
additive REE. Actually, this is true for all Bell diagonal states.
Let $\rho_{\mathrm{BD}}$ be any Bell diagonal state,
\begin{eqnarray}\label{eq:BD}
\rho_{\mathrm{BD}}(p)=\sum_{j=0}^3p_j|\Psi_j\rangle\langle\Psi_j|,
\end{eqnarray}
where $p=(p_0,p_1,p_2,p_3)$ is a probability distribution, and
$|\Psi_j\rangle$s are the standard Bell basis. $|\Psi_0\rangle,
|\Psi_1\rangle$ are already defined in \eref{al:mcgbelldiagonal},
the other two states are defined as
\begin{eqnarray}
&&|\Psi_2\rangle:=\frac{1}{\sqrt{2}}(|01\rangle+|10\rangle),\qquad
|\Psi_3\rangle:=\frac{1}{\sqrt{2}}(|01\rangle-|10\rangle).
\end{eqnarray}
Since local unitary transformations can realize all 24 permutations
of the four Bell states, with out loss of generality, we may assume
$p_0\geq p_1\geq p_2\geq p_3$.  Then $\rho_{\mathrm{BD}}$ is clearly
a non-negative state, and its GM is strong additive according to
theorem~\ref{thm:additivityGnon-negative2}. Meanwhile,  its closest
product state can be chosen to be  non-negative according to
lemma~\ref{le:closest2}. Let
$|\varphi_2\rangle=(\cos\theta_1|0\rangle+\sin\theta_1|1\rangle)\otimes(\cos\theta_2|0\rangle+\sin\theta_2|1\rangle)$
with $0\leq \theta_1,\theta_2\leq\frac{\pi}{2}$.
\begin{eqnarray}
\fl\Lambda^2(\rho_{\mathrm{BD}}(p))&=&\max_{|\varphi_2\rangle}
\langle\varphi_2|\rho_{\mathrm{BD}}(p)|\varphi_2\rangle\nonumber\\
\fl&=&\frac{1}{2} \max_{\theta_1,\theta_2}
\Bigl[p_0\cos^2(\theta_1-\theta_2)+p_1\cos^2(\theta_1+\theta_2)
+p_2\sin^2(\theta_1+\theta_2)\nonumber\\
\fl&&{}+p_3\sin^2(\theta_1-\theta_2)\Bigr]\nonumber\\
\fl&=&\frac{1}{2} \max_{\theta_1,\theta_2}
\Bigl[p_2+p_3+(p_0-p_3)\cos^2(\theta_1-\theta_2)+(p_1-p_2)\cos^2(\theta_1+\theta_2)\Bigr]\nonumber\\
\fl&=&\frac{p_0+p_1}{2}.
\end{eqnarray}
The maximum in the above equation can be obtained at
$\theta_1=\theta_2=0$, that is $|\varphi_2\rangle=|00\rangle$.

REE of Bell diagonal states have  been computed by Vedral {\it et
al} \cite{vprk97} and by Rains \cite{rai99}, with the result,
$E_{\mathrm{R}}(\rho_{\mathrm{BD}}(p))=0$ if $p_0\leq \frac{1}{2}$,
and
\begin{eqnarray}\label{eq:BDREE}
E_{\mathrm{R}}(\rho_{\mathrm{BD}}(p))=1+p_0\log p_0+(1-p_0)\log
(1-p_0)=1-H(p_0,1-p_0)
\end{eqnarray}
if $p_0\geq \frac{1}{2}$, where $H(p_0,1-p_0)$ is the binary Shannon
entropy. Although, in general,
$E_{\mathrm{R}}(\rho_{\mathrm{BD}}(p))>G(\rho_{\mathrm{BD}}(p))-S(\rho_{\mathrm{BD}}(p))$,
except for rank-two Bell diagonal states, REE of Bell diagonal
states is also additive. This can be shown as follows, with a
suitable local unitary transformation and twirling,
$\rho_{\mathrm{BD}}(p)$ can be turned into a Werner state with the
same maximal eigenvalue $p_0$, and thus with the same REE according
to \eref{eq:BDREE}. Recall that REE of any two-qubit Werner state is
additive \cite{aejpvm01,rai01}, it follows from the monotonicity of
AREE under LOCC that REE of any Bell diagonal state is also
additive.

\begin{proposition}
The Bell diagonal state in \eref{eq:BD} has strong additive GM, and
additive REE, thus
$G^\infty(\rho_{\mathrm{BD}}(p))=G(\rho_{\mathrm{BD}}(p))=1-\log(p_0+p_1)$,
and
$E_{\mathrm{R}}^\infty(\rho_{\mathrm{BD}}(p))=E_{\mathrm{R}}(\rho_{\mathrm{BD}}(p))=1-H(p_0,1-p_0)$.
\end{proposition}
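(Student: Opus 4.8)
The plan is to establish the two claims of the proposition separately, namely strong additivity of $G$ and additivity of $E_{\mathrm{R}}$ for the Bell diagonal state $\rho_{\mathrm{BD}}(p)$, and then to read off the closed-form expressions. For the GM part, the key observation is that after applying a suitable local unitary we may assume $p_0\geq p_1\geq p_2\geq p_3$, in which case $\rho_{\mathrm{BD}}(p)$ is manifestly non-negative in the computational basis, since in this ordering the off-diagonal entries of the density matrix (which come in $\pm$ pairs from the Bell states $|\Psi_2\rangle,|\Psi_3\rangle$ and from $|\Psi_0\rangle,|\Psi_1\rangle$) are all of the form $\frac12(p_i-p_j)\geq 0$. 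Hence theorem~\ref{thm:additivityGnon-negative2} applies directly and gives strong additivity of $G(\rho_{\mathrm{BD}}(p))$. It remains only to compute $\Lambda^2(\rho_{\mathrm{BD}}(p))$ itself: by lemma~\ref{le:closest2} the closest product state may be taken non-negative, so we parametrize $|\varphi_2\rangle=(\cos\theta_1|0\rangle+\sin\theta_1|1\rangle)\otimes(\cos\theta_2|0\rangle+\sin\theta_2|1\rangle)$ with $\theta_1,\theta_2\in[0,\pi/2]$, expand $\langle\varphi_2|\rho_{\mathrm{BD}}(p)|\varphi_2\rangle$ in these angles, and maximize. The routine trigonometric simplification collects the expression into $\tfrac12\bigl[p_2+p_3+(p_0-p_3)\cos^2(\theta_1-\theta_2)+(p_1-p_2)\cos^2(\theta_1+\theta_2)\bigr]$, and since both coefficients are non-negative under our ordering the maximum is attained at $\theta_1=\theta_2=0$, giving $\Lambda^2=\tfrac{p_0+p_1}{2}$ and therefore $G(\rho_{\mathrm{BD}}(p))=-2\log\Lambda=1-\log(p_0+p_1)$.

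For the REE part, I would not attempt a direct variational computation; instead I would invoke the known value of $E_{\mathrm{R}}$ together with a twirling/monotonicity argument. First recall from \cite{vprk97,rai99} that $E_{\mathrm{R}}(\rho_{\mathrm{BD}}(p))=0$ when $p_0\leq\frac12$ and $E_{\mathrm{R}}(\rho_{\mathrm{BD}}(p))=1-H(p_0,1-p_0)$ when $p_0\geq\frac12$, where $p_0$ is the largest Bell weight. The crucial structural fact is that a Bell diagonal state can be converted, by a suitable local unitary followed by Werner twirling (averaging over $U\otimes\bar U$), into a two-qubit Werner state whose singlet fraction equals $p_0$, hence whose REE is again $1-H(p_0,1-p_0)$ by \eref{eq:BDREE} applied to the Werner case. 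Since Werner twirling is an LOCC operation and asymptotic REE is monotone (and subadditive) under LOCC, we get the chain $E_{\mathrm{R}}(\rho_{\mathrm{BD}}(p))=E_{\mathrm{R}}(\rho_W)$ and $E_{\mathrm{R}}^\infty(\rho_{\mathrm{BD}}(p))\geq E_{\mathrm{R}}^\infty(\rho_W)$ combined with $E_{\mathrm{R}}^\infty\leq E_{\mathrm{R}}$ always; invoking the known additivity of REE for two-qubit Werner states \cite{aejpvm01,rai01}, i.e. $E_{\mathrm{R}}^\infty(\rho_W)=E_{\mathrm{R}}(\rho_W)$, sandwiches everything and yields $E_{\mathrm{R}}^\infty(\rho_{\mathrm{BD}}(p))=E_{\mathrm{R}}(\rho_{\mathrm{BD}}(p))=1-H(p_0,1-p_0)$.

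I expect the main obstacle to lie in making the twirling step fully rigorous: one must check that the composition of local unitaries mapping the generic Bell state configuration to the canonical one, followed by the $U\otimes \bar U$ average, indeed produces a Werner state with singlet fraction exactly $p_0$ and does not destroy or inflate the relative entropy of entanglement. The monotonicity direction is immediate (twirling is a valid LOCC map), but one also wants the reverse inequality $E_{\mathrm{R}}(\rho_W)\leq E_{\mathrm{R}}(\rho_{\mathrm{BD}}(p))$, which follows because the twirled state is obtained from $\rho_{\mathrm{BD}}(p)$ by a specific LOCC protocol while having the same $p_0$; combined with the explicit formula \eref{eq:BDREE} this forces equality of the one-shot quantities, and then only the asymptotic additivity of the Werner state is needed. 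Everything else—the non-negativity check enabling theorem~\ref{thm:additivityGnon-negative2}, the trigonometric maximization, and assembling the closed forms—is routine, so the proof reduces to quoting \eref{eq:BD}, \eref{eq:BDREE}, theorem~\ref{thm:additivityGnon-negative2}, lemma~\ref{le:closest2}, and the known Werner-state results, then stitching them together as above.
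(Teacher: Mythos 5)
Your proposal is correct and follows essentially the same route as the paper: reorder the Bell weights by a local unitary so that $p_0\geq p_1\geq p_2\geq p_3$, observe non-negativity in the computational basis so that theorem~\ref{thm:additivityGnon-negative2} gives strong additivity of GM, carry out the same trigonometric maximization for $\Lambda^2=\frac{p_0+p_1}{2}$, and for REE combine the known one-shot value with twirling to a Werner state of the same largest eigenvalue, the additivity of Werner-state REE, and the LOCC monotonicity of AREE. The only cosmetic difference is that you spell out the sandwich argument and the non-negativity of the off-diagonal entries slightly more explicitly than the paper does.
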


To compute LGR of the Bell diagonal state $\rho_{\mathrm{BD}}(p)$,
let $\rho^\prime$ be an unnormalized separable state with the
minimal trace such that $\rho^\prime\geq \rho_{\mathrm{BD}}$, then
$R_{\mathrm{L}}(\rho_{\mathrm{BD}}(p))=\log[\mathrm{tr}(\rho^\prime)]$.
In addition, $\rho^\prime$ can also be chosen to be a Bell diagonal
state. Since a Bell diagonal state is separable if and only if its
largest eigenvalue is no larger  than one half of its trace,
$\rho^\prime$ can be chosen to be
$\rho^\prime=\rho_{\mathrm{BD}}(p)+\frac{(2p_0-1)}{3}(I-|\Psi_0\rangle\langle\Psi_0|)$.
We thus obtain
\begin{eqnarray}
\quad R_{\mathrm{L}}(\rho_{\mathrm{BD}}(p))=\log (2p_0) \quad
\mbox{for} \quad p_0\geq \frac{1}{2}.
\end{eqnarray}

Next, we consider the isotropic state $\r_{\mathrm{I},\lambda}:=
\frac{1-\lambda}{d^2-1}(I-\ket{\Psi_0}\bra{\Psi_0})
+\lambda\ket{\Psi_0}\bra{\Psi_0}$ with $\frac{1}{d^2}\leq \lambda
\leq 1$. It is easy to see that $\L^2(\r_{\mathrm{I},\lambda}) =
\frac{\lambda d+1}{d(d+1)}$, and that the state $\ket{jj}$ for each
$j=0,1,\ldots,d-1$ is a closest product state. Since
$\r_{\mathrm{I},\lambda}$ is a non-negative state, its GM is strong
additive according to theorem~\ref{thm:additivityGnon-negative2}. So
we obtain

\begin{proposition}
  The isotropic state $\r_{\mathrm{I},\lambda}$ with $\frac{1}{d^2}\leq \lambda
\leq 1$ has strong additive GM, and
  thus
  $G^{\infty}(\r_{\mathrm{I},\lambda})= G(\r_{\mathrm{I},\lambda})=\log
  \frac{d(d+1)}{\lambda d+1} $.
\end{proposition}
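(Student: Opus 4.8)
The plan is to invoke Theorem~\ref{thm:additivityGnon-negative2} directly once two elementary facts about the isotropic state are in place: that $\r_{\mathrm{I},\lambda}$ is (or can be made) non-negative in a suitable product basis, and that $\L^2(\r_{\mathrm{I},\lambda}) = \frac{\lambda d+1}{d(d+1)}$. The first is immediate: writing $\ket{\Psi_0}=\frac{1}{\sqrt d}\sum_j \ket{jj}$, all matrix elements of $\proj{\Psi_0}$ in the computational basis equal $\frac1d$, and $\frac{1-\lambda}{d^2-1}(I-\proj{\Psi_0})$ contributes $-\frac{1-\lambda}{d(d^2-1)}$ off the "diagonal block" of doubled indices and a manifestly positive amount on $\ket{jj}\bra{jj}$; for $\frac1d\le\lambda\le1$ (the regime in the proposition) the coefficient $\frac1d\bigl(\lambda-\frac{1-\lambda}{d^2-1}\bigr)$ of each $\ket{jk}\bra{j'k'}$ with $j\ne k$ is easily checked to be non-negative, so $\r_{\mathrm{I},\lambda}$ itself is non-negative when $\lambda\ge\frac1d$; more generally one can always apply a local Fourier/phase transformation to bring the isotropic state to non-negative form, exactly as done for $\r_{\mathrm{MCB}}(p)$ earlier in the section, so strong additivity of GM holds for all $\frac1{d^2}\le\lambda\le1$.

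Next I would establish the value of $\L^2$. Since the set of fully product pure states here is $\ket{\ph}\otimes\ket{\chi}$, one has $\L^2(\r_{\mathrm{I},\lambda})=\max_{\ph,\chi}\bra{\ph,\chi}\r_{\mathrm{I},\lambda}\ket{\ph,\chi}$, and because $\r_{\mathrm{I},\lambda}=aI+b\proj{\Psi_0}$ with $a=\frac{1-\lambda}{d^2-1}\ge0$ and $b=\lambda-\frac{1-\lambda}{d^2-1}\ge0$ in the stated range, this reduces to maximizing $\bra{\ph,\chi}\Psi_0\rangle|^2=\frac1d|\langle\ph|\bar\chi\rangle|^2$, whose maximum over product states is $\frac1d$ (attained at $\ket{\ph}=\ket{\chi}=\ket{j}$). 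Hence $\L^2(\r_{\mathrm{I},\lambda})=a+\frac bd=\frac{1-\lambda}{d^2-1}+\frac1d\bigl(\lambda-\frac{1-\lambda}{d^2-1}\bigr)=\frac{\lambda d+1}{d(d+1)}$ after simplification, and each $\ket{jj}$ is a closest product state. (When $\frac1{d^2}\le\lambda<\frac1d$ the coefficient $b$ may be negative, but one still checks directly that the overlap is maximized on $\ket{jj}$; alternatively, since GM is invariant under local unitaries and the Fourier-transformed state is non-negative, Theorem~\ref{thm:additivityGnon-negative2} applies regardless, and the one-copy value of $\L^2$ is a standard computation.)

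Finally, Theorem~\ref{thm:additivityGnon-negative2} (strong additivity of GM for non-negative states) gives $G^\infty(\r_{\mathrm{I},\lambda})=G(\r_{\mathrm{I},\lambda})$, and $G(\r_{\mathrm{I},\lambda})=-\log\L^2(\r_{\mathrm{I},\lambda})=\log\frac{d(d+1)}{\lambda d+1}$, which is the claimed formula. I do not anticipate a real obstacle here; the only mildly delicate point is handling the full parameter range $\frac1{d^2}\le\lambda\le1$ rather than just $\lambda\ge\frac1d$, since the raw matrix of $\r_{\mathrm{I},\lambda}$ need not be entrywise non-negative for small $\lambda$ — but this is dispatched exactly as for the maximally correlated Bell-diagonal states, by first conjugating with a simultaneous local Fourier transform, which preserves both GM and its (strong) additivity while turning $\r_{\mathrm{I},\lambda}$ into a non-negative state, so that Theorem~\ref{thm:additivityGnon-negative2} applies.
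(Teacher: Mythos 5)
Your route is the same as the paper's: observe that $\rho_{\mathrm{I},\lambda}$ is non-negative in the computational basis, compute $\Lambda^2(\rho_{\mathrm{I},\lambda})=\frac{\lambda d+1}{d(d+1)}$ by reducing to the maximal product overlap with $\ket{\Psi_0}$ (with $\ket{jj}$ as a closest product state), and invoke theorem~\ref{thm:additivityGnon-negative2}; the overlap computation and the final step are correct. The one point to repair is your hedging about the sub-range $\frac{1}{d^2}\le\lambda<\frac{1}{d}$: writing $\rho_{\mathrm{I},\lambda}=aI+b\proj{\Psi_0}$, the coefficient $b=\lambda-\frac{1-\lambda}{d^2-1}=\frac{\lambda d^2-1}{d^2-1}$ is non-negative precisely when $\lambda\ge\frac{1}{d^2}$, i.e.\ on the \emph{entire} range of the proposition, so the state is already entrywise non-negative everywhere you need it and no local-unitary preprocessing is required. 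That is fortunate, because the fallback you propose would not actually work: a local unitary sends $aI+b\proj{\Psi_0}$ to $aI+b\proj{\Psi'}$ with $\ket{\Psi'}$ still maximally entangled and hence with at least $d$ nonvanishing amplitudes in any product basis, so for $b<0$ and $d\ge3$ two of those amplitudes share a sign and produce a negative off-diagonal entry — unlike the maximally correlated Bell-diagonal case, where the simultaneous Fourier transform genuinely changes the sign structure of the off-diagonal terms. Deleting the parenthetical fallback and correcting the threshold from $\frac{1}{d}$ to $\frac{1}{d^2}$ makes your argument identical to the paper's.
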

The  REE and AREE of the isotropic state were calculated by Rains
\cite{rai99} with the result,
\begin{eqnarray}
\fl E_{\mathrm{R}}^{\infty}(\r_{\mathrm{I},\lambda})=
E_{\mathrm{R}}(\r_{\mathrm{I},\lambda})=\left\{\begin{array}{cc}
                                      0, &
                                    0\leq\lambda\leq\frac{1}{d},\\
\log d  + \l \log \l + (1-\l) \log \frac{1-\l}{ d-1},& \frac{1}{d}\leq\lambda\leq 1.\\
                                    \end{array}\right.
  \end{eqnarray}

To compute LGR of the isotropic state $\r_{\mathrm{I},\lambda}$, let
$\rho^\prime$ be an unnormalized separable state with the minimal
trace such that $\rho^\prime\geq \r_{\mathrm{I},\lambda}$, then
$R_{\mathrm{L}}(\r_{\mathrm{I},\lambda})=\log[\mathrm{tr}(\rho^\prime)]$.
In addition, $\rho^\prime$ can also be chosen to be an isotropic
state. Since the isotropic state $\rho_{\mathrm{I},\lambda}$ is
separable if $0\leq\lambda\leq \frac{1}{d}$, and entangled
otherwise, $\rho^\prime$ can be chosen to be
$\rho^\prime=\rho_{\mathrm{I},\lambda}+\frac{d\lambda-1}{d^2-1}(I-|\Psi_0\rangle\langle\Psi_0|)$.
We thus obtain
\begin{eqnarray}
\quad
R_{\mathrm{L}}(\rho_{\mathrm{I},\lambda})=\left\{\begin{array}{cc}
                                             0,&  0\leq \lambda\leq\frac{1}{d}, \\
                                             \log (d\lambda),&
                                             \frac{1}{d}\leq
                                             \lambda\leq1.
                                           \end{array}\right.
\end{eqnarray}

Now, we focus on  pure three-qubit states as the most simple
multipartite pure states. Recall that any pure three-qubit state can
be turned into the following form via a suitable local unitary
transformation \cite{aac00},
\begin{eqnarray}
|\psi\rangle&=&\lambda_0|000\rangle+\lambda_1\rme^{\rmi\phi}|100\rangle+\lambda_2|101\rangle+\lambda_3|110\rangle+\lambda_4|111\rangle, \nonumber\\
&&\lambda_j\geq0,  \quad\sum_j\lambda_j^2=1, \quad0\leq\phi\leq\pi.
\end{eqnarray}
If $\phi=0$, the resulting four-parameter family of states  are all
non-negative. In that case, according to
theorems~\ref{thm:additivityGnon-negative2} and \ref{thm:AREE},
their GM is strong additive and gives a lower bound for their AREE
and ALGR. The bound for AREE and ALGR is tight for the W state as we
shall see in \sref{sec:add-Dic}.

For generic two-qubit states, previous numerical calculation in
\cite{vp98} found no counterexample to the additivity of REE, while
our numerical calculation found no counterexample to the additivity
of GM. We thus conjecture that both REE and GM are additive for
generic two-qubit states. Note that each bipartite reduced state of
a pure three-qubit state is a rank-two two-qubit state. According to
\eref{eq:ReduceGM}, GM of pure three-qubit states would be additive
if GM of general two-qubit states were additive.

\subsection{\label{sec:add-Dic} Generalized Dicke states}
Generalized Dicke states are also called symmetric basis states;
they are defined in $\mathcal{H}=(\mathbb{C}^d)^{\ox N}$ as follows
\cite{wg03, dicke54},
\begin{eqnarray}
  \label{al:dicke}
  &&\ket{N, \vec{k}} :=
  \frac{1}{ \sqrt{ C_{N,\vec{k}} } }
  \sum_P P
  (
  \ket{
  {
  \overbrace{0, \ldots, 0}^{k_0},
  \overbrace{1, \ldots, 1}^{k_1},
  \ldots,
  \overbrace{d-1, \ldots, d-1}^{k_{d-1}}
  }
  }
  ), \nonumber\\
  &&\vec{k} := (k_0, k_1, \ldots, k_{d-1}), \qquad \sum^{d-1}_{j=0}
  k_j=N.
\end{eqnarray}
Here $\{P\}$ denotes the set of all distinct permutations of the
spins, and $C_{N,\vec{k}} = \frac {N!} { \prod^{d-1}_{j=0} {k_j}! }$
is the normalization factor. When $d\geq N$, the state
$|N,(\overbrace{1,\ldots,1}^{N},\overbrace{0,\ldots,0}^{d-N})\rangle$
is sometimes referred to as the totally symmetric basis state and
written as  $|\psi_{N+}\rangle$ \cite{hmm08}. When $d=2$,
$|N,(k_0,k_1)\rangle$ is called a Dicke state and denoted as
$|N,k_0\rangle$. Dicke states are useful for quantum communication
and quantum networking \cite{kstsw07,pct09}. Some typical Dicke
states have been realized in trapped atomic ions \cite{hcr09}.
Recently, the multiqubit Dicke state with half excitations $|N,
N/2\rangle$ has been employed to implement a scalable quantum search
based on Grover's algorithm by using adiabatic techniques
\cite{iil10}. In view of the fast progress made in experiments,
further theoretical study is required to explore the full potential
of Dicke states.

GM, REE and LGR of the generalized Dicke states have been computed
in \cite{wg03,wei08,hmm08} with the result,
\begin{eqnarray}
\label{eq:DickeREEGM}
  R_{\mathrm{L}}\bigl(|N,\vec{k}\rangle\bigr)= E_{\mathrm{R}}\bigl(|N,\vec{k}\rangle\bigr)=G\bigl(|N,\vec{k}\rangle\bigr)
                                  = -\log\Bigg[\frac{N!}{\prod_{j=0}^{d-1}
k_j!}\prod_{j=0}^{d-1}\Bigl(\frac{k_j}{N}\Bigr)^{k_j}\Bigg].
\end{eqnarray}
In addition, the generalized Dicke states have been proved to
satisfy the conditions (1)-(4) of proposition \ref{ProH} in section
III B of \cite{hmm08}. Since the generalized Dicke states have
non-negative amplitudes, theorem~\ref{thm:additivityGnon-negative2}
and proposition \ref{ProH2} imply that
\begin{eqnarray}
  \label{eq:additivityEsym}
   &&R_{\mathrm{L}}\Biggl(\bigotimes_\alpha |N,\vec{k}_\alpha\rangle\Biggr)
  =E_{\mathrm{R}}\Biggl(\bigotimes_\alpha |N,\vec{k}_\alpha\rangle\Biggr)
  =G\Biggl(\bigotimes_\alpha |N,\vec{k}_\alpha\rangle\Biggr) \nonumber\\
   && =\sum_{\alpha } R_{\mathrm{L}} \Bigl(|N,\vec{k}_\alpha\rangle\Bigr)
    =\sum_{\alpha } E_{\mathrm{R}}\Bigl(|N,\vec{k}_\alpha\rangle\Bigr)
  =\sum_{\alpha } G\Bigl(|N,\vec{k}_\alpha\rangle\Bigr)\nonumber\\
  &&=-\sum_\alpha\log\Bigg[\frac{N!}{\prod_{j=0}^{d_{\alpha}-1}
   k_{\alpha,j}!}\prod_{j=0}^{d_{\alpha}-1}\Bigl(\frac{k_{\alpha,j}}{N}\Bigr)^{k_{\alpha,j}}\Bigg].
\end{eqnarray}
In particular when all states $|N,\vec{k}_\alpha\rangle$ are
identical, we get
\begin{proposition}
  Generalized Dicke states have strong additive GM, additive REE and LGR,
 hence
\begin{eqnarray}
  \label{DickeAREE}
  \fl R_{\mathrm{L}}^\infty\Bigl(|N,\vec{k}\rangle\Bigr)=E_{\mathrm{R}}^\infty\Bigl(|N,\vec{k}\rangle\Bigr)=G^\infty\Bigl(|N,\vec{k}\rangle\Bigr)
    =-\log\Bigg[\frac{N!}{\prod_{j=0}^{d-1}
k_j!}\prod_{j=0}^{d-1}\Bigl(\frac{k_j}{N}\Bigr)^{k_j}\Bigg].
\end{eqnarray}
\end{proposition}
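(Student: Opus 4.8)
The plan is to read the proposition off three ingredients already assembled: the strong additivity of GM for non-negative states (Theorem~\ref{thm:additivityGnon-negative2}), the fact that a generalized Dicke state saturates the chain inequality \eref{eq:REE-GM}, and the bootstrap Proposition~\ref{ProH2}, which turns additivity of GM into additivity of REE and LGR once that saturation holds. In fact \eref{eq:additivityEsym} already carries out the general (possibly non-identical) tensor-product case; the proposition is its specialization to $n$ identical copies followed by passing to the asymptotic limit, so what I would write out is the chain of implications feeding into \eref{eq:additivityEsym}, restricted to $\bigotimes^{n}\ket{N,\vec{k}}$.

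\textbf{GM step.} By \eref{al:dicke}, $\ket{N,\vec{k}}$ is a uniform superposition of computational-basis vectors with the single positive amplitude $C_{N,\vec{k}}^{-1/2}$, hence non-negative in the sense of \sref{sec:additivity}. Theorem~\ref{thm:additivityGnon-negative2} then gives $G\bigl(\ket{N,\vec{k}}\ox\tau\bigr)=G(\ket{N,\vec{k}})+G(\tau)$ for every state $\tau$; in particular $G$ is strong additive on these states and $G\bigl(\ket{N,\vec{k}}^{\ox n}\bigr)=nG(\ket{N,\vec{k}})$, where $G(\ket{N,\vec{k}})$ is the value computed in \cite{wg03} and recorded in \eref{eq:DickeREEGM}.

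\textbf{REE and LGR step.} Since $\ket{N,\vec{k}}$ is pure, $S(\ket{N,\vec{k}})=0$, and, as recalled in \sref{sec:pre}, generalized Dicke states satisfy conditions (1)--(4) of Proposition~\ref{ProH} (proved in \cite{hmm08}). Hence \eref{eq:H1} with $P=\proj{N,\vec{k}}$ and $\mathrm{tr}P=1$ gives $R_{\mathrm{L}}(\ket{N,\vec{k}})=E_{\mathrm{R}}(\ket{N,\vec{k}})=G(\ket{N,\vec{k}})=G(\ket{N,\vec{k}})-S(\ket{N,\vec{k}})$, i.e.\ the chain \eref{eq:REE-GM} is saturated. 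This is exactly the equality hypothesis of Proposition~\ref{ProH2}; applying it with $\rho=\sigma=\ket{N,\vec{k}}$, whose remaining hypothesis $G(\rho\ox\sigma)=G(\rho)+G(\sigma)$ is the GM step, yields additivity of $E_{\mathrm{R}}$ and $R_{\mathrm{L}}$ on two copies. To get $n$ copies I would induct on $m$: $\ket{N,\vec{k}}^{\ox m}$ is again non-negative, so $G\bigl(\ket{N,\vec{k}}^{\ox m}\bigr)=mG(\ket{N,\vec{k}})$ by Theorem~\ref{thm:additivityGnon-negative2}; combined with the induction hypothesis $E_{\mathrm{R}}\bigl(\ket{N,\vec{k}}^{\ox m}\bigr)=R_{\mathrm{L}}\bigl(\ket{N,\vec{k}}^{\ox m}\bigr)=mG(\ket{N,\vec{k}})$ and $S\bigl(\ket{N,\vec{k}}^{\ox m}\bigr)=0$, the $m$-copy state again saturates \eref{eq:REE-GM}, so Proposition~\ref{ProH2} applies to $\ket{N,\vec{k}}^{\ox m}\ox\ket{N,\vec{k}}$ and closes the induction. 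This reproduces \eref{eq:additivityEsym} for $\bigotimes^{n}\ket{N,\vec{k}}$.

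\textbf{Asymptotics.} For each $E\in\{G,E_{\mathrm{R}},R_{\mathrm{L}}\}$ the above gives $E\bigl(\ket{N,\vec{k}}^{\ox n}\bigr)=nE(\ket{N,\vec{k}})$, hence $E^{\infty}(\ket{N,\vec{k}})=\lim_{n\to\infty}\frac{1}{n}E\bigl(\ket{N,\vec{k}}^{\ox n}\bigr)=E(\ket{N,\vec{k}})$, which is \eref{DickeAREE} with the explicit value from \eref{eq:DickeREEGM}. The only mildly delicate point is the induction in the REE/LGR step: Proposition~\ref{ProH2} is a two-state statement, so one must confirm that the saturation $R_{\mathrm{L}}=E_{\mathrm{R}}=G-S$ is inherited by every tensor power before reusing it; this inheritance is automatic from non-negativity of the powers (Theorem~\ref{thm:additivityGnon-negative2}) together with the sandwich \eref{eq:REE-GM}, so there is no genuine obstacle — the proposition is essentially a repackaging of Theorem~\ref{thm:additivityGnon-negative2}, Proposition~\ref{ProH}, and Proposition~\ref{ProH2}.
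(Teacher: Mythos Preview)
Your proof is correct and follows the same route as the paper: non-negativity of $\ket{N,\vec{k}}$ plus Theorem~\ref{thm:additivityGnon-negative2} gives strong additivity of $G$, Proposition~\ref{ProH} supplies the saturation $R_{\mathrm{L}}=E_{\mathrm{R}}=G$, and Proposition~\ref{ProH2} transfers additivity to $E_{\mathrm{R}}$ and $R_{\mathrm{L}}$. The paper simply cites Theorem~\ref{thm:additivityGnon-negative2} and Proposition~\ref{ProH2} to state \eref{eq:additivityEsym} and then specializes; your explicit induction on the number of copies is a welcome clarification of the step the paper leaves implicit.
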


Let  $\widetilde{\rho}_{N,\vec{k}}$ be the ($N-1$)-partite reduced
state of the $N$-partite generalized Dicke state
$|N,\vec{k}\rangle$. Since
$E_{\mathrm{R}}\bigl(|N,\vec{k}\rangle\bigr)=G\bigl(|N,\vec{k}\rangle\bigr)$,
equation  \eref{eq:GMREEreduce} implies that
$E_{\mathrm{R}}\bigl(\widetilde{\rho}_{N,\vec{k}}\bigr)=E_{\mathrm{R}}\bigl(|N,\vec{k}\rangle\bigr)-S\bigl(\widetilde{\rho}_{N,\vec{k}}\bigr)
=E_{\mathrm{R}}\bigl(|N,\vec{k}\rangle\bigr)-H\bigl(\frac{\vec{k}}{N}\bigr)$,
where $H\bigl(\frac{\vec{k}}{N}\bigr)$ is the Shannon entropy. This
equality has already been proved in \cite{wei08} with explicit
calculation. In contrast, our derivation is much simpler and more
general. Finally, since REE of $\widetilde{\rho}_{N,\vec{k}}$ is
also additive, we get the AREE as follows,
\begin{eqnarray}
&&E_{\mathrm{R}}^\infty\bigl(\widetilde{\rho}_{N,\vec{k}}\bigr)=E_{\mathrm{R}}\bigl(\widetilde{\rho}_{N,\vec{k}}\bigr)=-\log\Bigg[\frac{N!}{\prod_{j=0}^{d-1}
k_j!}\prod_{j=0}^{d-1}\Bigl(\frac{k_j}{N}\Bigr)^{k_j}\Bigg]-H\Biggl(\frac{\vec{k}}{N}\Biggr).
\end{eqnarray}
In the case  $N=3$, the above result gives a lower bound for the
entanglement cost of the following two states, respectively: the
two-qubit state
$\frac13(\ket{01}+\ket{10})(\bra{01}+\bra{10})+\frac13\proj{00}$ and
the two-qutrit state $\frac16(\ket{01}+\ket{10})(\bra{01}+\bra{10})+
\frac16(\ket{02}+\ket{20})(\bra{02}+\bra{20})+
\frac16(\ket{21}+\ket{12})(\bra{21}+\bra{12})$.

Another application of our result is to help determine whether two
multipartite pure states can be inter-converted reversibly under
asymptotic LOCC, and help solve the long standing problem about
MREGS \cite{bpr00,AVC}. Consider two tripartite states
$|\psi_1\rangle, \ket{\psi_2}$ over the three parties $A_1, A_2,
A_3$. According to the result of Linden {\it et al} \cite{lpsw99},
reversible transformation between the two states under asymptotic
LOCC would mean the ratio of the AREE
$E_{\mathrm{R}}^\infty(A_1:A_2A_3)$ across the cut $A_1:A_2A_3$ to
the tripartite AREE $E_{\mathrm{R}}^\infty$ is conserved.
\Tref{tab:GHZWSAS} shows the bipartite and tripartite AREE of the
GHZ state, W state, tripartite totally symmetric and antisymmetric
basis states $|\psi_{3\pm}\rangle$ ($|\psi_{3-}\rangle$ is defined
in \eref{eq:abs} in \sref{sec:ass1}) respectively. The inequality
$E_{\mathrm{R}}^{\infty}(\psi_{3-})\geq \log 5$ in the table follows
from \eref{eq:ReduceREE} and the result
$E_{\mathrm{R}}^{\infty}(\mathrm{tr}_{A_1}|\psi_{3-}\rangle\langle\psi_{3-}|)\geq
E_{\mathrm{R},\mathrm{PPT}}^{\infty}(\mathrm{tr}_{A_1}|\psi_{3-}\rangle\langle\psi_{3-}|)=\log\frac{5}{3}$
\cite{aac00}. With these results, it is immediately clear that there
is no reversible transformation between any two states among the
four states.
\begin{table}[h]
  \centering
    \caption{\label{tab:GHZWSAS}   Bipartite (across the cut $A_1:A_2A_3$) and tripartite AREE of the  GHZ state,
   W state,   totally symmetric basis state $|\psi_{3+}\rangle=|3,(1,1,1)\rangle$ and antisymmetric basis state $|\psi_{3-}\rangle$, respectively.
   The ratio of the bipartite AREE to the tripartite AREE listed in the last column of the table
   decreases monotonically
   down the column, which implies that there is no reversible transformation  between any two of the four states under asymptotic LOCC. }
  \begin{math}
  \begin{array}{c c c c}
  \br
    \mbox{states} &E_{\mathrm{R}}^\infty(A_1:A_2A_3)  & E_{\mathrm{R}}^\infty&  \mbox{ratio} \\
    \hline
    \mathrm{GHZ} & 1 & 1 & 1  \\
    \mathrm{W} & \frac{1}{3}\log\frac{27}{4} & \log\frac{9}{4} & 0.7849  \\
    |\psi_{3+}\rangle & \log 3 & \log\frac{9}{2} &  0.7304 \\
  |\psi_{3-}\rangle & \log 3 & \geq \log 5 &\leq 0.6826\\
  \br
  \end{array}
  \end{math}

\end{table}

Similar argument can be used to show that the transformation between
the $N$-partite GHZ state and any $N$-partite symmetric basis state
is not reversible. Also, the transformation between two symmetric
basis states is  generally not reversible if they cannot be
converted into each other by a permutation of the  kets in the
computational basis.

\subsection{\label{sec:add-Mix}  Mixture of Dicke states}
Next, we consider the mixture of Dicke states
\begin{eqnarray}
\rho(\{p_{k} \}):=\sum_{k}p_{k} |N,k\rangle\langle N,k|.
\end{eqnarray}
 REE of these states has been derived by Wei
\cite{weg04, wei08}. We shall give a  lower bound for AREE of these
states based on the relation between REE and GM. Similar techniques
can also be applied to  the mixture of generalized Dicke states. The
lower bound can often be improved if the convexity of AREE is taken
into account, as we shall see shortly. For simplicity, we illustrate
our method with the mixture of two Dicke states. Following
\cite{weg04,wei08}, define
\begin{eqnarray}
\fl\rho_{N;k_1,k_2}(s):=s|N,k_1\rangle\langle
N,k_1|+(1-s)|N,k_2\rangle\langle N,k_2|, \quad 0\leq s\leq 1, \quad
k_1<k_2.
\end{eqnarray}
Since the mixture of Dicke states is both symmetrical and
non-negative,  corollary 5 in \cite{hkw09} (see also
proposition~\ref{pro:symmetric}) and lemma~\ref{le:closest2} implies
that the closest product state to $\rho_{N;k_1,k_2}(s)$ can be
chosen to be of the form
$\ket{\varphi_N}=\bigl(\cos\theta|0\rangle+\sin\theta|1\rangle\bigr)^{\otimes
N}$ with $0\leq \theta\leq \frac{\pi}{2}$.
\begin{eqnarray}
\fl
\Lambda^2(\rho_{N;k_1,k_2}(s))=\max_{\theta}\langle\varphi_N|\rho_{N;k_1,k_2}(s)|\varphi_N\rangle\nonumber
\\=\max_{\theta}
\Biggl[s{N \choose k_1}\cos^{2k_1}\theta\sin^{2N-2k_1}\theta+(1-s){N
\choose k_2}\cos^{2k_2}\theta\sin^{2N-2k_2}\theta\Biggr].
\end{eqnarray}
The maximization over $\theta$ is easy to carry out; for example,
let $x=\cos^2\theta$, the extremal condition  leads to a
$(k_2-k_1+1)$-order polynomial equation in $x$, which can be solved
straightforwardly . In particular, this equation  can be solved
analytically if $k_2-k_1\leq3$. Since $\rho_{N;k_1,k_2}(s)$ is
non-negative, according to
theorems~\ref{thm:additivityGnon-negative2} and \ref{thm:AREE},
$G\bigl(\rho_{N;k_1,k_2}(s)\bigr)$ is strong additive, and
$E_{\mathrm{R}}^\infty\bigl(\rho_{N;k_1,k_2}(s)\bigr)$ is lower
bounded by
$G\bigl(\rho_{N;k_1,k_2}(s)\bigr)-S\bigl(\rho_{N;k_1,k_2}(s)\bigr)$.

\begin{figure}
  \includegraphics[width=5.2cm]{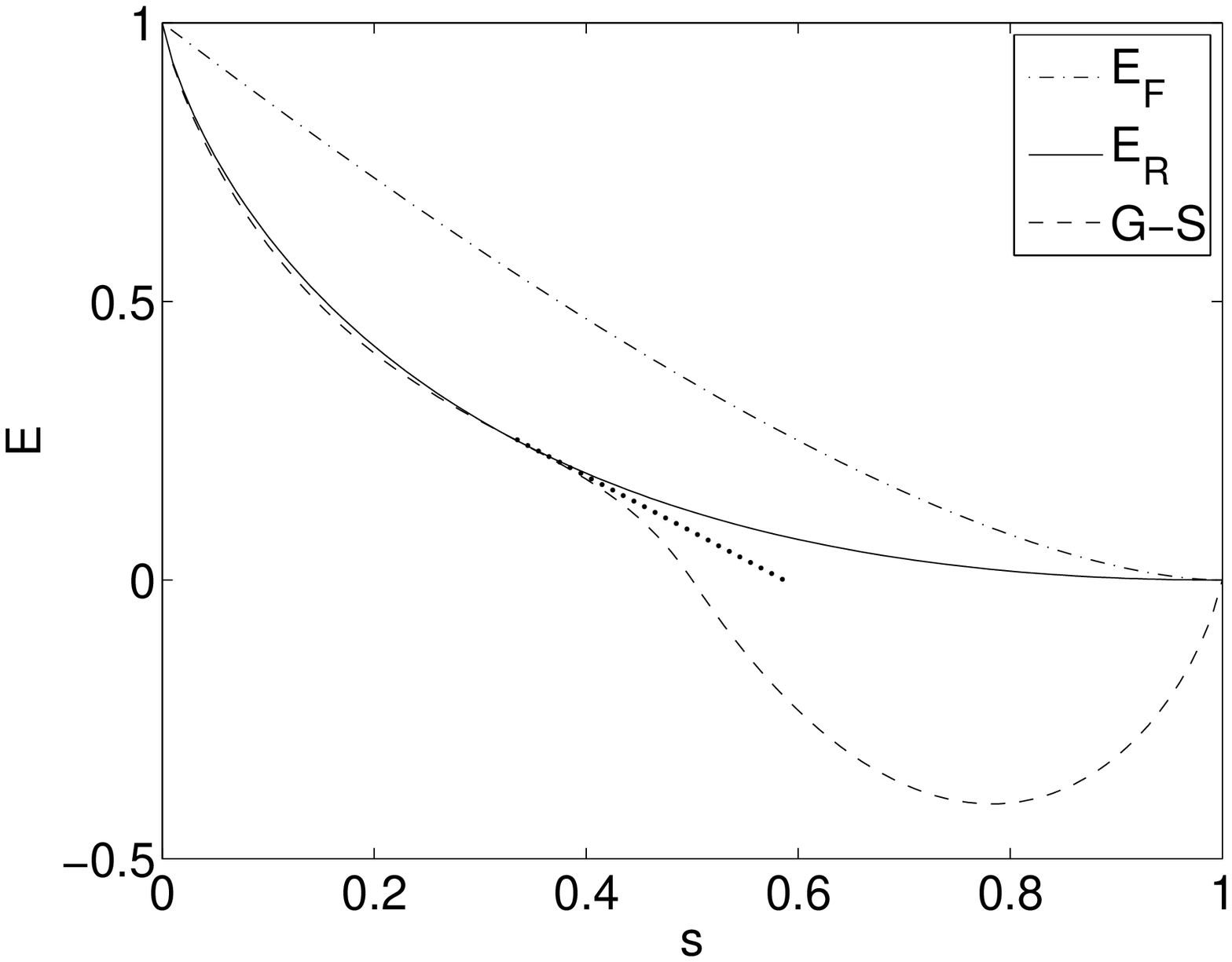}
   \includegraphics[width=5.2cm]{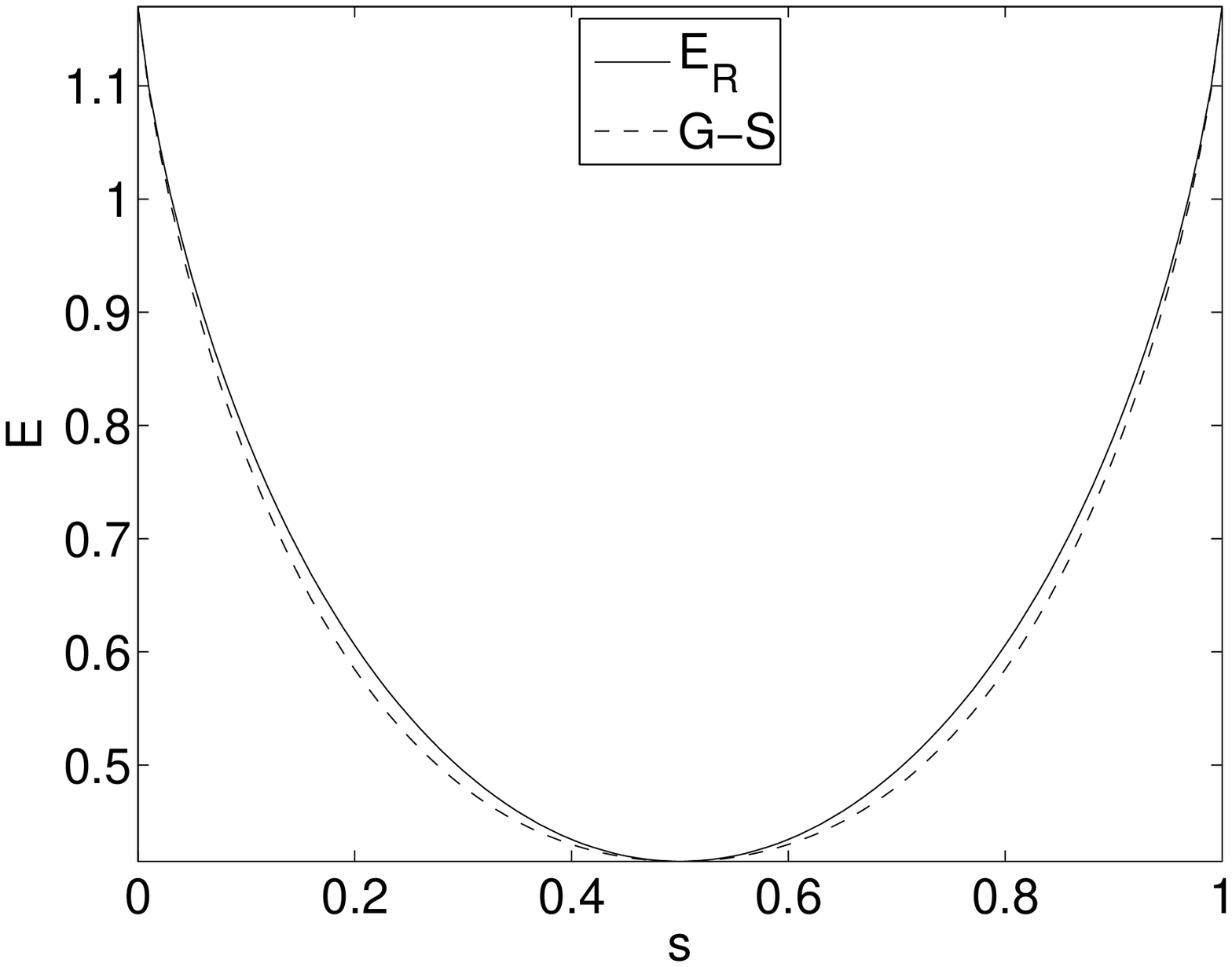}
    \includegraphics[width=5.2cm]{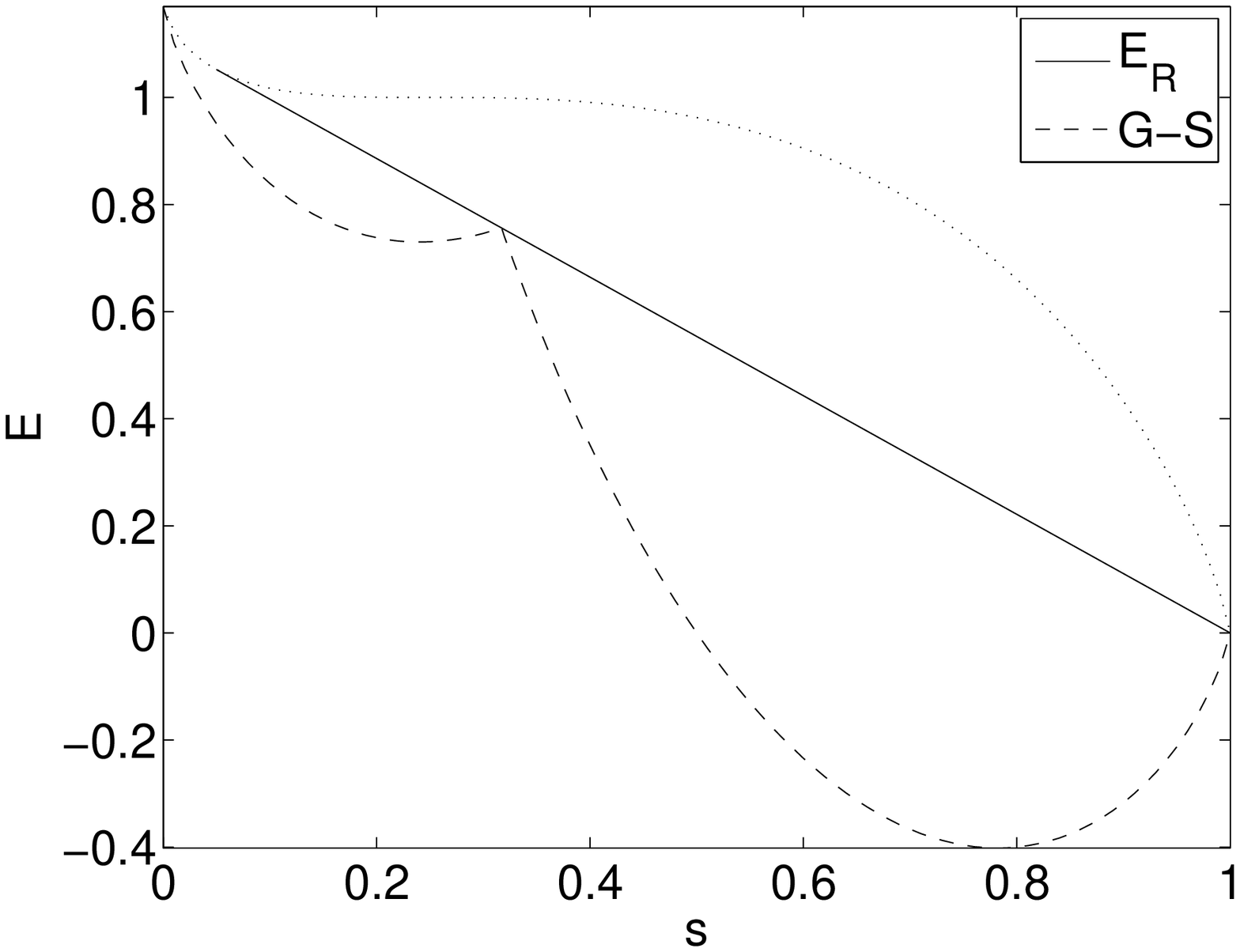}
  \caption{\label{fig:DickeMix}   REE and lower bound for AREE
  given by $G-S$ of three families of states,
  $\rho_{2;0,1}(s)$ (left plot),
$\rho_{3;1,2}(s)$ (middle plot), and $\rho_{3;0,2}(s)$ (right plot),
respectively, see \eref{eq:DickeMix} for the definitions of these
states. In the left plot, entanglement of formation $E_{\mathrm{F}}$
is also plotted for comparison; the dotted line is the improved
lower bound for AREE after taking the convexity into account. In the
right plot, REE is obtained by convex roof construction from the
dotted curve \cite{wei08}. After taking the convexity of AREE into
account, the lower bound for  AREE  derived from $G-S$ is almost
equal to REE.}
\end{figure}

\Fref{fig:DickeMix} illustrates $E_{\mathrm{R}}$ (REE is given by
theorem 1 of Wei \cite{wei08}) and $G-S$ for the following three
families of states:
\begin{eqnarray}
\label{eq:DickeMix}
&&\rho_{2;0,1}(s)=s|11\rangle\langle11|+(1-s)|\Psi_2\rangle\langle\Psi_2|,\nonumber\\
&&\rho_{3;1,2}(s)=s|\widetilde{\mathrm{W}}
\rangle\langle\widetilde{\mathrm{W}}|+(1-s)|\mathrm{W}\rangle\langle\mathrm{W}|,\nonumber\\
&&\rho_{3;0,2}(s)=s|111\rangle\langle111|+(1-s)|\mathrm{W}\rangle\langle\mathrm{W}|,
\end{eqnarray}
where $|\Psi_2\rangle=\frac{1}{\sqrt{2}}(|01\rangle+|10\rangle)$,
$\mathrm{W}=\frac{1}{\sqrt{3}}(|100\rangle+|010\rangle+|001\rangle)$,
and
$|\widetilde{\mathrm{W}}\rangle=\frac{1}{\sqrt{3}}(|011\rangle+|101\rangle+|110\rangle)$.
For the first family of states $\rho_{2;0,1}(s)$ (left plot of
\fref{fig:DickeMix}), $G-S$ gives a very good lower bound for AREE
when $0\leq s\leq 0.4$. The bound is tight at $s=\frac{1}{3}$, since
$\rho_{2;0,1}(\frac{1}{3})$ is the bipartite  reduced state of the
Dicke state $|\widetilde{\mathrm{W}}
\rangle\langle\widetilde{\mathrm{W}}|$. Taking the convexity of AREE
into account, we can raise the lower bound for $s>\frac{1}{3}$ to
the one represented by the dotted line, which is tangent to both the
curve $E_{\mathrm{R}}(s)$ and the curve $G(s)-S(s)$ at
$s=\frac{1}{3}$. In addition, $G-S$   is a lower bound for
entanglement cost. For $\rho_{3;1,2}(s)$ (middle plot), the bound is
very good in the whole parameter region. The bound is tight at
$s=\frac{1}{2}$, since $\rho_{3;1,2}(\frac{1}{2})$ is the tripartite
reduced state of the Dicke state $|4,2\rangle\langle4,2|$.  For
$\rho_{3;0,2}(s)$ (right plot), REE is obtained by convex roof
construction from the dotted curve as described in \cite{wei08}. The
lower bound for AREE given by $G(s)-S(s)$ does not look very good at
first glance. However, taking the convexity of AREE into account, we
can obtain a  lower bound for AREE which is very close  to REE for
almost entire family of states $\rho_{3;0,2}(s)$.

\subsection{\label{sec:add-Smo}  The Smolin state}

The Smolin state  is a four-qubit unlockable bound entangled state,
from which no pure entanglement can be distilled under LOCC.
However, if any two of the four parties come together, they can
create a singlet between the other two parties \cite{Smolin01}. The
Smolin state can be expressed in several equivalent forms, one of
which is
\begin{eqnarray}
\rho_{ABCD}=\frac{1}{4}\sum_{j=0}^3\bigl(|\Psi_j\rangle\langle\Psi_j|)_{AB}\otimes(|\Psi_j\rangle\langle\Psi_j|\bigr)_{CD},
\end{eqnarray}
where $|\Psi_j\rangle$s are the four Bell states
$\frac{1}{\sqrt{2}}(|00\rangle\pm|11\rangle)$ and
$\frac{1}{\sqrt{2}}(|01\rangle\pm|10\rangle)$. It can also be
written in a more symmetric form
\begin{eqnarray}
\label{eq:Smolin}
\rho_{ABCD}=\frac{1}{16}\Bigl(I^{\otimes4}+\sum_{j=1}^3\sigma_j^A\otimes\sigma_j^B\otimes\sigma_j^C\otimes\sigma_j^D\Bigr),
\end{eqnarray}
which clearly shows that it is permutation invariant and
non-negative.

Since its discovery, the Smolin state has found many applications,
such as remote information concentration \cite{mv01},
superactivation \cite{sst03}, and multiparty secret sharing
\cite{ah06}. It can maximally violate a two-setting Bell inequality
similar to the CHSH inequality \cite{ah06b}. It was also used to
show that four orthogonal Bell states cannot be discriminated
locally even probabilistically \cite{gkrss01}. Recently,  Amselem
and Bourennane have realized the Smolin state in experiments with
polarized photons and characterized its entanglement properties
\cite{ab09}. Similar experiments were performed later by several
other groups \cite{lkp10,bsg10}. Hence, it is desirable to quantify
the amount of entanglement in the Smolin state.

The multipartite REE of the Smolin state has been derived by Murao
and Vedral \cite{mv01} and by Wei {\it et al} \cite{wag04}, with the
result $E_{\mathrm{R}}(\rho_{ABCD})=1$. The  derivation in
\cite{wag04} relies on the following alternative representation of
the Smolin state, which again shows that it  is non-negative,
\begin{eqnarray}
\rho_{ABCD}=\frac{1}{4}\sum_{j=0}^3|X_j\rangle\langle X_j|,
\end{eqnarray}
with
\begin{eqnarray*}
|X_0\rangle=\frac{1}{\sqrt{2}}(|0000\rangle+|1111\rangle),\qquad
|X_1\rangle=\frac{1}{\sqrt{2}}(|0011\rangle+|1100\rangle),\\
|X_2\rangle=\frac{1}{\sqrt{2}}(|0101\rangle+|1010\rangle),\qquad
|X_3\rangle=\frac{1}{\sqrt{2}}(|0110\rangle+|1001\rangle).
\end{eqnarray*}
They also give a closest separable state to $\rho_{ABCD}$, which
reads
\begin{eqnarray}
\rho_{\mathrm{sep}}&=&\frac{1}{8}\bigl(|0000\rangle\langle0000|+|1111\rangle\langle1111|+|0011\rangle\langle0011|+|1100\rangle\langle1100|\nonumber\\
&&+|0101\rangle\langle0101|
{}+|1010\rangle\langle1010|+|0110\rangle\langle0110|+|1001\rangle\langle1001|\bigl).
\end{eqnarray}
Note that
$\rho_{\mathrm{sep}}=\frac{1}{2}(\rho_{ABCD}+\rho_{\perp})$, where
$\rho_{\perp}$ is orthogonal to $\rho_{ABCD}$, hence
$R_{\mathrm{L}}(\rho_{ABCD})\leq1$ according to \eref{eq:GR} and
\eref{eq:LGR}. Since $R_{\mathrm{L}}(\rho_{ABCD})\geq
E_{\mathrm{R}}(\rho_{ABCD})=1$, we get
$R_{\mathrm{L}}(\rho_{ABCD})=1$.

To compute GM of the Smolin state, note that the closest product
state to $\rho_{ABCD}$ can be chosen to be  non-negative, according
to lemma~\ref{le:closest2}. Suppose
$|\varphi_4\rangle=\bigotimes_{j=1}^4(c_j|0\rangle+s_j|1\rangle)$ is
a closest product state, where $c_j=\cos\theta_j$,
$s_j=\sin\theta_j$ with $0\leq\theta_j\leq\frac{\pi}{2}$ for
$j=1,2,3,4$.
\begin{eqnarray}
\Lambda^2(\rho_{ABCD})&=&\langle\varphi_4|\rho_{ABCD}|\varphi_4\rangle\nonumber\\
&=&\frac{1}{8}\Bigl[(c_1c_2c_3c_4+s_1s_2s_3s_4)^2+(c_1c_2s_3s_4+s_1s_2c_3c_4)^2\nonumber\\
&&{}+(c_1s_2c_3s_4+s_1c_2s_3c_4)^2+(c_1s_2s_3c_4+s_1c_2c_3s_4)^2\Bigr]\leq\frac{1}{8},
\end{eqnarray}
where the last inequality was derived in \cite{wag04}. The same
result can also be obtained with the approach presented in
~\cite{lkp10}. Since
$\Lambda^2(\rho_{ABCD})\geq\langle0000|\rho_{ABCD}|0000\rangle
=\frac{1}{8}$, we thus  obtain $\Lambda^2(\rho_{ABCD})=\frac{1}{8}$
and $G(\rho_{ABCD})=3$. Note that $S(\rho_{ABCD})=2$,
$R_{\mathrm{L}}(\rho_{ABCD})=E_{\mathrm{R}}(\rho_{ABCD})=G(\rho_{ABCD})-S(\rho_{ABCD})$,
and $\rho_{ABCD}$ is non-negative. According to
theorems~\ref{thm:additivityGnon-negative2} and \ref{thm:AREE}, we
have
\begin{proposition}
  The Smolin state has strong additive GM, additive REE and LGR, and
  thus
\begin{eqnarray}
  G^\infty(\rho_{ABCD})  &=& G(\rho_{ABCD})  =3,\nonumber\\
 E_{\mathrm{R}}^\infty(\rho_{ABCD})&=& E_{\mathrm{R}}(\rho_{ABCD})=1, \nonumber\\
  R_{\mathrm{L}}^\infty(\rho_{ABCD})&=& R_{\mathrm{L}}(\rho_{ABCD})=1.
\end{eqnarray}
\end{proposition}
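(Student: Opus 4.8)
The plan is to obtain the proposition directly from the quantitative facts assembled in the paragraph that precedes it, fed into Theorems~\ref{thm:additivityGnon-negative2} and \ref{thm:AREE}. At this point we already have in hand: $\rho_{ABCD}$ is non-negative (from the explicit forms \eref{eq:Smolin} and the $|X_j\rangle$ decomposition); $\Lambda^2(\rho_{ABCD}) = \frac18$, hence $G(\rho_{ABCD}) = 3$; $S(\rho_{ABCD}) = 2$ (the Smolin state being $\frac14$ times a rank-$4$ projector); $E_{\mathrm{R}}(\rho_{ABCD}) = 1$ (Murao--Vedral and Wei \emph{et al.}); and $R_{\mathrm{L}}(\rho_{ABCD}) = 1$ (from the separable state $\rho_{\mathrm{sep}} = \frac12(\rho_{ABCD}+\rho_\perp)$ together with $R_{\mathrm{L}} \ge E_{\mathrm{R}}$). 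Since $3 - 2 = 1$, these imply the crucial chain of \emph{equalities} $R_{\mathrm{L}}(\rho_{ABCD}) = E_{\mathrm{R}}(\rho_{ABCD}) = G(\rho_{ABCD}) - S(\rho_{ABCD})$, all three equal to $1$; note that \eref{eq:REE-GM} only guarantees a priori the inequality $E_{\mathrm{R}} \ge G - S$, so it is this on-the-nose equality that will do the real work.

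First I would treat GM: since $\rho_{ABCD}$ is non-negative, Theorem~\ref{thm:additivityGnon-negative2} gives that its GM is strong additive, so in particular $G^\infty(\rho_{ABCD}) = G(\rho_{ABCD}) = 3$ (also immediate from Corollary~\ref{thm:AGM}). Then I would invoke Theorem~\ref{thm:AREE} with $\rho = \rho_{ABCD}$: non-negativity already gives $R_{\mathrm{L}}^\infty(\rho_{ABCD}) \ge E_{\mathrm{R}}^\infty(\rho_{ABCD}) \ge G(\rho_{ABCD}) - S(\rho_{ABCD}) = 1$, and the tightness clauses apply because $E_{\mathrm{R}}(\rho_{ABCD}) = G(\rho_{ABCD}) - S(\rho_{ABCD})$ and $R_{\mathrm{L}}(\rho_{ABCD}) = E_{\mathrm{R}}(\rho_{ABCD}) = G(\rho_{ABCD}) - S(\rho_{ABCD})$; hence $E_{\mathrm{R}}^\infty(\rho_{ABCD}) = 1 = E_{\mathrm{R}}(\rho_{ABCD})$ and $R_{\mathrm{L}}^\infty(\rho_{ABCD}) = 1 = R_{\mathrm{L}}(\rho_{ABCD})$, i.e.\ REE and LGR are additive with the stated values. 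As an alternative for the REE/LGR part one could, once strong additivity of GM is established, apply Proposition~\ref{ProH2} with $\sigma = \rho_{ABCD}^{\otimes(n-1)}$ and induct on $n$; I would keep this only as a remark, since the route through Theorem~\ref{thm:AREE} is shorter.

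Because all the genuinely nontrivial inputs — the closest product state computation giving $\Lambda^2 = \frac18$, the value $S = 2$, and the known $E_{\mathrm{R}} = 1$ and $R_{\mathrm{L}} = 1$ — have already been carried out in the text above, there is no real obstacle left in this proposition: it is a matter of checking that the hypotheses of Theorems~\ref{thm:additivityGnon-negative2} and \ref{thm:AREE} are met. If I had to name the one step that matters, it is confirming $E_{\mathrm{R}}(\rho_{ABCD}) = G(\rho_{ABCD}) - S(\rho_{ABCD})$ as an exact equality rather than a strict inequality, since this is precisely what unlocks the AREE/ALGR tightness clauses; and this is immediate from $1 = 3 - 2$.
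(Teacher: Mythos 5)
Your proposal is correct and follows essentially the same route as the paper: the paper also assembles the facts $G=3$, $S=2$, $E_{\mathrm{R}}=R_{\mathrm{L}}=1=G-S$, notes the non-negativity of $\rho_{ABCD}$, and then invokes theorems~\ref{thm:additivityGnon-negative2} and \ref{thm:AREE} (with their tightness clauses) to conclude. Your closing observation that the exact equality $E_{\mathrm{R}}=G-S$ is the step that unlocks the additivity of REE and LGR correctly identifies the crux.
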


The additivity of REE of  the  Smolin state can  also be derived in
an alternative way by first considering REE under the bipartite cut
$A:BCD$ \cite{wag04}. Since  every pure state in the support of
$\rho_{A:BCD}$ is maximally entangled,  the entanglement of
formation of the state is given by $E_{\mathrm{F}}(\rho_{A:BCD})=1$.
On the other hand $E_{\mathrm{D}}(\rho_{A:BCD})\geq1$, where
$E_{\mathrm{D}}$ denotes entanglement of distillation, because a
singlet can be distilled from the Smolin state when any two of the
four parties come together. From the chain of inequalities,
$E_{\mathrm{F}}(\rho_{A:BCD})\geq E_{\mathrm{c}}(\rho_{A:BCD})\geq
E_{\mathrm{R}}^\infty(\rho_{A:BCD})\geq
E_{\mathrm{D}}(\rho_{A:BCD})$, it follows that
$E_{\mathrm{c}}(\rho_{A:BCD})=E_{\mathrm{R}}^\infty(\rho_{A:BCD})=
E_{\mathrm{D}}(\rho_{A:BCD})=1$. The additivity of REE then follows
from the following chain of inequalities: $1\leq
E_{\mathrm{R}}^\infty(\rho_{A:BCD})\leq
E_{\mathrm{R}}^\infty(\rho_{ABCD})\leq
E_{\mathrm{R}}(\rho_{ABCD})=1$.

Recall that, under asymptotic non-entangling operations, state
transformation can be made reversible, and AREE determines  the
transformation rate \cite{bp08}. Hence, the Smolin state and the
four-qubit GHZ state can be  converted into each other reversibly
under these operations.

\subsection{\label{sec:add-Dur}  D\"ur's multipartite  entangled states}

D\"ur's multipartite bound entangled state $\rho_N$ was found in
search of the relation between distillability of multipartite
entangled states and violation of Bell's inequality \cite{dur01}.
\begin{eqnarray}
\label{eq:Dur}
\rho_N=\frac{1}{N+1}\biggl[|\Psi_G\rangle\langle\Psi_G|
+\frac{1}{2}\sum_{k=1}^N\bigl(P_k+\bar{P}_k\bigr)\biggr],
\end{eqnarray}
where $|\Psi_G\rangle=\frac{1}{\sqrt{2}}\bigl(|0^{\otimes
N}\rangle+\rme^{\rmi\alpha_N}|1^{\otimes N}\rangle\bigr)$ is the
$N$-partite GHZ state, $P_k$ is the projector onto the product state
$|u_k\rangle=|0\rangle_{A_1}|0\rangle_{A_2}\cdots
|1\rangle_{A_k}\cdots|0\rangle_{A_N}$, and $\bar{P}_k$ is the
projector onto the product state
$|v_k\rangle=|1\rangle_{A_1}|1\rangle_{A_2}\cdots
|0\rangle_{A_k}\cdots|1\rangle_{A_N}$. D\"ur has shown that, for
$N\geq4$, the state in \eref{eq:Dur} is bound entangled and, for
$N\geq8$, it violates two-setting Mermin-Klyshko-Bell inequality
\cite{dur01}.  Since the phase factor $\rme^{\rmi\alpha_N}$ can be
absorbed by redefining the computational basis, we may assume
$\rme^{\rmi\alpha_N}=1$ without loss of generality. It is then clear
that  $\rho_N$ is non-negative. In the following discussion, we
assume $N\geq4$.

Wei {\it et al} \cite{wag04} have generalized D\"ur's multipartite
bound entangled state to the following family of states:
\begin{eqnarray}\label{eq:Dur2}
\rho_N(x)=x|\Psi_G\rangle\langle\Psi_G|
+\frac{1-x}{2N}\sum_{k=1}^N\bigl(P_k+\bar{P}_k\bigr),
\end{eqnarray}
and shown that the state is bound entangled if $0\leq
x\leq\frac{1}{N+1}$ and free entangled if $\frac{1}{N+1}< x\leq1$.
Moreover, they had conjectured REE of this state to be
\begin{eqnarray}
\label{eq:DurREE}
 E_{\mathrm{R}}[\rho_N(x)]=x \quad\mbox{for} \quad N\geq4,
\end{eqnarray}
which was later proved in \cite{wei08}.

We shall show that REE of $\rho_N(x)$ is  additive by first showing
that REE of   $\rho_N=\rho_N(\frac{1}{N+1})$ is additive, and then
extending the result to the whole family of states via the convexity
of AREE. Note that $\rho_N(x)$ is a convex combination of
$\rho_N(0)$ and $\rho_N(1)$, that is,
$\rho_N(x)=x\rho_N(1)+(1-x)\rho_N(0)$.

Since  $\rho_N(x)$ is non-negative, its closest product state can be
chosen to be non-negative,  according to lemma~\ref{le:closest2}.
Let
$|\varphi_N\rangle=\bigotimes_{j=1}^N(c_j|0\rangle+s_j|1\rangle)$ be
a closest product state, where $c_j=\cos\theta_j$,
$s_j=\sin\theta_j$ with $0\leq\theta_j\leq\frac{\pi}{2}$ for
$j=1,2,\ldots,N$.
\begin{eqnarray}
\fl\Lambda^2[\rho_N(x)]&=& \max_{|\varphi_N\rangle}
\langle\varphi_N|\rho_N(x)|\varphi_N\rangle,\nonumber\\
\fl&=& \max_{\theta_1,\ldots,\theta_N}
\biggl\{\frac{x}{2}(c_1\cdots c_N+s_1\cdots s_N)^2\nonumber\\
\fl&&+\frac{1-x}{2N}\sum_{k=1}^N\bigl[(c_1\cdots c_{k-1}s_k
c_{k+1}\cdots c_N)^2+(s_1\cdots s_{k-1}c_k s_{k+1}\cdots
s_N)^2\bigr]\biggr\}.
\end{eqnarray}
When $x=\frac{1}{N+1}$, $\rho_N(x)=\rho_N$, according to Appendix A
in \cite{wag04}, $\Lambda^2(\rho_N)\leq\frac{1}{2(N+1)}$; since the
product state $|0^{\otimes N}\rangle$ achieves this bound, it
follows that $\Lambda^2(\rho_N)=\frac{1}{2(N+1)}$ and thus
$G(\rho_N)=\log2(N+1)$. In addition,  $|u_k\rangle, |v_k\rangle,
\forall k$ and $|0^{\otimes N}\rangle, |1^{\otimes N}\rangle $ are
 closest product states to $\rho_N$. When $0\leq x \leq
\frac{1}{N+1}$, $|u_k\rangle, |v_k\rangle, \forall k$ are still
closest product states to $\rho_N(x)$; by contrast, when
$\frac{1}{N+1}\leq x\leq1$, $|0^{\otimes N}\rangle, |1^{\otimes
N}\rangle $ are still closest product states to $\rho_N(x)$. So, for
$N\geq4$, we obtain
\begin{eqnarray}
\Lambda^2[\rho_N(x)]=\left\{\begin{array}{cc}
                            \frac{1-x}{2N}, &0\leq x\leq \frac{1}{N+1}, \\
                            \frac{x}{2},& \frac{1}{N+1}\leq x\leq 1.
                          \end{array}\right.
\end{eqnarray}

\begin{figure}
  \centering
  \includegraphics[width=8cm]{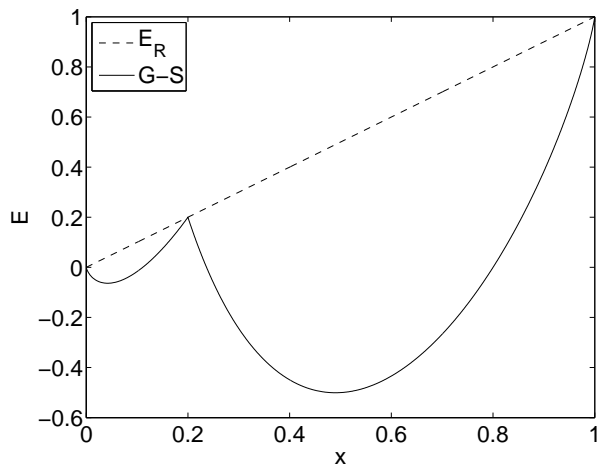}\\
    \caption{\label{fig:Dur}   REE  and lower bound for AREE given by
  $G-S$ of D\"ur's multipartite
  entangled state $\rho_N(x)$ with $N=4$.
 The lower bound is tight only at three points $x=0, \frac{1}{N+1}, 1$, but it is enough to infer the additivity of REE
due to the convexity of  AREE. }
\end{figure}

Meanwhile, according to theorems~\ref{thm:additivityGnon-negative2}
and \ref{thm:AREE}, GM of D\"ur's multipartite  entangled states is
strong additive, and $G[\rho_N(x)]-S[\rho_N(x)]$ gives a lower bound
for $E_{\mathrm{R}}^\infty[\rho_N(x)]$. The bound is tight at the
following three points $x=0, 1, \frac{1}{N+1}$, hence
$E_{\mathrm{R}}^\infty[\rho_N(0)\bigr]=E_{\mathrm{R}}\bigl[\rho_N(0)\bigr]=0$,
$E_{\mathrm{R}}^\infty\bigl[\rho_N(1)\bigr]=E_{\mathrm{R}}\bigl[\rho_N(1)\bigr]=1$,
$E_{\mathrm{R}}^\infty(\rho_N)=E_{\mathrm{R}}(\rho_N)=\frac{1}{N+1}$,
(see also \fref{fig:Dur}). Although the bound is in general not
tight, from the convexity of AREE, we can already conclude that
$E_{\mathrm{R}}^\infty[\rho_N(x)]=x$.
\begin{proposition}
  GM  and  REE of  D\"ur's multipartite entangled state $\rho_N(x)$ with $N\geq4$ are strong additive  and additive, respectively,
  and thus
\begin{eqnarray}
G^\infty[\rho_N(x)]     &=& G[\rho_N(x)]=\left\{\begin{array}{cc}
                            -\log\frac{1-x}{2N}, &0\leq x\leq \frac{1}{N+1}, \\
                           -\log \frac{x}{2},& \frac{1}{N+1}\leq x\leq
                           1.
                          \end{array}\right.\nonumber\\
  E_{\mathrm{R}}^\infty[\rho_N(x)] &=& E_{\mathrm{R}}[\rho_N(x)]=x.
\end{eqnarray}
\end{proposition}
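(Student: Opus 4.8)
The plan is to settle the GM part first via Theorem~\ref{thm:additivityGnon-negative2} and then bootstrap to REE through the lower bound of Theorem~\ref{thm:AREE} together with the convexity of the asymptotic relative entropy. Since the phase $\rme^{\rmi\alpha_N}$ can be absorbed into the computational basis, the state $\rho_N(x)$ of \eref{eq:Dur2} is non-negative, so Theorem~\ref{thm:additivityGnon-negative2} immediately yields that $G[\rho_N(x)]$ is strong additive and hence $G^\infty[\rho_N(x)]=G[\rho_N(x)]$. To obtain the closed form of $G$, I would use Lemma~\ref{le:closest2} to restrict the search for a closest product state to the non-negative ones $|\varphi_N\rangle=\bigotimes_{j=1}^N(\cos\theta_j|0\rangle+\sin\theta_j|1\rangle)$, $\theta_j\in[0,\pi/2]$, and maximise $\langle\varphi_N|\rho_N(x)|\varphi_N\rangle$, which gives the expression already displayed above. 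At the distinguished point $x=\frac{1}{N+1}$ one invokes the estimate of Appendix~A of \cite{wag04} to get $\Lambda^2(\rho_N)\le\frac{1}{2(N+1)}$, which is saturated by $|0^{\otimes N}\rangle$, so $\Lambda^2(\rho_N)=\frac{1}{2(N+1)}$, and in fact each of $|u_k\rangle$, $|v_k\rangle$, $|0^{\otimes N}\rangle$, $|1^{\otimes N}\rangle$ is then a closest product state. One then checks that for $0\le x\le\frac{1}{N+1}$ the states $|u_k\rangle,|v_k\rangle$ stay optimal (value $\frac{1-x}{2N}$) while for $\frac{1}{N+1}\le x\le1$ the states $|0^{\otimes N}\rangle,|1^{\otimes N}\rangle$ stay optimal (value $\frac{x}{2}$), giving the stated piecewise formula for $G[\rho_N(x)]$.

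For REE, Theorem~\ref{thm:AREE} together with additivity of the von Neumann entropy gives $E_{\mathrm{R}}^\infty[\rho_N(x)]\ge G[\rho_N(x)]-S[\rho_N(x)]$, while the known value $E_{\mathrm{R}}[\rho_N(x)]=x$ from \cite{wei08} supplies the matching upper bound $E_{\mathrm{R}}^\infty[\rho_N(x)]\le E_{\mathrm{R}}[\rho_N(x)]=x$. The next step is to verify that the lower bound is tight at the three points $x=0,\frac{1}{N+1},1$: at $x=0$ the state is separable; at $x=1$ it is the $N$-partite GHZ state, for which $S=0$ and $G=1$; and at $x=\frac{1}{N+1}$ a direct diagonalisation (eigenvalue $\frac{1}{N+1}$ on $|\Psi_G\rangle$ and $\frac{1}{2(N+1)}$ on each of the $2N$ product vectors $|u_k\rangle,|v_k\rangle$) gives $S(\rho_N)=\log(N+1)+\frac{N}{N+1}$, hence $G(\rho_N)-S(\rho_N)=\bigl(1+\log(N+1)\bigr)-\bigl(\log(N+1)+\frac{N}{N+1}\bigr)=\frac{1}{N+1}$. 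Thus $E_{\mathrm{R}}^\infty[\rho_N(x)]=x$ at $x=0,\frac{1}{N+1},1$.

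Finally, since $\rho_N(x)=x\rho_N(1)+(1-x)\rho_N(0)$ is affine in $x$ and $E_{\mathrm{R}}^\infty$ is convex on the set of states, the map $x\mapsto E_{\mathrm{R}}^\infty[\rho_N(x)]$ is a convex function on $[0,1]$; being everywhere $\le x$ and equal to $x$ at $0,\frac{1}{N+1},1$, it must coincide with $x$ throughout $[0,1]$. Indeed, for $x\in(0,\frac{1}{N+1})$ the slope of this convex function from $x$ to $\frac{1}{N+1}$ cannot exceed its slope from $\frac{1}{N+1}$ to $1$, which equals $1$, forcing $E_{\mathrm{R}}^\infty[\rho_N(x)]\ge x$ and hence equality; then, knowing the function already equals $x$ on $[0,\frac{1}{N+1}]$, the analogous slope comparison from $\frac{1}{N+1}$ to $x$ against the slope $1$ from $0$ to $\frac{1}{N+1}$ handles $x\in(\frac{1}{N+1},1)$. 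This gives $E_{\mathrm{R}}^\infty[\rho_N(x)]=E_{\mathrm{R}}[\rho_N(x)]=x$, i.e. additivity of REE, completing the proof. I expect the main obstacle to be the GM optimisation: establishing $\Lambda^2(\rho_N)\le\frac{1}{2(N+1)}$ (for which one leans on Appendix~A of \cite{wag04}) and then confirming that the candidate closest product states remain global maximisers across each of the two parameter ranges rather than only at $x=\frac{1}{N+1}$; the REE half is comparatively soft once the three tight points are available, the convexity argument being elementary.
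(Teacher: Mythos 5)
Your proposal is correct and follows essentially the same route as the paper: non-negativity of $\rho_N(x)$ plus theorem~\ref{thm:additivityGnon-negative2} for strong additivity of GM, lemma~\ref{le:closest2} and the bound from Appendix~A of \cite{wag04} to pin down $\Lambda^2$ piecewise, theorem~\ref{thm:AREE} for the lower bound $G-S$ on AREE, tightness at $x=0,\tfrac{1}{N+1},1$, and convexity of AREE to conclude $E_{\mathrm{R}}^\infty[\rho_N(x)]=x$. Your explicit entropy computation at $x=\tfrac{1}{N+1}$ and the spelled-out slope argument for the convex function are just more detailed renderings of steps the paper leaves implicit.
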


\section{\label{sec:NonAdditivity}
Non-additivity of geometric measure  of antisymmetric states}

In this section, we turn to the antisymmetric subspace, and explore
the  connection between the permutation symmetry and the additivity
property of multipartite entanglement measures. Starting from a
simple observation on the closest product states to antisymmetric
states and that to symmetric states, we show that GM is non-additive
for all antisymmetric states shared over three or more parties, and
provide a unified explanation of the non-additivity of the three
measures GM, REE and LGR of the antisymmetric projector states. In
particular, we establish a simple equality among the three measures
GM, REE and LGR of the tensor product of antisymmetric projector
states, and derive analytical formulae of the three measures in the
case of one copy and two copies, respectively.  Our results may be
found useful in the study of fermion systems, which are described by
antisymmetric states due to the super-selection rule.

In \sref{sec:ass1}, we introduce Slater determinant states, which
are analog of product states in the antisymmetric subspace, and give
a simple criterion on when an antisymmetric state is a Slater
determinant state. Then we prove that the $N$ one-particle reduced
states of each closest product state to any $N$-partite
antisymmetric state are mutually orthogonal, and derive a lower
bound for the three measures GM, REE and LGR  based on this
observation. In \sref{sec:ass2}, we show that GM of antisymmetric
states shared over three or more parties is non-additive. In
\sref{sec:ass3}, we establish a simple equality among the three
measures GM, REE and LGR of the tensor product of antisymmetric
projector states, and compute the three measures in the case of one
copy and two copies respectively.   REE and LGR of the mixture of
Slater determinant states are also derived. In \sref{sec:ass4}, we
treat generalized antisymmetric states \cite{bra03} as further
counterexamples to the additivity of GM.

\subsection{\label{sec:ass1}  Geometric measure of antisymmetric states}

We shall be concerned with antisymmetric states in the multipartite
Hilbert space  $\mathcal{H}=\bigotimes^N_{j=1} \mathcal{H}_j$ with
$\mathrm{Dim} {\mathcal{H}_j}=d$ and $N\leq d$. A ket
$|\psi_N\rangle$ is \emph{antisymmetric} if every odd permutation of
the  parties induces a sign change. All unnormalized antisymmetric
kets form the antisymmetric subspace $\mathcal{H}_-$, whose
dimension is ${d\choose N}=d!/[N!(d-N)!]$. An $N$-partite state
$\rho_N$ is antisymmetric if its support is contained in the
antisymmetric subspace. Let $P_{d,N}$ be the projector onto the
antisymmetric subspace $\mathcal{H}_-$; then
$\mathrm{tr}(P_{d,N})=\mathrm{Dim} \mathcal{H}_-={d\choose N}$. Any
$N$-partite state $\rho_N$ is an antisymmetric state if and only if
the equality $\rho_N=P_{d,N}\rho_N P_{d,N}$ holds. A typical example
of antisymmetric states is the \emph{antisymmetric projector state}
$\rho_{d,N}=\frac{P_{d,N}}{\mathrm{tr}(P_{d,N})}$, which includes
the antisymmetric basis state and antisymmetric Werner state as
special cases.

Given  $N$  orthonormal single-particle states, $|a_1\rangle,
\ldots, |a_N\rangle$, a \emph{Slater determinant state} can be
constructed by anti-symmetrization, a procedure routinely used in
the study of fermion systems, i.e.
\begin{eqnarray}
\label{eq:antisymmetric} &&|a_1\rangle\wedge \cdots
\wedge|a_N\rangle := \frac{1}{\sqrt{N!}}\sum_{\sigma\in
S_N}\mathrm{sgn}(\sigma)|a_{\sigma(1)},a_{\sigma(2)},\ldots,a_{\sigma(N)}\rangle,
\end{eqnarray}
where $S_N$ is the symmetry group of  $N$ letters,
$\mathrm{sgn}(\sigma)$ is the signature of $\sigma$ \cite{hj85}, and
 $\frac{1}{\sqrt{N!}}$ is the  normalization factor.
Apparently, all Slater determinant states are locally unitarily
equivalent to each other. In particular, they are locally unitarily
equivalent to  \emph{antisymmetric basis states},
$|j_1\rangle\wedge\cdots\wedge|j_N\rangle$ with $0\leq j_1<\cdots<
j_N\leq d-1$, which form an orthonormal basis in the antisymmetric
subspace. When $d=N$, there is only one antisymmetric basis state,
\begin{eqnarray}\label{eq:abs}
  &&\ket{\psi_{N-}}:=|0\rangle\wedge|1\rangle\wedge\cdots\wedge
  |N-1\rangle.
\end{eqnarray}

For the convenience of the following discussion, we summarize a few
useful properties of Slater determinant states; see \cite{Bha97} for
some mathematical background. If the $N$ single-particle states
$|a_1\rangle, \ldots, |a_N\rangle$ are linearly dependent, then
$|a_1\rangle\wedge \cdots \wedge|a_N\rangle$ vanishes. If they are
linearly independent but not mutually orthogonal, $|a_1\rangle\wedge
\cdots \wedge|a_N\rangle$ is a subnormalized Slater determinant
state. In that case, we can choose $N$ orthonormal states
$|a_1^\prime\rangle,\ldots, |a_N^\prime\rangle$ from the span of
$|a_1\rangle, \ldots, |a_N\rangle$, such that $|a_1\rangle\wedge
\cdots \wedge|a_N\rangle=c |a_1^\prime\rangle\wedge \cdots
\wedge|a_N^\prime\rangle$, where $c$ is a constant with modulus
between 0 and 1. The projection of a generic pure product state onto
the antisymmetric subspace is a subnormalized Slater determinant
state, that is,
\begin{eqnarray}
P_{d,N}(|a_1\rangle\otimes \cdots
\otimes|a_N\rangle)=\frac{1}{\sqrt{N!}}|a_1\rangle\wedge \cdots
\wedge|a_N\rangle.
\end{eqnarray}
Suppose  $|b_1\rangle,\ldots, |b_N\rangle$ are another  $N$
normalized single-particle states. Then $|a_1\rangle\wedge \cdots
\wedge|a_N\rangle$ and $|b_1\rangle\wedge \cdots \wedge|b_N\rangle$
are linearly independent if and only if the subspaces spanned by
$|a_1\rangle, \ldots, |a_N\rangle$ and by $|b_1\rangle, \ldots,
|b_N\rangle$, respectively, are different but of the same dimension
$N$. In other words, up to  overall phase factors, there is a
one-to-one correspondence between $N$-partite Slater determinant
states and $N$-dimensional  subspaces of the single-particle Hilbert
space.

Slater determinant states play a similar role in the antisymmetric
subspace as product states do in the full Hilbert space
\cite{sck01}. Given an $N$-partite antisymmetric state
$|\psi_N\rangle$, a basic task is to determine whether it is a
Slater determinant state. Note that the one-particle reduced state
of any $N$-partite Slater determinant state is a subnormalized
projector with rank $N$. On the other hand, if the one-particle
reduced state of an  antisymmetric state is of rank $N$, then there
is only one linearly independent Slater determinant state that  can
be constructed from the one-particle states in the support of this
one-particle reduced state. Obviously, the rank of the one-particle
reduced state can not be less than $N$; otherwise, no Slater
determinant state can be constructed. So we obtain
\begin{proposition}\label{pro:Slater} The one-particle reduced state of any $N$-partite antisymmetric
state has rank at least $N$. Moreover, an antisymmetric state is a
Slater determinant state if and only if its one-particle reduced
state has rank $N$.
\end{proposition}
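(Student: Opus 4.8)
The plan is to prove Proposition~\ref{pro:Slater} in two parts: first the lower bound on the rank of the one-particle reduced state, and then the characterization of Slater determinant states as exactly those antisymmetric states whose one-particle reduced state saturates this bound. For the lower bound, I would expand an arbitrary normalized $N$-partite antisymmetric state $|\psi_N\rangle$ in the antisymmetric basis $\{|j_1\rangle\wedge\cdots\wedge|j_N\rangle : 0\le j_1<\cdots<j_N\le d-1\}$, say $|\psi_N\rangle=\sum_{j_1<\cdots<j_N} c_{j_1\cdots j_N}\,|j_1\rangle\wedge\cdots\wedge|j_N\rangle$. Since $|\psi_N\rangle$ is normalized, at least one coefficient is nonzero; the key point is that the support of the one-particle reduced state $\mathrm{tr}_{2,\ldots,N}\proj{\psi_N}$ contains, for each basis ket $|j_1\rangle\wedge\cdots\wedge|j_N\rangle$ appearing with nonzero coefficient, the span of $|j_1\rangle,\ldots,|j_N\rangle$ — an $N$-dimensional subspace. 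More carefully, I would show that the one-particle reduced state of a Slater determinant state $|a_1\rangle\wedge\cdots\wedge|a_N\rangle$ (with the $|a_i\rangle$ orthonormal) is exactly $\frac{1}{N}\sum_{i=1}^N\proj{a_i}$, i.e.\ $\frac1N$ times the projector onto $\mathrm{span}\{|a_1\rangle,\ldots,|a_N\rangle\}$, a rank-$N$ operator; this is a short direct computation using \eref{eq:antisymmetric}. Combined with the observation that an arbitrary $|\psi_N\rangle$ is a linear combination of at least one such basis ket, positivity of density operators forces the rank of its one-particle reduced state to be at least $N$.

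For the forward direction of the characterization, if $|\psi_N\rangle$ is a Slater determinant state then by the computation just described its one-particle reduced state is $\frac1N$ times a rank-$N$ projector, so the rank is exactly $N$. For the converse — the substantive direction — suppose $|\psi_N\rangle$ is antisymmetric and its one-particle reduced state $\rho^{(1)}$ has rank exactly $N$. Let $V=\mathrm{supp}(\rho^{(1)})$, an $N$-dimensional subspace of $\mathcal{H}_j\cong\mathbb{C}^d$. I would argue that $|\psi_N\rangle$ must lie in the $N$-fold antisymmetric tensor power of $V$, i.e.\ $|\psi_N\rangle\in\wedge^N V$. This follows because, for each party $k$, the reduced state on party $k$ has support $V$ (all one-particle reduced states coincide by permutation symmetry of the antisymmetric subspace), hence $|\psi_N\rangle$ is annihilated by any projector of the form $I\otimes\cdots\otimes(I-\Pi_V)\otimes\cdots\otimes I$; since $\mathcal{H}_-\cap(V^{\otimes N})=\wedge^N V$, we get $|\psi_N\rangle\in\wedge^N V$. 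But $\dim\wedge^N V=\binom{N}{N}=1$, so $\wedge^N V$ is spanned by the single Slater determinant state built from any orthonormal basis of $V$; therefore $|\psi_N\rangle$ equals that Slater determinant state up to a phase.

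For the remaining uniqueness remark in the statement (``there is only one linearly independent Slater determinant state that can be constructed from the one-particle states in the support''), I would note this is immediate from the one-to-one correspondence, recalled earlier in the excerpt, between $N$-dimensional subspaces of the single-particle Hilbert space and $N$-partite Slater determinant states up to phase: once the support $V$ of $\rho^{(1)}$ is fixed and has dimension $N$, the Slater determinant $\wedge^N V$ is determined. Finally, the trivial lower bound ``the rank cannot be less than $N$, otherwise no Slater determinant state can be constructed'' is just the observation that a Slater determinant state requires $N$ linearly independent single-particle vectors, which cannot be drawn from a space of dimension less than $N$; this also re-derives the rank-$\ge N$ claim from the perspective of the proposition's own phrasing.

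\textbf{The main obstacle} I anticipate is making the step ``$|\psi_N\rangle\in\wedge^N V$'' fully rigorous — specifically, verifying that all $N$ one-particle reduced states of an antisymmetric $|\psi_N\rangle$ share the same support, and that an antisymmetric vector each of whose single-party marginals is supported on $V$ must lie in $V^{\otimes N}$. The first is a consequence of the manifest permutation invariance of the antisymmetric subspace (each transposition acts as $-1$ on $|\psi_N\rangle$, so $\proj{\psi_N}$ is permutation invariant and all its single-party marginals are unitarily conjugate, in fact equal after the relabeling); the second is the elementary fact that $\ker(I-\Pi_V)$ on party $k$, intersected over all $k$, gives $V^{\otimes N}$, and intersecting further with $\mathcal{H}_-$ gives $\wedge^N V$. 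Everything else is routine linear algebra, and the dimension count $\dim\wedge^N V=1$ when $\dim V=N$ closes the argument cleanly.
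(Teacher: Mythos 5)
Your proposal follows essentially the same route as the paper: both directions rest on the observation that an antisymmetric $|\psi_N\rangle$ whose one-particle reduced state has support $V$ must lie in $\wedge^N V$, together with the dimension count $\dim\wedge^N V=\binom{\dim V}{N}$, which is $1$ when $\dim V=N$ and $0$ when $\dim V<N$; you merely spell out the steps the paper leaves implicit. One intermediate claim in your first pass at the rank bound is false, though: the support of the one-particle reduced state need \emph{not} contain $\mathrm{span}\{|j_1\rangle,\ldots,|j_N\rangle\}$ for every antisymmetric basis ket appearing with nonzero coefficient, and the cross terms in the partial trace do not vanish when two index sets share $N-1$ elements, so "positivity of density operators'' does not by itself yield rank $\geq N$. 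For instance, $|0\rangle\wedge|1\rangle+|0\rangle\wedge|2\rangle=\sqrt{2}\,|0\rangle\wedge|b\rangle$ with $|b\rangle=(|1\rangle+|2\rangle)/\sqrt{2}$ is a Slater determinant state whose one-particle marginal has rank $2$ and does not contain $|1\rangle$ in its support, even though $|0\rangle\wedge|1\rangle$ appears with nonzero coefficient. This does not damage the proof, because your closing paragraph already contains the correct derivation of the lower bound: once $|\psi_N\rangle\in\wedge^N V$ with $V=\mathrm{supp}(\rho^{(1)})$ is established, $\mathrm{rank}(\rho^{(1)})<N$ would force $\wedge^N V=0$ and hence $|\psi_N\rangle=0$. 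Simply drop the first argument and keep the second; the remainder (the computation of the marginal of a Slater determinant state, the equality of all single-party marginals by permutation invariance, and the one-dimensionality of $\wedge^N V$) is correct.
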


We are now ready to study GM of antisymmetric states. Suppose
$\rho_N$ is an $N$-partite antisymmetric state and thus
$P_{d,N}\rho_NP_{d,N}=\rho_N$. Let
 $\varphi_N=|a_1\rangle\otimes\cdots\otimes|a_N\rangle$,
\begin{eqnarray}\label{eq:ass1}
\Lambda^2(\rho_N)&=&\max_{|\varphi_N\rangle}
\langle\varphi_N|\rho_N|\varphi_N\rangle= \max_{|\varphi_N\rangle}
\langle\varphi_N|P_{d,N}\rho_NP_{d,N}|\varphi_N\rangle,\nonumber\\
&=&\frac{1}{N!}\max_{|a_1\rangle,\ldots,|a_N\rangle}
\quad\bigl(\langle a_1|\wedge \cdots\wedge\langle
a_N|\bigr)\rho_N\bigl(|a_1\rangle\wedge \cdots
\wedge|a_N\rangle\bigr).
\end{eqnarray}
Recall that $|a_1\rangle\wedge \cdots \wedge|a_N\rangle$ is in
general a subnormalized Slater determinant state, and that it is
normalized if and only if the $N$ single-particle states
$|a_1\rangle,\ldots,|a_N\rangle$ are orthonormal, which is also a
necessary condition for $|\varphi_N\rangle$ to be a closest product
state.
\begin{proposition}\label{pro:ass}
The $N$ one-particle reduced states of any closest product state to
an $N$-partite antisymmetric state are mutually orthogonal.
\end{proposition}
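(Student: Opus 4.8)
I would argue by contradiction: starting from a closest product state $|\varphi_N\rangle=|a_1\rangle\otimes\cdots\otimes|a_N\rangle$ whose one-particle reduced states $|a_l\rangle\langle a_l|$ are \emph{not} all mutually orthogonal, I would manufacture a strictly better product state, contradicting maximality. The engine is the identity $P_{d,N}(|a_1\rangle\otimes\cdots\otimes|a_N\rangle)=\frac{1}{\sqrt{N!}}\,|a_1\rangle\wedge\cdots\wedge|a_N\rangle$ recalled just before \eref{eq:ass1}, which, since $\rho_N=P_{d,N}\rho_N P_{d,N}$, gives for \emph{every} product state $\langle\varphi_N|\rho_N|\varphi_N\rangle=\frac{1}{N!}(\langle a_1|\wedge\cdots\wedge\langle a_N|)\,\rho_N\,(|a_1\rangle\wedge\cdots\wedge|a_N\rangle)$.

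The steps, in order, would be: (1) note $\Lambda^2(\rho_N)>0$, since otherwise $\langle\varphi|\rho_N|\varphi\rangle=0$ on every computational-basis product state, forcing $\Tr\rho_N=0$; (2) observe that for a closest product state the $|a_l\rangle$ are linearly independent, for otherwise the wedge vanishes and the overlap equals $0<\Lambda^2(\rho_N)$; (3) assuming the $|a_l\rangle$ are not mutually orthogonal, use the properties of Slater determinant states recalled above to write $|a_1\rangle\wedge\cdots\wedge|a_N\rangle=c\,|a_1'\rangle\wedge\cdots\wedge|a_N'\rangle$ with $|a_1'\rangle,\ldots,|a_N'\rangle$ orthonormal and $0<|c|<1$, the strict upper bound coming from Hadamard's inequality applied to the Gram determinant $|c|^2=\det(\langle a_i|a_j\rangle)$ of the unit vectors $|a_l\rangle$; (4) apply the displayed identity to both $|\varphi_N\rangle$ and $|\varphi_N'\rangle:=|a_1'\rangle\otimes\cdots\otimes|a_N'\rangle$, yielding $\langle\varphi_N|\rho_N|\varphi_N\rangle=|c|^2\,\langle\varphi_N'|\rho_N|\varphi_N'\rangle$ and hence $\langle\varphi_N'|\rho_N|\varphi_N'\rangle=\Lambda^2(\rho_N)/|c|^2>\Lambda^2(\rho_N)$, the desired contradiction; (5) conclude that the $|a_l\rangle$, equivalently the one-particle reduced states $|a_l\rangle\langle a_l|$ of $|\varphi_N\rangle$, are mutually orthogonal.

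The only point needing any care is the \emph{strictness} in step (3): one must genuinely invoke that unit vectors with a nonzero mutual overlap have Gram determinant strictly below $1$, and combine this with $\Lambda^2(\rho_N)>0$ from step (1), so that rescaling by $1/|c|^2$ strictly increases the overlap. Everything else is bookkeeping with the wedge product; in particular the mixed-state case requires no separate treatment, because the whole argument is phrased directly in terms of the overlap $\langle\varphi_N|\rho_N|\varphi_N\rangle$ and uses nothing about $\rho_N$ beyond $\rho_N=P_{d,N}\rho_N P_{d,N}$.
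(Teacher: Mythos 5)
Your proposal is correct and follows essentially the same route as the paper: both rest on the identity $\langle\varphi_N|\rho_N|\varphi_N\rangle=\frac{1}{N!}\bigl(\langle a_1|\wedge\cdots\wedge\langle a_N|\bigr)\rho_N\bigl(|a_1\rangle\wedge\cdots\wedge|a_N\rangle\bigr)$ together with the fact that the Slater determinant state is subnormalized unless the $|a_l\rangle$ are orthonormal, so that normalizing it strictly improves the overlap. You merely make explicit two points the paper leaves implicit, namely $\Lambda^2(\rho_N)>0$ and the strictness of the Gram-determinant bound, which is a welcome tightening rather than a departure.
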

Thus   searching for the closest product state of $\rho_N$ is
equivalent to searching for its closest Slater determinant state. A
peculiar feature of an antisymmetric state $\rho_N$ is the high
degeneracy of its closest product states.  If $|a_1\rangle\otimes
\cdots \otimes|a_N\rangle$ is a closest product state, then the
tensor product of any $N$ orthonormal states from the span of the
$N$ single-particle states $|a_1\rangle, \ldots, |a_N\rangle$ is
also a closest product state. Recall that there is a one-to-one
correspondence between $N$-partite Slater determinant states and
$N$-dimensional subspaces of the single-particle Hilbert space.

Proposition~\ref{pro:ass} is in a sense the analog of
proposition~\ref{pro:symmetric} for antisymmetric states. It is
crucial to computing GM of antisymmetric states and to proving the
non-additivity of GM of antisymmetric states shared over three or
more  parties in \sref{sec:ass2}.

Suppose $\lambda_{\mathrm{max}}$ is the largest eigenvalue of
$\rho_N$,  then $\Lambda^2(\rho_N)\leq
\frac{\lambda_{\mathrm{max}}}{N!}$ according to \eref{eq:ass1}, and
the inequality is saturated  if and only if there is  a Slater
determinant state in the eigenspace corresponding to
$\lambda_{\mathrm{max}}$. So we obtain
\begin{eqnarray}\label{eq:lowerbound}
&&R_{\mathrm{L}}(\rho_N)\geq E_{\mathrm{R}}(\rho_N)\geq
G(\rho_N)-S(\rho_N)\geq-\log\frac{\lambda_{\mathrm{max}}}{N!}-S(\rho_N).
\end{eqnarray}
A typical example where all the inequalities are saturated is the
antisymmetric projector state, as we shall see in \sref{sec:ass3}.

\subsection{\label{sec:ass2}  Non-additivity theorem for geometric measure of antisymmetric states}
The permutation symmetry of multipartite states plays a crucial role
in determining the properties of their closest product states, as
demonstrated in propositions~\ref{pro:symmetric} and \ref{pro:ass}.
It is also closely related to the non-additivity of GM of
antisymmetric states \footnote{In his private communication to MH in
May 2009, Shashank Virmani  pointed out (as a consequence of the
result of H\"ubener {\it et al} \cite{hkw09}) the non-additivity of
GM of all pure antisymmetric states of three or more parties.}.
\begin{theorem}\label{thm:non-additivity}
When $N\geq3$, GM  is non-additive for any two $N$-partite
antisymmetric states $\rho_N$ and $\rho_N^\prime$, that is,
$G(\rho_N\otimes\rho_N^\prime)<G(\rho_N)+G(\rho_N^\prime)$.
\end{theorem}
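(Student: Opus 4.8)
The plan is to reduce the statement to a strict inequality about overlaps with Slater determinant states, and to exploit Proposition~\ref{pro:ass}, which forces the one-particle reduced states of any closest product state of an antisymmetric state to be mutually orthogonal. First I would write $\Lambda^2(\rho_N\otimes\rho_N^\prime)=\langle\varphi_N|\rho_N\otimes\rho_N^\prime|\varphi_N\rangle$ for a closest product state $|\varphi_N\rangle=\bigotimes_{l=1}^N|a_l\rangle_{A_l^1 A_l^2}$, where each $|a_l\rangle$ lives in $\mathcal{H}_l^1\otimes\mathcal{H}_l^2$. Since $\rho_N$ and $\rho_N^\prime$ are both supported in their respective antisymmetric subspaces, projecting $|\varphi_N\rangle$ onto $\mathcal{H}_-^1\otimes\mathcal{H}_-^2$ loses nothing, and by Proposition~\ref{pro:ass} (applied to the combined-system state $\rho_N\otimes\rho_N^\prime$, which is still antisymmetric under the diagonal permutation of the $N$ grouped parties $A_l=A_l^1 A_l^2$) the $N$ one-particle reduced states $\mathrm{tr}_{A_l}\proj{a_l}$ must be mutually orthogonal as operators on $\mathcal{H}_l^1\otimes\mathcal{H}_l^2$. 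Equivalently, the $|a_l\rangle$ span an $N$-dimensional subspace $V\subseteq \mathcal{H}^1\otimes\mathcal{H}^2$ (for a fixed single party, since all parties are on the same footing we may just track one copy of the structure) whose $N$ vectors are orthonormal.

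The heart of the argument is then a dimension-counting obstruction. Suppose for contradiction that additivity held, i.e.\ $\Lambda^2(\rho_N\otimes\rho_N^\prime)=\Lambda^2(\rho_N)\Lambda^2(\rho_N^\prime)$. By the remark following Theorem~\ref{thm:additivityGnon-negative2} (subadditivity of GM) the product of closest product states of $\rho_N$ and $\rho_N^\prime$ always achieves at least $\Lambda^2(\rho_N)\Lambda^2(\rho_N^\prime)$, so equality would mean some genuinely entangled-across-copies $|\varphi_N\rangle$ does as well as, or we can arrange a product-across-copies optimizer. I would show the optimizer can be taken of product form $|a_l\rangle=|b_l\rangle_{A_l^1}\otimes|c_l\rangle_{A_l^2}$ only if both $\{|b_l\rangle\}$ and $\{|c_l\rangle\}$ are \emph{themselves} orthonormal $N$-tuples — but this is exactly what fails: a closest product state of $\rho_N$ need not have $N$ orthonormal one-particle marginals in a way compatible with being a \emph{tensor factor} of an $N$-orthonormal family in the doubled space. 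More directly: Proposition~\ref{pro:ass} says the closest product state of $\rho_N\otimes\rho_N^\prime$ has orthonormal marginals $\{|a_l\rangle\}$ in $\mathcal{H}_l^1\otimes\mathcal{H}_l^2$; if this state were a tensor product across the two copies it would be forced to be a \emph{Slater determinant on each copy separately}, and then $\Lambda^2$ would factor — so I instead need to find an explicit $|\varphi_N\rangle$, entangled across the copies, beating the product value. The natural candidate: for $N\ge 3$ one can use the extra room in $\mathcal{H}^1\otimes\mathcal{H}^2$ to pick an $N$-dimensional subspace not of product form but whose Slater determinant has larger overlap with $\rho_N\otimes\rho_N^\prime$ than any product subspace; the key input making $N\ge 3$ essential is that for $N=2$ the antisymmetric subspace of two qubits is one-dimensional (a maximally entangled state) and the extra structure collapses.

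Concretely, I would reduce to the ``worst case'' $\rho_N=\rho_{d,N}$, $\rho_N^\prime=\rho_{d',N}$ antisymmetric projector states (since every antisymmetric $\rho_N$ has its support inside $\mathcal{H}_-$, one has $\Lambda^2(\rho_N)\le \lambda_{\max}/N!$ and $\Lambda^2(\rho_N\otimes\rho_N^\prime)\ge$ the overlap achieved by any fixed good Slater-type vector, so it suffices to exhibit the gap when both are projectors and then transfer it). For the projector case the claim becomes the strict inequality $\log\frac{d^N d!}{N!(d-N)!} < 2\log\frac{d!}{(d-N)!}$ implicit in Table~\ref{tab:NonAdd}; I would prove the two-copy overlap $\Lambda^2(\rho_{d,N}^{\otimes2})=\frac{N!(d-N)!}{d^N d!}$ by constructing the optimal two-copy Slater determinant state from a maximally correlated family $|b_i\rangle_{1}\otimes|b_i\rangle_2$ across the two copies (which is entangled across copies, hence shows non-additivity), computing its overlap by a Gram/permanent-type combinatorial identity, and matching it against the upper bound $\lambda_{\max}/(N!)^? = 1/(N!)\cdot 1$ refined by the structure of $P_{d,N}^{\otimes2}$. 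The main obstacle I anticipate is precisely this combinatorial overlap computation for the two-copy antisymmetric projector — organizing the sum over $S_N\times S_N$ of inner products of Slater vectors and identifying that the entangled-across-copies choice strictly beats every product choice — and, secondarily, the clean reduction from general antisymmetric $\rho_N$ to the projector case, which I would handle by noting $\rho_N\le \lambda_{\max}\, N! \,\rho_{d,N}$-type operator bounds combined with the strict inequality already established for projectors. Once the projector gap is strict, the general statement $G(\rho_N\otimes\rho_N^\prime)<G(\rho_N)+G(\rho_N^\prime)$ follows since subadditivity gives ``$\le$'' always and the constructed cross-copy Slater state gives ``$<$'' for the overlap, hence ``$>$'' after the $-2\log$.
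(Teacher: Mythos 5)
Your proposal has a genuine gap at its foundation: you assert that $\rho_N\otimes\rho_N^\prime$ "is still antisymmetric under the diagonal permutation of the $N$ grouped parties $A_l=A_l^1A_l^2$" and then apply Proposition~\ref{pro:ass} to the combined state to get orthogonal marginals $\mathrm{tr}_{A_l}\proj{a_l}$. This is false: an odd permutation induces a sign change on each factor, so the two signs cancel and $\rho_N\otimes\rho_N^\prime$ is \emph{symmetric}, not antisymmetric. Proposition~\ref{pro:ass} therefore does not apply to the doubled state, and the orthogonality of the doubled marginals is unsupported. The correct tool — which your proposal never invokes — is Proposition~\ref{pro:symmetric}: for $N\geq3$ every closest product state of a symmetric state must itself be symmetric, i.e.\ of the form $\ket{a}^{\otimes N}$. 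The paper's argument is then a two-line contradiction: if some closest product state of $\rho_N\otimes\rho_N^\prime$ had the tensor-product form $\ket{\varphi_N}\otimes\ket{\varphi_N^\prime}$, each factor would be a closest product state of an antisymmetric state, so by Proposition~\ref{pro:ass} (applied to $\rho_N$ and $\rho_N^\prime$ \emph{separately}, where it is legitimate) its single-party marginals are mutually orthogonal, hence the product state cannot be of the form $\ket{a}^{\otimes N}$ — contradicting Proposition~\ref{pro:symmetric}. Since $\Lambda^2(\rho_N\otimes\rho_N^\prime)\geq\Lambda^2(\rho_N)\Lambda^2(\rho_N^\prime)$ always, ruling out product-form optimizers makes the inequality strict. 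No explicit construction of a better cross-copy state is needed. Your stated reason that $N\geq3$ matters (one-dimensionality of the two-qubit antisymmetric subspace) is also not the right one; the restriction comes from Proposition~\ref{pro:symmetric} failing for $N=2$ due to Schmidt-coefficient degeneracy.

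The second gap is the proposed reduction to antisymmetric projector states. The theorem is about \emph{arbitrary} (generally mixed) antisymmetric $\rho_N,\rho_N^\prime$, and operator bounds of the form $\rho_N\leq c\,\rho_{d,N}$ only give one-sided estimates on overlaps; they do not transfer a strict gap $\Lambda^2(\rho_{d,N}^{\otimes2})>\Lambda^2(\rho_{d,N})^2$ to a strict gap for general $\rho_N\otimes\rho_N^\prime$, whose closest-product-state structure depends on the eigendecomposition. The combinatorial two-copy computation you anticipate as the "main obstacle" is carried out in the paper (Section~\ref{sec:ass3}, via Schur concavity of elementary symmetric polynomials), but only to obtain the explicit values for projector states; it is neither necessary nor sufficient for the non-additivity theorem itself.
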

\begin{proof}
Suppose there exists a closest product state of
$\rho_N\otimes\rho_N^\prime$ which is of the tensor-product form
$|\varphi_N\rangle\otimes|\varphi_N^\prime\rangle$, then
$|\varphi_N\rangle$ and $|\varphi_N^\prime\rangle$ are  closest
product states of $\rho_N$ and $\rho_N^\prime$, respectively. Since
the set of one-particle reduced states of $|\varphi_N\rangle$
($|\varphi_N^\prime\rangle$) are mutually orthogonal according to
proposition~\ref{pro:ass},
$|\varphi_N\rangle\otimes|\varphi_N^\prime\rangle$ cannot be
symmetric. On the other hand, $\rho_N \ox \r^\prime_N$ is a
symmetric state and, if $N\geq3$, its closest product states are
necessarily symmetric according to proposition~\ref{pro:symmetric},
hence a contradiction would arise. In other words,  no closest
product state of $\rho_N\otimes\rho_N^\prime$ can be written as a
tensor product of the closest product states of $\rho_N$ and
$\rho_N^\prime$, respectively, which implies that
$G(\rho_N\otimes\rho_N^\prime)<G(\rho_N)+G(\rho_N^\prime)$.
\end{proof}

The  non-additivity of GM of antisymmetric states can be understood
as follows.  Antisymmetric states are generally more entangled than
symmetric states as noticed in \cite{hmm08}. However, two copies of
antisymmetric states turn to be a symmetric state.
Theorem~\ref{thm:non-additivity} establishes a simple connection
between permutation symmetry and the additivity property of GM of
multipartite states. In some special cases, this connection carries
over to other multipartite entanglement measures, such as REE and
LGR, as we shall see in \sref{sec:ass3}.

For a pure tripartite antisymmetric state, the non-additivity of GM
translates immediately to the non-multiplicativity of the maximum
output purity $\nu_\infty$ of the corresponding quantum channel
constructed according to the Werner-Holevo recipe. For example, the
non-multiplicativity  of the maximum output purity of the
Werner-Holevo channel is equivalent to the non-additivity of GM of
the tripartite antisymmetric basis state
 \cite{wh02}.

Theorem~\ref{thm:non-additivity} can be generalized to cover the
situation where the two states are not fully antisymmetric.
\begin{corollary}
GM is non-additive for two $N$-partite states, if there exists a
subsystem of three parties such that the respective tripartite
reduced states of the two $N$-partite states are both antisymmetric.
\end{corollary}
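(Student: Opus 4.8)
The plan is to imitate the proof of Theorem~\ref{thm:non-additivity}, reducing the $N$-partite claim to the tripartite one by passing to partial expectation values. Let $S=\{A_{i_1},A_{i_2},A_{i_3}\}$ be the three-party subsystem on which $\mathrm{tr}_{\bar S}\rho$ and $\mathrm{tr}_{\bar S}\rho'$ are antisymmetric; writing $\rho$ as a mixture of pure states one sees at once that this is equivalent to $\rho$ (and likewise $\rho'$) being supported on $\mathcal{H}_-^{(S)}\otimes\mathcal{H}_{\bar S}$, i.e. $\rho=Q\rho Q$ with $Q:=P_-^{(S)}\otimes I_{\bar S}$, where $P_-^{(S)}$ is the projector onto the antisymmetric subspace of $S$. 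Suppose, for contradiction, that $G(\rho\otimes\rho')=G(\rho)+G(\rho')$, equivalently $\Lambda(\rho\otimes\rho')=\Lambda(\rho)\Lambda(\rho')$. Then, choosing closest product states $|\varphi\rangle=\bigotimes_l|a_l\rangle$ of $\rho$ and $|\varphi'\rangle=\bigotimes_l|a'_l\rangle$ of $\rho'$, the product $|\varphi\rangle\otimes|\varphi'\rangle=\bigotimes_l(|a_l\rangle\otimes|a'_l\rangle)$ attains $\Lambda^2(\rho)\Lambda^2(\rho')=\Lambda^2(\rho\otimes\rho')$, hence is a closest product state of $\rho\otimes\rho'$.

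The device I would use twice is the following. If $|\psi_S\rangle\otimes|\psi_{\bar S}\rangle$ is a closest product state of an $N$-partite state $\tau$, then, since $|\chi_S\rangle\otimes|\psi_{\bar S}\rangle$ is a fully product state for every $S$-product state $|\chi_S\rangle$, the vector $|\psi_S\rangle$ is a closest product state of the positive operator $M:=\langle\psi_{\bar S}|\tau|\psi_{\bar S}\rangle$ on $\mathcal{H}^{(S)}$, which is nonzero because $\Lambda^2(\tau)>0$. Applied to $\tau=\rho$, the relation $\rho=Q\rho Q$ makes $M$ antisymmetric, so $M/\mathrm{tr}M$ is a tripartite antisymmetric state and Proposition~\ref{pro:ass} forces $\{|a_l\rangle:l\in S\}$ to be mutually orthonormal; symmetrically $\{|a'_l\rangle:l\in S\}$ are orthonormal, so for distinct $l,m\in S$ the combined one-particle states $|a_l\rangle\otimes|a'_l\rangle$ and $|a_m\rangle\otimes|a'_m\rangle$ are orthogonal, in particular non-proportional. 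Applied to $\tau=\rho\otimes\rho'$, its $S$-marginal $(\mathrm{tr}_{\bar S}\rho)\otimes(\mathrm{tr}_{\bar S}\rho')$, regarded on the three combined parties, is a \emph{symmetric} tripartite state --- a tensor product of two antisymmetric states is fixed by every transposition of the paired-up parties and its support lies in the symmetric subspace --- whence the corresponding $M$ is symmetric, and Proposition~\ref{pro:symmetric} forces the $S$-part of any closest product state of $\rho\otimes\rho'$ to have its three one-particle states pairwise proportional. For $|\varphi\rangle\otimes|\varphi'\rangle$ this contradicts the previous sentence, so $G(\rho\otimes\rho')<G(\rho)+G(\rho')$.

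The step I expect to be genuinely delicate is this reduction to three parties, i.e. justifying the appeals to Propositions~\ref{pro:ass} and~\ref{pro:symmetric}: the restriction $|\psi_S\rangle$ of a closest product state of $\tau$ is in general \emph{not} a closest product state of the marginal $\mathrm{tr}_{\bar S}\tau$, and it is the partial-expectation operator $M=\langle\psi_{\bar S}|\tau|\psi_{\bar S}\rangle$ --- rather than the marginal itself --- for which optimality survives the restriction. Making this precise requires checking that $M$ is nonzero and inherits the (anti)symmetry, which is precisely where one uses that ``$\mathrm{tr}_{\bar S}\rho$ supported in $\mathcal{H}_-^{(S)}$'' upgrades to ``$\rho=Q\rho Q$'' (a sum of positive operators supported in a subspace has every summand supported there). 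The remaining bookkeeping --- that antisymmetric $\otimes$ antisymmetric lies in the symmetric subspace of the combined parties --- is routine.
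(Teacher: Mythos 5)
Your proof is correct and follows essentially the same route as the paper: the paper likewise slices the closest product states with their components on the remaining $N-3$ parties to obtain subnormalized tripartite antisymmetric operators, and then invokes theorem~\ref{thm:non-additivity} on them, whose proof is exactly your combination of proposition~\ref{pro:ass} (orthogonal one-particle factors) against proposition~\ref{pro:symmetric} (symmetry of the closest product state of the antisymmetric-tensor-antisymmetric, hence symmetric, partial-expectation operator). The only difference is that you unpack that theorem's argument inline and make explicit the partial-expectation/support bookkeeping that the paper leaves implicit.
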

\begin{proof}
Assume $N>3$, suppose $\sigma_N$ and $\sigma_N^\prime$ are  two
$N$-partite states whose respective tripartite reduced states
$\sigma_N^{A_1,A_2,A_3}$ and ${\sigma_N^\prime}^{A_1,A_2,A_3}$ are
antisymmetric. Let $|a_1\rangle\otimes\cdots \otimes|a_N\rangle$ and
$|a_1^\prime\rangle\otimes\cdots \otimes|a_N^\prime\rangle$ be the
closest product states to $\sigma_N$ and $\sigma_N^\prime$,
respectively; then $(\langle a_4|\otimes\cdots\otimes\langle
a_N|)\sigma_N(|a_4\rangle\otimes \cdots\otimes |a_N\rangle)$ and
$(\langle a_4^\prime|\otimes\cdots\otimes\langle
a_N^\prime|)\sigma_N^\prime(|a_4^\prime\rangle\otimes \cdots\otimes
|a_N^\prime\rangle)$ are both antisymmetric.
Theorem~\ref{thm:non-additivity} applied to the two subnormalized
antisymmetric states shows that
$G(\sigma_N\otimes\sigma_N^\prime)<G(\sigma_N)+G(\sigma_N^\prime)$.
\end{proof}

In the bipartite scenario, if either $\rho_2$ or  $\rho_2^\prime$ is
pure, then
$G(\rho_2\otimes\rho_2^\prime)=G(\rho_2)+G(\rho_2^\prime)$, since GM
of bipartite pure states is strong additive, as shown in
\sref{sec:add-Bip}. On the other hand,  the closest product state to
$\rho_2\otimes\rho_2^\prime$ cannot be  of tensor-product form if it
is symmetric and vice versa, according to the same reasoning as that
in the proof of theorem~\ref{thm:non-additivity}. The additivity of
GM of $\rho_2$ and that of $\rho_2^\prime$ is  related to the
existence of  closest product states of $\rho_2\otimes\rho_2^\prime$
which are not symmetric. This in turn is due to the degeneracy of
Schmidt coefficients of $\rho_2$ or $\rho_2^\prime$ \cite{hj85}.
Indeed, every Schmidt coefficient of a bipartite  pure antisymmetric
state is at least doubly degenerate \cite{sck01}.

For generic bipartite antisymmetric states, we suspect that the
non-additivity of GM  is a rule rather than an exception, which is
supported by the following observation. If  both $\rho_2$ and
$\rho_2^\prime$ admit purifications that are antisymmetric, then
 their GM is non-additive,  due to
theorem~\ref{thm:non-additivity} and \eref{eq:ReduceGM}.

Theorem~\ref{thm:non-additivity} can also be derived in a slightly
different way, which offers a new perspective. According to
corollary 5 in \cite{hkw09} (see also
proposition~\ref{pro:symmetric} of this paper), the closest product
state to $\rho_N\otimes\rho_N^\prime$ can be chosen to be symmetric.
Let
\begin{eqnarray}\label{eq:V}
&& \ket{\varphi_N}=\ket{a_V}^{\otimes N},\qquad
\ket{a_V}=\frac{1}{\sqrt{d}}\sum_{j,k=0}^{d-1} V_{jk}|jk\rangle,
 \nonumber\\
&&V=\sum_{j,k=0}^{d-1}V_{jk}|j\rangle\langle k|
\qquad\mbox{with}\quad \mathrm{tr}(VV^\dag)=d.
\end{eqnarray}
According to \eref{eq:App:inner3} in the Appendix,
\begin{eqnarray}
\fl\Lambda^2(\rho_N\otimes\rho_N^\prime) &=&\max_{|\varphi_N\rangle}
\langle\varphi_N|\rho_{N}\otimes\rho_N^\prime|\varphi_N\rangle
=\max_{ \mathrm{tr}VV^\dag=d}
\,\frac{1}{d^N}\mathrm{tr}\bigl(\rho_{N}^{1/2}V^{\otimes
N}\rho_N^{\prime*} V^{\dag\otimes N}\rho_{N}^{1/2}\bigl)\label{eq:inner3}\\
\fl &=&\max_{\mathrm{tr}VV^\dag=d}
\,\frac{1}{d^N}\mathrm{tr}\bigl(\rho_{N}^{1/2}V^{\wedge
N}\rho_N^{\prime*} V^{\dag\wedge
N}\rho_{N}^{1/2}\bigl),\label{eq:inner4}
\end{eqnarray}
where $V^{\wedge N}=P_{d,N}V^{\otimes N}P_{d,N}$ is the restriction
of $V^{\otimes N}$ onto the antisymmetric subspace, which does not
vanish if and only if the rank of $V$ is at least $N$. Since the
rank of $V$ is exactly the  Schmidt rank of $|a_V\rangle$, the
Schmidt rank of $|a_V\rangle$  must be at least $N$, if
$|a_V\rangle^{\otimes N}$ is a closest product state. Recall that
the closest product state to $\rho_N\otimes\rho_N^\prime$ is
necessarily symmetric if $N\geq3$,  according to
proposition~\ref{pro:symmetric}. It follows that each closest
product state to $\rho_N\otimes\rho_N^\prime$ must be entangled
across the cut $A_1^1,\ldots,A_N^1:A_1^2,\ldots,A_N^2$, which
implies that
$G(\rho_N\otimes\rho_N^\prime)<G(\rho_N)+G(\rho_N^\prime)$.

In addition to providing an alternative proof of
theorem~\ref{thm:non-additivity}, the second approach also enables
us to compute GM of the antisymmetric projector states in
\sref{sec:ass3}, and to derive  an upper bound for  GM of
multipartite states of tensor-product form in \sref{sec:NonAddThm}.

\subsection{\label{sec:ass3}  Antisymmetric projector states}
In this section, we focus on the antisymmetric projector states,
which are typical examples of antisymmetric states, and include
antisymmetric basis states and antisymmetric Werner states as
special cases.  In particular, we establish a simple equality among
the three measures GM, REE and LGR of the tensor product of
antisymmetric projector states, and compute the three measures in
the case of one copy and two copies, respectively. Our study
provides a unified explanation of the non-additivity of the three
measures of the antisymmetric projector states.

The antisymmetric projector $P_{d,N}$ is invariant under the  action
of the unitary group $U(d)$ with the representation $U \mapsto
U^{\otimes N}$ for $U\in U(d)$. The range of $P_{d,N}$ is an
irreducible representation with multiplicity one \cite{hmm08}. In
other words, it satisfies the conditions (1) and  (5) of
proposition~\ref{ProH}. Moreover, the tensor product of the
antisymmetric projector states $\bigotimes_{j=1}^n\rho_{d_j,N}$
satisfies the conditions of proposition~\ref{ProH3}. So we obtain
\begin{proposition}\label{pro:ASPtensor}
GM, REE and LGR of antisymmetric projector states satisfy the
following equalities:
\begin{eqnarray}
&&R_{\mathrm{L}}\Biggl(\bigotimes_{j=1}^n\rho_{d_j,N}\Biggr)=E_{\mathrm{R}}\Biggl(\bigotimes_{j=1}^n\rho_{d_j,N}\Biggr)
=G\Biggl(\bigotimes_{j=1}^n\rho_{d_j,N}\Biggr)-\sum_{j=1}^{n}\log\mathrm{tr}P_{d_j,N},\nonumber\\
&&R_{\mathrm{L}}^\infty\bigl(\rho_{d,N}\bigr)=E_{\mathrm{R}}^\infty\bigl(\rho_{d,N}\bigr)=G^\infty\bigl(\rho_{d,N}\bigr)-\log\mathrm{tr}P_{d,N}.
\label{eq:equal}
\end{eqnarray}
\end{proposition}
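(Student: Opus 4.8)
The plan is to reduce the statement to the representation-theoretic criteria already packaged in Propositions~\ref{ProH} and \ref{ProH3}. Concretely, it suffices to exhibit, for each factor $P_{d_j,N}$ (with $N\le d_j$), a compact group and a local unitary representation meeting conditions~(1) and (5) of Proposition~\ref{ProH}; then Proposition~\ref{ProH3} applied to the tuple $\{P_{d_j,N}\}_{j=1}^n$ delivers the first line of \eref{eq:equal} verbatim, and the asymptotic identity follows by specialization.

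First I would take, for each $j$, the group $H_j=U(d_j)$ acting on $(\mathbb{C}^{d_j})^{\otimes N}$ through $U_j\colon U\mapsto U^{\otimes N}$. Condition~(1) is immediate, since $U^{\otimes N}$ is by construction a tensor product of single-particle unitaries. For condition~(5) one notes that the range of $P_{d_j,N}$ is the exterior power $\wedge^N\mathbb{C}^{d_j}$; by Schur--Weyl duality the $U(d_j)$-irreducible component of $(\mathbb{C}^{d_j})^{\otimes N}$ labelled by the column partition $(1^N)$ is exactly $\wedge^N\mathbb{C}^{d_j}$, and it occurs with multiplicity equal to the dimension of the sign representation of $S_N$, i.e.\ one --- this is the fact quoted from \cite{hmm08}. (The closest product state implicitly demanded by Proposition~\ref{ProH}(3) is supplied by any Slater determinant state, as one reads off \eref{eq:ass1}, although by the ``(1)+(5)$\Rightarrow$(2)--(4)'' clause of Proposition~\ref{ProH} its existence need not be checked by hand.) Having verified conditions~(1) and (5) for every $P_{d_j,N}$, each $\rho_{d_j,N}$ satisfies the hypotheses of Proposition~\ref{ProH} and the whole tuple satisfies those of Proposition~\ref{ProH3}; invoking the latter yields $R_{\mathrm{L}}(\bigotimes_{j=1}^n\rho_{d_j,N})=E_{\mathrm{R}}(\bigotimes_{j=1}^n\rho_{d_j,N})=G(\bigotimes_{j=1}^n\rho_{d_j,N})-\sum_{j=1}^n\log\mathrm{tr}P_{d_j,N}$, which is the first equality.

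For the second equality I would set $d_1=\dots=d_n=d$, so that $\bigotimes_{j=1}^n\rho_{d_j,N}=\rho_{d,N}^{\otimes n}$ and the first identity reads $R_{\mathrm{L}}(\rho_{d,N}^{\otimes n})=E_{\mathrm{R}}(\rho_{d,N}^{\otimes n})=G(\rho_{d,N}^{\otimes n})-n\log\mathrm{tr}P_{d,N}$ for all $n$. Dividing by $n$ and sending $n\to\infty$ gives the claim, once one recalls that the three regularized quantities are genuine limits: $G$, $E_{\mathrm{R}}$ and $R_{\mathrm{L}}$ are subadditive under tensor products, so $\frac1n G(\rho_{d,N}^{\otimes n})\to G^\infty(\rho_{d,N})$ by Fekete's lemma, and similarly for the other two. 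I do not anticipate a real obstacle here: the only substantive ingredient is the classical fact that $\wedge^N\mathbb{C}^d$ is $U(d)$-irreducible with multiplicity one in $(\mathbb{C}^d)^{\otimes N}$, which is already cited, so the remaining work is the bookkeeping of matching hypotheses to Propositions~\ref{ProH} and \ref{ProH3}. The only point warranting a line of care is the passage to $n\to\infty$ in the term $-n\log\mathrm{tr}P_{d,N}$, but since that term is exactly linear in $n$ the limit of $\frac1n(-n\log\mathrm{tr}P_{d,N})=-\log\mathrm{tr}P_{d,N}$ is trivial.
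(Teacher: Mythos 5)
Your proposal is correct and follows essentially the same route as the paper: verify conditions (1) and (5) of proposition~\ref{ProH} for each $P_{d_j,N}$ via the $U(d_j)$-action $U\mapsto U^{\otimes N}$ (with the range $\wedge^N\mathbb{C}^{d_j}$ irreducible of multiplicity one), invoke proposition~\ref{ProH3} for the tensor product, and obtain the asymptotic identity by specializing to equal dimensions and dividing by $n$. The extra remarks on Schur--Weyl duality and the existence of the regularized limits are fine but not needed beyond what the paper already cites.
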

Combining the above result with that on symmetric basis states
presented in \sref{sec:add-Dic}, we obtain
\begin{proposition}
The three measures GM, REE and LGR are equal  for the tensor product
of any number of symmetric basis states and antisymmetric basis
states, so are AGM, AREE and ALGR.
\end{proposition}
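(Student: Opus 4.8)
The plan is to split an arbitrary tensor product of symmetric and antisymmetric basis states into a ``symmetric block'' and an ``antisymmetric block'' and to glue the two pieces together with Proposition~\ref{ProH2}. Concretely, write the state as $\tau=\rho\otimes\sigma$, where $\rho=\bigotimes_\alpha|N,\vec{k}_\alpha\rangle$ collects all the symmetric basis (generalized Dicke) states and $\sigma=\bigotimes_\beta|\psi^{(\beta)}\rangle$ collects all the antisymmetric basis states, all of them $N$-partite for a common $N$. Since every $|N,\vec{k}_\alpha\rangle$ has non-negative amplitudes in the computational basis, so does $\rho$.

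First I would dispose of the symmetric block: by \eref{eq:additivityEsym} (equivalently, the proposition on generalized Dicke states in \sref{sec:add-Dic}) one has $R_{\mathrm{L}}(\rho)=E_{\mathrm{R}}(\rho)=G(\rho)$, which coincides with $G(\rho)-S(\rho)$ because $\rho$ is pure. For the antisymmetric block, each $|\psi^{(\beta)}\rangle$ is a Slater determinant state, hence locally unitarily equivalent to $|\psi_{N-}\rangle$, i.e. to the projector state $\rho_{N,N}$ up to a harmless embedding of the local supports; it therefore satisfies conditions (1) and (5) of Proposition~\ref{ProH}, with $H$ the unitary group acting on the relevant $N$-dimensional local subspaces, its one-dimensional range being the determinant representation, which occurs with multiplicity one, exactly as argued for $\rho_{d,N}$ in \sref{sec:ass3}. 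Proposition~\ref{ProH3} then gives $R_{\mathrm{L}}(\sigma)=E_{\mathrm{R}}(\sigma)=G(\sigma)-\sum_\beta\log\mathrm{tr}P_\beta=G(\sigma)$, which again equals $G(\sigma)-S(\sigma)$ since $\sigma$ is pure and each $P_\beta$ has rank one.

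Next I would invoke the strong additivity of GM for non-negative states, Theorem~\ref{thm:additivityGnon-negative2}: because $\rho$ is non-negative, $G(\tau)=G(\rho\otimes\sigma)=G(\rho)+G(\sigma)$. With all three hypotheses of Proposition~\ref{ProH2} verified for the pair $(\rho,\sigma)$, its proof yields $R_{\mathrm{L}}(\tau)=E_{\mathrm{R}}(\tau)=G(\tau)-S(\tau)=G(\tau)$, so the three one-shot measures coincide on $\tau$. For the asymptotic statement I would note that $\tau^{\otimes n}$ is again a tensor product of symmetric basis states and antisymmetric basis states of exactly the same kind, so the one-shot equality just proved applies to it; dividing by $n$ and letting $n\to\infty$ gives $R_{\mathrm{L}}^\infty(\tau)=E_{\mathrm{R}}^\infty(\tau)=G^\infty(\tau)$, i.e. $\mathrm{AGM}=\mathrm{AREE}=\mathrm{ALGR}$.

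The point requiring care — and the reason one cannot simply strip off factors one at a time — is that GM is genuinely non-additive among antisymmetric states of three or more parties (Theorem~\ref{thm:non-additivity}), so $G(\sigma)$ is in general strictly smaller than $\sum_\beta G(|\psi^{(\beta)}\rangle)$, and Proposition~\ref{ProH2} cannot be applied to peel the antisymmetric factors individually. The antisymmetric block must be kept intact and treated in one shot by Proposition~\ref{ProH3}, which presupposes no additivity of GM; the symmetric block, being non-negative, is the only one whose GM we are entitled to decompose, and it is precisely its non-negativity that supplies the cross-term identity $G(\rho\otimes\sigma)=G(\rho)+G(\sigma)$ needed to run Proposition~\ref{ProH2}.
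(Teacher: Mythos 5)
Your proof is correct and takes essentially the same route as the paper's (the paper merely states "combining the above result with that on symmetric basis states" after proposition~\ref{pro:ASPtensor}): proposition~\ref{ProH3} handles the antisymmetric block as one unit, the non-negativity of the Dicke block together with theorem~\ref{thm:additivityGnon-negative2} supplies the cross-term additivity of GM, and proposition~\ref{ProH2} glues the two blocks. Your closing remark on why the antisymmetric factors cannot be peeled off individually is a faithful reading of what makes the argument work.
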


For the single copy antisymmetric projector state $\rho_{d,N}$, all
eigenvalues  are equal to $1/\mathrm{tr}(P_{d,N})$, and the
eigenspace corresponding to the largest eigenvalue of $\rho_{d,N}$
is exactly the antisymmetric subspace. Hence, all the inequalities
in \eref{eq:lowerbound} are saturated, which implies that
\begin{eqnarray}\label{eq:ASP1}
&&R_{\mathrm{L}}\bigl(\rho_{d,N}\bigr)=E_{\mathrm{R}}\bigl(\rho_{d,N}\bigr)=G\bigl(\rho_{d,N}\bigr)-\log\bigl[\mathrm{tr}\bigl(P_{d,N}\bigl)\bigr]=\log(N!).
\end{eqnarray}
Interestingly, REE and LGR of the antisymmetric projector state
$\rho_{d,N}$ do not depend on the dimension of the single-particle
Hilbert space. When $d=N$, the antisymmetric projector state turns
to be an antisymmetric basis state. The result on GM reduces to that
found in \cite{bra03, hmm08}, and  the result on REE and LGR reduces
to that found in \cite{weg04,wei08,hmm08}. When $N=2$, there is a
lower bound for AREE of $\rho_{d,N}$ found by Christandl {\it et al}
\cite{csw09} which reads
$E_{\mathrm{R}}^\infty(\rho_{d,2})\geq\log\sqrt{\frac{4}{3}}$, from
which we can get a lower bound for ALGR,
$R_{\mathrm{L}}^\infty\bigl(\rho_{d,2}\bigr)\geq\log\sqrt{\frac{4}{3}}$.
In general, none of the three measures is easy to compute for the
tensor product of antisymmetric projector states.

We  now focus on two copies of antisymmetric projector states. Note
that all entries of $P_{d,N}$ in the computational basis are real.
Let $|\varphi_N\rangle$ be as defined in \eref{eq:V}, according to
\eref{eq:App:inner3} in the Appendix,
\begin{eqnarray}
\label{eq:inner}
\fl\langle\varphi_N|P_{d,N}^{\otimes2}|\varphi_N\rangle=\frac{1}{d^N}\mathrm{tr}\bigl(P_{d,N}V^{\otimes
N}P_{d,N}V^{\dag\otimes
N}\bigl)=\frac{1}{d^N}\mathrm{tr}\bigl[P_{d,N}\bigl(VV^\dag\bigl)^{\otimes
N}P_{d,N}\bigl],
\end{eqnarray}
where in deriving the last equality, we have used the fact that
$V^{\otimes N}$ and $P_{d,N}$ commutes due to the Weyl reciprocity,
(see also \cite{Bha97}). The trace in \eref{eq:inner} is exactly the
$N$th symmetric polynomial of the set of eigenvalues $\mu_0,
\ldots,\mu_{d-1}$  of $VV^\dag$ \cite{Bha97} ($\mu_j\geq 0$,
$\sum_{j=0}^{d-1}\mu_j=\mathrm{tr}(VV^\dag)=d$), that is
\begin{eqnarray}
\label{eq:esp}
\langle\varphi_N|P_{d,N}^{\otimes2}|\varphi_N\rangle=\frac{1}{d^N}\sum_{0\leq
j_1<\cdots<j_N\leq d-1}\mu_{j_1}\cdots\mu_{j_N}.
\end{eqnarray}
Recall that elementary symmetric polynomials are Schur concave
functions \cite{bz06}, so the maximum in \eref{eq:esp} is obtained
if and only if $\mu_0=\mu_1=\cdots=\mu_{d-1}=1$, that is, $V$ is
unitary, or equivalently, $|a_V\rangle$ is  maximally entangled. So
we obtain
\begin{eqnarray}
&&\max_{|\varphi_N\rangle} \langle
\varphi_N|P_{d,N}^{\otimes2}|\varphi_N\rangle=\frac{1}{d^N}{d\choose
N}=\frac{d!}{d^NN!(d-N)!}.
\end{eqnarray}
In conjunction  with  \eref{eq:equal}, \eref{eq:ASP1} and
proposition~\ref{pro:symmetric} (see also corollary 5 in
\cite{hkw09}), we get
\begin{proposition}\label{pro:ASP2}
GM, REE and LGR of one copy and two copies of the antisymmetric
projector states are respectively given by
\begin{eqnarray}
\fl
R_{\mathrm{L}}\bigl(\rho_{d,N}\bigr)=E_{\mathrm{R}}\bigl(\rho_{d,N}\bigr)=G\bigl(\rho_{d,N}\bigr)-\log\bigl[\mathrm{tr}\bigl(P_{d,N}\bigl)\bigr]=\log(N!).\nonumber\\
\fl
R_{\mathrm{L}}\bigl(\rho_{d,N}^{\otimes2}\bigr)=E_{\mathrm{R}}\bigl(\rho_{d,N}^{\otimes2}\bigr)
=G\bigl(\rho_{d,N}^{\otimes2}\bigr)-2\log\bigl[\mathrm{tr}\bigl(P_{d,N}\bigl)\bigr]=\log\frac{d^NN!(d-N)!}{d!}.\label{eq:ASP2}
\end{eqnarray}
For $\rho_{d,N}$, a state is a closest product state if and only if
it is a tensor product of orthonormal single-particle states. For
$\rho_{d,N}^{\otimes2}$, any tensor product of identical maximally
entangled states across the cut $A_j^1:A_j^2$ for $j=1,\ldots,N$,
respectively, is a closest product state, and each closest product
state must be of this form if $N\geq3$.
\end{proposition}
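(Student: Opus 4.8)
The plan is to split into the one-copy and two-copy statements, defer the equalities among $G$, $E_{\mathrm R}$ and $R_{\mathrm L}$ to proposition~\ref{pro:ASPtensor} in each case, and so reduce everything to computing $G$ and identifying the optimal product states. For one copy the first line of \eref{eq:ASP2} is nothing but \eref{eq:ASP1}, already in hand: the top eigenspace of $\rho_{d,N}$ is the whole antisymmetric subspace, which contains Slater determinant states, so the chain \eref{eq:lowerbound} is tight, while proposition~\ref{pro:ASPtensor} with $n=1$ collapses it to the single value $\log(N!)$ after subtracting $\log\mathrm{tr}(P_{d,N})$. For the closest product states I would read off from \eref{eq:ass1}, using $P_{d,N}(|a_1\rangle\otimes\cdots\otimes|a_N\rangle)=\frac{1}{\sqrt{N!}}\,|a_1\rangle\wedge\cdots\wedge|a_N\rangle$, that
\[
\langle\varphi_N|\rho_{d,N}|\varphi_N\rangle=\frac{1}{N!\,\mathrm{tr}(P_{d,N})}\bigl\||a_1\rangle\wedge\cdots\wedge|a_N\rangle\bigr\|^2\le\frac{1}{N!\,\mathrm{tr}(P_{d,N})}=\Lambda^2(\rho_{d,N}),
\]
with equality exactly when $|a_1\rangle\wedge\cdots\wedge|a_N\rangle$ is normalized, i.e.\ exactly when $|a_1\rangle,\ldots,|a_N\rangle$ are orthonormal.

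For two copies I would first note that $\rho_{d,N}^{\otimes2}$ is a symmetric $N$-partite state: exchanging the pair of particles held by parties $j$ and $k$ multiplies any vector in the range of $P_{d,N}\otimes P_{d,N}$ by $(-1)\cdot(-1)=1$. Hence, for $N\ge3$, proposition~\ref{pro:symmetric} lets me take the closest product state symmetric, $|\varphi_N\rangle=|a_V\rangle^{\otimes N}$ as in \eref{eq:V} with $\mathrm{tr}(VV^\dagger)=d$, and then \eref{eq:App:inner3}, specialized exactly as in \eref{eq:inner} and \eref{eq:esp}, gives
\[
\langle\varphi_N|P_{d,N}^{\otimes2}|\varphi_N\rangle=\frac{1}{d^N}\sum_{0\le j_1<\cdots<j_N\le d-1}\mu_{j_1}\cdots\mu_{j_N}=:\frac{1}{d^N}\,e_N(\mu_0,\ldots,\mu_{d-1}),
\]
the $N$th elementary symmetric polynomial of the eigenvalues $\mu_0,\ldots,\mu_{d-1}\ge0$ of $VV^\dagger$, which obey $\sum_j\mu_j=d$. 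Since elementary symmetric polynomials are Schur-concave \cite{bz06}, this quantity attains its maximum on the simplex $\{\mu_j\ge0:\sum_j\mu_j=d\}$ at $\mu_0=\cdots=\mu_{d-1}=1$, with value ${d\choose N}$. Therefore $\Lambda^2(\rho_{d,N}^{\otimes2})={d\choose N}\big/\!\bigl[d^N\mathrm{tr}(P_{d,N})^2\bigr]=1/\bigl[d^N{d\choose N}\bigr]$, so $G(\rho_{d,N}^{\otimes2})-2\log\mathrm{tr}(P_{d,N})=\log\frac{d^NN!(d-N)!}{d!}$, and proposition~\ref{pro:ASPtensor} (equivalently proposition~\ref{ProH3}) with $n=2$, $d_1=d_2=d$ carries this over to $E_{\mathrm R}$ and $R_{\mathrm L}$, yielding the second line of \eref{eq:ASP2}. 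For the closest product states, the maximum above is reached only at $\mu_0=\cdots=\mu_{d-1}=1$ (strict maximality of $e_N$ at the uniform point), i.e.\ only when $V$ is unitary, i.e.\ only when $|a_V\rangle$ is maximally entangled across $A_j^1:A_j^2$; combined with proposition~\ref{pro:symmetric} this shows that for $N\ge3$ every closest product state is a tensor product of $N$ identical maximally entangled states, one across each cut $A_j^1:A_j^2$. Conversely, substituting a unitary $V$ into \eref{eq:inner} gives $\langle\varphi_N|P_{d,N}^{\otimes2}|\varphi_N\rangle=d^{-N}\mathrm{tr}(P_{d,N})$, the maximal value, so any such state is a closest product state — and this direction needs no restriction on $N$.

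The main obstacle I anticipate is securing the two inputs to the two-copy case. The first is the reduction to the symmetric ansatz $|a_V\rangle^{\otimes N}$, which rests on proposition~\ref{pro:symmetric} and hence on $N\ge3$; for $N=2$ one must instead verify the bound $\langle\varphi_N|P_{d,N}^{\otimes2}|\varphi_N\rangle\le d^{-N}{d\choose N}$ directly for arbitrary, not necessarily symmetric, product states, by a short computation with the swap operator. The second is the strict maximality of $e_N$ at the uniform point, which follows because any transfer of mass that makes two of the $\mu_j$ more nearly equal strictly increases $e_N$ whenever the corresponding cofactor $e_{N-2}$ of the remaining entries is positive, the boundary faces of the simplex (some $\mu_j=0$) being handled separately since they give strictly smaller values. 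Everything else is routine manipulation of \eref{eq:App:inner3} together with the already-established propositions~\ref{ProH}, \ref{ProH3} and \ref{pro:ASPtensor}.
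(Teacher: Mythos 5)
Your proposal is correct and follows essentially the same route as the paper: the one-copy case via the wedge-product norm (equivalently \eref{eq:ass1} and \eref{eq:lowerbound}), the two-copy case via the symmetric ansatz \eref{eq:V}, the elementary symmetric polynomial \eref{eq:esp} and Schur concavity, with proposition~\ref{pro:ASPtensor} transferring everything to $E_{\mathrm{R}}$ and $R_{\mathrm{L}}$. The only cosmetic difference is at $N=2$: rather than a separate swap-operator computation, the paper invokes corollary 5 of \cite{hkw09} (the closest product state to a symmetric state \emph{can be chosen} symmetric, for any $N$) to get the value, reserving proposition~\ref{pro:symmetric} only for the claim that \emph{every} closest product state is of the stated form when $N\geq3$.
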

GM, REE and LGR of $\rho_{d,N}$ are all non-additive if $d\geq3$ and
$2\leq N\leq d$. Moreover, REE (LGR) of $\rho_{d,N}^{\otimes2}$ is
almost equal to REE (LGR) of $\rho_{d,N}$ if $d\gg1$.

When $d=N$, $\rho_{d,N}=P_{d,N}=|\psi_{N-}\rangle\langle\psi_{N-}|$,
equation \eref{eq:ASP2} reduces to
\begin{eqnarray}\label{eq:asbs}
&&R_{\mathrm{L}}(|\psi_{N-}\rangle^{\otimes2})=E_{\mathrm{R}}(|\psi_{N-}\rangle^{\otimes2})=G(|\psi_{N-}\rangle^{\otimes2})=N\log
N.
\end{eqnarray}
Compared with \eref{eq:ASP1},  GM, REE and LGR of the antisymmetric
basis state are all non-additive if $N\geq3$. Moreover, GM, REE and
LGR of $|\psi_{N-}\rangle^{\otimes2}$ are almost equal to that of
$|\psi_{N-}\rangle$ if $N\gg1$.

Recall that GM, REE and LGR are all equal to $\log N!$ for  the
antisymmetric basis state $|\psi_{N-}\rangle$ according to
\eref{eq:ASP1}, and they are all equal to $N\log N-\log N!$ for the
symmetric basis state $|\psi_{N+}\rangle$ according to
\eref{eq:DickeREEGM}. Since $|\psi_{N+}\rangle$ is non-negative,
theorem~\ref{thm:additivityGnon-negative2} and
proposition~\ref{ProH2} imply that
\begin{eqnarray}
&&R_{\mathrm{L}}(|\psi_{N+}\rangle\otimes|\psi_{N-}\rangle)=E_{\mathrm{R}}(|\psi_{N+}\rangle\otimes|\psi_{N-}\rangle)
=G(|\psi_{N+}\rangle\otimes|\psi_{N-}\rangle)\nonumber\\&&
=G(|\psi_{N+}\rangle)+G(|\psi_{N-}\rangle)=N\log N.
\end{eqnarray}
Surprisingly, GM, REE and LGR are all equal to $N\log N$ for both
$|\psi_{N-}\rangle^{\otimes2}$ and
$|\psi_{N+}\rangle\otimes|\psi_{N-}\rangle$. It is not known whether
this is just a coincidence, or there is a deep reason.

When $N=2$, $\rho_{d,N}$  is an antisymmetric  Werner state, and
\eref{eq:ASP2} reduces to
\begin{eqnarray}
&&R_{\mathrm{L}}\bigl(\rho_{d,2}^{\otimes2}\bigr)=E_{\mathrm{R}}\bigl(\rho_{d,2}^{\otimes2}\bigr)
=G\bigl(\rho_{d,2}^{\otimes2}\bigr)-2\log\frac{d(d-1)}{2}=\log\frac{2d}{d-1}.
\end{eqnarray}
REE of two copies of antisymmetric Werner states was derived by
Vollbrecht and Werner \cite{vw01}, who discovered the Werner state
as the first counterexample to the additivity of REE. In the case of
two-qutrit antisymmetric Werner state, the non-additivity of the
three measures is in contrast with the additivity of entanglement of
formation \cite{Yura}.

\Eref{eq:ASP2} can also be generalized to the tensor product of two
antisymmetric projector states whose respective single-particle
Hilbert spaces have different dimensions, say $d_1, d_2$,
respectively. Suppose $N\leq d_1\leq d_2$, with a similar reasoning
that leads to \eref{eq:ASP2}, one can show that
\begin{eqnarray}
&&R_{\mathrm{L}}\bigl(\rho_{d_1,N}\otimes\rho_{d_2,N}\bigr)=E_{\mathrm{R}}\bigl(\rho_{d_1,N}\otimes\rho_{d_2,N}\bigr)\nonumber\\
&&=G\bigl(\rho_{d_1,N}\otimes\rho_{d_2,N}\bigr)-\log\bigl[\mathrm{tr}\bigl(P_{d_1,N}\bigl)\bigr]
-\log\bigl[\mathrm{tr}\bigl(P_{d_2,N}\bigl)\bigr]\nonumber\\
&&=\log\frac{d_1^NN!(d_1-N)!}{d_1!}.
\end{eqnarray}
Interestingly, REE and LGR of $\rho_{d_1,N}\otimes\rho_{d_2,N}$ are
independent of $d_2$, as long as $N\leq d_1\leq d_2$.

The antisymmetric projector state can be seen as a uniform mixture
of Slater determinant states. The above results on REE and LGR can
also be generalized to an arbitrary mixture of Slater determinant
states. Let $\rho_N=\sum_j p_j|\psi_j\rangle\langle\psi_j|$, where
$|\psi_j\rangle$s are $N$-partite Slater determinant states, and
$\{p_j\}$ is a probability distribution. Due to the convexity of REE
and LGR,
\begin{eqnarray}\label{eq:MSlater1}
E_{\mathrm{R}}(\rho_N^{\otimes n})\leq
E_{\mathrm{R}}(|\psi_{N-}\rangle^{\otimes n}),\qquad
R_{\mathrm{L}}(\rho_N^{\otimes n})\leq
R_{\mathrm{L}}(|\psi_{N-}\rangle^{\otimes n}).
\end{eqnarray}
On the other hand, since $\rho_N$ can be turned into the
antisymmetric projector state by twirling,
\begin{eqnarray}\label{eq:MSlater2}
E_{\mathrm{R}}(\rho_N^{\otimes n})\geq
E_{\mathrm{R}}(\rho_{d,N}^{\otimes n}),\qquad
R_{\mathrm{L}}(\rho_N^{\otimes n})\geq
R_{\mathrm{L}}(\rho_{d,N}^{\otimes n}).
\end{eqnarray}
Combining \eref{eq:MSlater1}, \eref{eq:MSlater2} and
proposition~\ref{pro:ASP2}, we obtain
\begin{proposition}
REE and LGR of  any convex mixture $\rho_N$ of $N$-partite Slater
determinant states  satisfy the following equations:
\begin{eqnarray}
  & E_{\mathrm{R}}(\rho_N) = R_{\mathrm{L}}(\rho_N)= \log N!, &\nonumber\\
  &\log\frac{d^NN!(d-N)!}{d!}\leq
   E_{\mathrm{R}}(\rho_N^{\otimes2}) \leq
  R_{\mathrm{L}}(\rho_N^{\otimes2}) \leq N\log N.&
\end{eqnarray}
If $N\geq 3$, GM, REE and LGR of any convex mixture of Slater
determinant states are all non-additive.
\end{proposition}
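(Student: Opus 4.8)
The plan is to trap $E_{\mathrm R}(\rho_N^{\otimes n})$ and $R_{\mathrm L}(\rho_N^{\otimes n})$, for $n=1,2$, between a convexity upper bound and a twirling lower bound, and then read off the endpoint values from proposition~\ref{pro:ASP2}; the non-additivity statement will then drop out. I would first record two structural facts. (i) Every $N$-partite Slater determinant state is obtained from $\ket{\psi_{N-}}$ by local unitaries and local isometric embeddings of the single-particle spaces (as noted in \sref{sec:ass1}), so, rotating the $n$ copies independently on the subsystems $A_l^1,\ldots,A_l^n$ of party $A_l$, any product $\ket{\psi_{j_1}}\otimes\cdots\otimes\ket{\psi_{j_n}}$ of Slater determinant states is related to $\ket{\psi_{N-}}^{\otimes n}$ in the same way; since $E_{\mathrm R}$ and $R_{\mathrm L}$ are invariant under such maps, the pertinent values are those in \eref{eq:ASP1} and \eref{eq:asbs}. (ii) The simultaneous twirl $\Gamma(X):=\int_{U(d)}U^{\otimes N}XU^{\dagger\otimes N}\, \mathrm{d} U$ is a mixture of unitaries acting locally on the parties $A_1,\ldots,A_N$ implemented with shared randomness, hence LOCC; and because $\mathcal H_-$ carries an irreducible representation of $U(d)$ under $U\mapsto U^{\otimes N}$ (the input already used for proposition~\ref{pro:ASPtensor}), Schur's lemma forces $\Gamma$ to send any state supported in $\mathcal H_-$ to $\frac{1}{\mathrm{tr}P_{d,N}}P_{d,N}$, so $\Gamma(\rho_N)=\rho_{d,N}$ and $\Gamma^{\otimes n}(\rho_N^{\otimes n})=\rho_{d,N}^{\otimes n}$.

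For the upper bounds, write $\rho_N^{\otimes n}=\sum p_{j_1}\cdots p_{j_n}\,\proj{\psi_{j_1}}\otimes\cdots\otimes\proj{\psi_{j_n}}$, a convex combination of states each equivalent to $\ket{\psi_{N-}}^{\otimes n}$ under the maps of (i). Convexity of REE and of the global robustness $R_{\mathrm g}$, together with their invariance under those maps, then give $E_{\mathrm R}(\rho_N^{\otimes n})\le E_{\mathrm R}(\ket{\psi_{N-}}^{\otimes n})$ and $R_{\mathrm g}(\rho_N^{\otimes n})\le R_{\mathrm g}(\ket{\psi_{N-}}^{\otimes n})$; applying the increasing function $\log(1+\cdot)$ to the latter yields $R_{\mathrm L}(\rho_N^{\otimes n})\le R_{\mathrm L}(\ket{\psi_{N-}}^{\otimes n})$. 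By \eref{eq:ASP1} these upper bounds are $\log N!$ for $n=1$, and by \eref{eq:asbs} they are $N\log N$ for $n=2$. For the lower bounds, apply LOCC-monotonicity of $E_{\mathrm R}$ and of $R_{\mathrm L}$ to $\Gamma^{\otimes n}$: $E_{\mathrm R}(\rho_N^{\otimes n})\ge E_{\mathrm R}(\rho_{d,N}^{\otimes n})$ and $R_{\mathrm L}(\rho_N^{\otimes n})\ge R_{\mathrm L}(\rho_{d,N}^{\otimes n})$, which by proposition~\ref{pro:ASP2} are $\log N!$ for $n=1$ and $\log\frac{d^NN!(d-N)!}{d!}$ for $n=2$. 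Matching the two bounds at $n=1$ gives $E_{\mathrm R}(\rho_N)=R_{\mathrm L}(\rho_N)=\log N!$; at $n=2$ they give the asserted chain $\log\frac{d^NN!(d-N)!}{d!}\le E_{\mathrm R}(\rho_N^{\otimes2})\le R_{\mathrm L}(\rho_N^{\otimes2})\le N\log N$, the middle step being \eref{eq:REE-GM}.

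For the non-additivity with $N\ge3$: any convex mixture of $N$-partite Slater determinant states is supported in $\mathcal H_-$, hence is an antisymmetric state, so theorem~\ref{thm:non-additivity} with $\rho_N'=\rho_N$ gives $G(\rho_N^{\otimes2})<2G(\rho_N)$ at once, so GM is non-additive. For REE and LGR it is enough that $N\log N<2\log N!$ whenever $N\ge3$, i.e.\ $N^N<(N!)^2=\prod_{k=1}^N k(N+1-k)$: each factor $k(N+1-k)$ is at least $N$ for $1\le k\le N$ and strictly larger for $2\le k\le N-1$, and such a $k$ exists once $N\ge3$ (at $N=2$ the inequality degenerates to equality, matching the hypothesis). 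Combining this with the upper bounds above, $E_{\mathrm R}(\rho_N^{\otimes2})\le R_{\mathrm L}(\rho_N^{\otimes2})\le N\log N<2\log N!=2E_{\mathrm R}(\rho_N)=2R_{\mathrm L}(\rho_N)$, so REE and LGR are non-additive as well.

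The step I expect to require the most care is the twirling lower bound: one has to be sure that $\Gamma$ indeed maps $\rho_N$ onto $\rho_{d,N}$ (which rests on the irreducibility of the $U(d)$-action $U\mapsto U^{\otimes N}$ on $\mathcal H_-$, the same fact that underlies proposition~\ref{pro:ASPtensor}) and that $\Gamma^{\otimes n}$ genuinely is LOCC between $A_1,\ldots,A_N$ — the $U(d)$-averages over the $n$ copies are correlated across the $N$ parties but independent across copies, so $\Gamma^{\otimes n}$ remains a mixture of unitaries acting locally on each $A_l=A_l^1\cdots A_l^n$ and realizable with shared classical randomness — so that the LOCC-monotonicity of $E_{\mathrm R}$ and $R_{\mathrm L}$ may be invoked. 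The convexity half, the local-unitary/isometry reductions, and the elementary inequality $N^N<(N!)^2$ are routine.
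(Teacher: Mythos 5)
Your proposal is correct and follows exactly the paper's route: a convexity upper bound $E_{\mathrm R}(\rho_N^{\otimes n})\le E_{\mathrm R}(|\psi_{N-}\rangle^{\otimes n})$, a twirling (LOCC-monotonicity) lower bound $E_{\mathrm R}(\rho_N^{\otimes n})\ge E_{\mathrm R}(\rho_{d,N}^{\otimes n})$, and then reading off the endpoints from \eref{eq:ASP1}, \eref{eq:asbs} and proposition~\ref{pro:ASP2}, with non-additivity of GM from theorem~\ref{thm:non-additivity} and of REE/LGR from $N\log N<2\log N!$ for $N\ge3$. You supply more detail than the paper (local-isometry equivalence of Slater-determinant products, the LOCC realizability of $\Gamma^{\otimes n}$, Schur's lemma), but the argument is the same.
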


\subsection{\label{sec:ass4} Generalized antisymmetric states}
In all  counterexamples to the additivity of GM considered so far,
the dimension of the single-particle Hilbert space is at least
three. However, this constraint is not necessary. We shall
demonstrate this point with the example of generalized antisymmetric
states.

Let $d,p,k$ be three integers satisfying $k\leq d^p$ and $\phi$ the
function from $1,2,\ldots, d^p$ to $p$-tuples defined as follows,
\begin{eqnarray}
\phi(1)&=&(0,0,\ldots,0,0), \nonumber\\
\phi(2)&=&(0,0,\ldots,0,1),\nonumber\\
&&\vdots \nonumber\\
\phi(d^p)&=&(d-1,d-1,\ldots,d-1,d-1).
\end{eqnarray}
For each triple $d, p, k$, define an $N$-partite state with $N=kp$
as follows,
\begin{eqnarray}
|\psi_{d,p,k}\rangle:=\frac{1}{\sqrt{k!}}\sum_{\sigma}\mathrm{sgn}(\sigma)|\phi(\sigma(1)),\ldots,\phi(\sigma(k))\rangle.
\end{eqnarray}
$|\psi_{d,p,k}\rangle$ can be seen as a $k$-partite antisymmetric
basis state with single-particle Hilbert space of dimension $d^p$,
if we divide the $kp$ parties into $k$ blocks each with $p$ parties,
and view each block as a single party. The state $|\psi_{d, p,
d^p}\rangle$ is exactly the generalized antisymmetric state
introduced by Bravyi \cite{bra03}.

By definition, $\Lambda^2(|\psi_{d,p,k}\rangle)\leq
\Lambda^2(|\psi_{k-}\rangle)=\frac{1}{k!}$, and since
$|\langle\psi_{d,p,k}|\phi(1),\ldots,\phi(k)\rangle|^2=\frac{1}{k!}$,
we have
\begin{equation}
G(|\psi_{d,p,k}\rangle)= G(|\psi_{k-}\rangle)=\log (k!).
\label{eq:GM_GA}
\end{equation}
When $k=d^p$, this result reduces to that found by Bravyi
\cite{bra03}.

To compute REE and LGR of $|\psi_{d,p,k}\rangle$, note that the
separable state
\begin{eqnarray}
&&\rho_{\mathrm{sep}}=\frac{1}{k!}\sum_{\sigma\in
S_k}|\phi(\sigma(1)),\ldots,\phi(\sigma(k))\rangle\langle\phi(\sigma(1)),\ldots,\phi(\sigma(k))|
\end{eqnarray}
 can be written in the  form,
$\rho_{\mathrm{sep}}=\frac{1}{k!}|\psi_{d,p,k}\rangle\langle\psi_{d,p,k}|+\rho_{\perp}$,
where  $\rho_{\perp}$ (not normalized) is orthogonal to
$|\psi_{d,p,k}\rangle\langle\psi_{d,p,k}|$; hence,
$R_{\mathrm{L}}(|\psi_{d,p,k}\rangle)\leq\log(k!)$  according to
\eref{eq:GR} and \eref{eq:LGR}. On the other hand,
$R_{\mathrm{L}}(|\psi_{d,p,k}\rangle)\geq
E_{\mathrm{R}}(|\psi_{d,p,k}\rangle)\geq
G(|\psi_{d,p,k}\rangle)=\log(k!)$, so we obtain
\begin{eqnarray}
R_{\mathrm{L}}(|\psi_{d,p,k}\rangle)=E_{\mathrm{R}}(|\psi_{d,p,k}\rangle)=\log(k!).
\end{eqnarray}
When $k=d^p$, this result reduces to that found by Wei {\it et al}
\cite{weg04}.

Now consider two copies of generalized antisymmetric states. For the
same reason as in the case of single copy,
$\Lambda^2(|\psi_{d,p,d^p}\rangle^{\otimes2})\leq
\Lambda^2(|\psi_{d^p-}\rangle^{\otimes2})=(d^p)^{-d^p} $. Suppose
$|a_j\rangle$ is a $d\otimes d$ maximally entangled state across the
cut $A_j^1:A_j^2$, for $j=1,\ldots,p$, then $
\otimes_{j=1}^p|a_j\rangle$ is a maximally entangled state across
the cut $A_1^1\ldots A_p^1:A_1^2\ldots A_p^2$. According to
proposition~\ref{pro:ASP2} in \sref{sec:ass3}, we obtain
\begin{equation}
\Lambda^2(|\psi_{d,p,d^p}\rangle^{\otimes2})=(d^p)^{-d^p},\qquad
G(|\psi_{d,p,d^p}\rangle^{\otimes2})=d^p\log(d^p).
\end{equation}
GM of generalized antisymmetric states are also non-additive when
$d^p>2$, the eight-qubit state $|\psi_{2,2,4}\rangle$ being one of
such examples. It is interesting to know if there exists a
multiqubit state with fewer number of parties whose GM is
non-additive.

When $k<d^p$, it is  not so easy to compute GM of
$|\psi_{d,p,k}\rangle^{\otimes2}$. Nevertheless, a good upper bound
is enough to reveal the non-additivity of GM in many cases. For
example, let
$|\ph\rangle=\bigl(\frac{1}{\sqrt{d}}\sum_{j=0}^{d-1}|jj\rangle\bigr)^{\otimes
kp}$, then
$\Lambda^2(|\psi_{d,p,k}\rangle^{\otimes2})\geq|\langle\psi_{d,p,k}^{\otimes
2}|\ph\rangle|^2=\frac{1}{d^{kp}}$ and thus
$G(|\psi_{d,p,k}\rangle^{\otimes2})\leq kp\log d$. In conjunction
with \eref{eq:GM_GA}, we can discover many multiqubit states whose
GM is non-additive, such as $|\psi_{2,3,6}\rangle$,
$|\psi_{2,3,7}\rangle$ and $|\psi_{2,3,8}\rangle$.

\section{\label{sec:NonAddG} Non-additivity of geometric measure
 of generic multipartite  states}

Many examples and counterexamples to the additivity of GM presented
in the previous sections invite the following question: What is the
typical behavior concerning the additivity property of GM of
multipartite states, additive or non-additive? In this section, we
show that if the number of parties is sufficiently large, and the
dimensions of the local Hilbert spaces are comparable, then GM is
not strong additive for almost all pure multipartite states. What's
more surprising, for generic pure states with real entries in the
computational basis,  GM for one copy and two copies, respectively,
are almost equal. This conclusion follows from the following two
observations which are of independent interest: First, almost all
multipartite pure states are nearly maximally entangled with respect
to  GM and REE; second, there is a nontrivial universal upper bound
for GM of multipartite states with tensor-product form. Our results
have  significant implications for universal one-way quantum
computation and to asymptotic state transformation under LOCC.

\subsection{\label{sec:UUB}Universal upper bound for the geometric measure of multipartite states with tensor-product form}
In this section, we derive a universal upper bound for GM of the
tensor product of two multipartite states, and discuss its
implications.
\begin{proposition}\label{pro:upperboundU}
Suppose $\rho_N$ and $\rho_N^\prime$ are two $N$-partite states on
the Hilbert space $\bigotimes^N_{j=1} \mathcal{H}_j$ with
$\mathrm{Dim}~{\mathcal{H}_j}=d_j$; define $d_T=\prod_{j=1}^Nd_j$;
then  $G(\rho_N\otimes\rho_N^\prime)\leq \log
d_T-\log\mathrm{tr}(\rho_N\rho_N^{\prime*})$. In particular,
$G(\rho_N\otimes\rho_N^*)\leq \log d_T-\log\mathrm{tr}(\rho_N^2)$;
$G^\infty(\rho_N)\leq \frac{1}{2}G(\rho_N^{\otimes2})\leq
\frac{1}{2}\log d_T-\frac{1}{2}\log\mathrm{tr}(\rho_N\rho_N^*)$.
\end{proposition}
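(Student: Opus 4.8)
The plan is to bound $G(\rho_N\otimes\rho_N^\prime)$ from above by exhibiting an explicit fully product pure state that has large overlap with $\rho_N\otimes\rho_N^\prime$. Since $G(\rho)=-\log\Lambda^2(\rho)$ and $\Lambda^2(\rho)\geq\langle\varphi|\rho|\varphi\rangle$ for every fully product pure state $|\varphi\rangle$, a lower bound on $\Lambda^2$ translates directly into an upper bound on $G$. With respect to the coarse-grained $N$-partite cut in which $A_j^1$ and $A_j^2$ are held by the single party $A_j$, the state I would use is $|\varphi_N\rangle=\bigotimes_{j=1}^N|\Phi_j\rangle$, where $|\Phi_j\rangle=\frac{1}{\sqrt{d_j}}\sum_{k=0}^{d_j-1}|k\rangle_{A_j^1}|k\rangle_{A_j^2}$ is the maximally entangled state across the cut $A_j^1:A_j^2$; this is legitimate because, for the $N$-partite structure, each $|\Phi_j\rangle$ lives entirely in the $j$-th party's Hilbert space $\mathcal{H}_j\otimes\mathcal{H}_j$. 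This is the same device already used in \sref{sec:ass3}.

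The key computation is to evaluate $\langle\varphi_N|\rho_N\otimes\rho_N^\prime|\varphi_N\rangle$. Reorganizing the tensor factors, $|\varphi_N\rangle$ is exactly the maximally entangled state $|\Phi\rangle=\frac{1}{\sqrt{d_T}}\sum_{\vec k}|\vec k\rangle\otimes|\vec k\rangle$ between the two $N$-partite systems $A_1^1\cdots A_N^1$ and $A_1^2\cdots A_N^2$, each of total dimension $d_T=\prod_{j=1}^Nd_j$. Invoking the standard overlap identity $\langle\Phi|X\otimes Y|\Phi\rangle=\frac{1}{d_T}\mathrm{tr}(XY^T)$ (a special case of \eref{eq:App:inner3}), with $X=\rho_N$, $Y=\rho_N^\prime$, and using $(\rho_N^\prime)^T=(\rho_N^\prime)^*$ which holds because $\rho_N^\prime$ is Hermitian, one obtains $\langle\varphi_N|\rho_N\otimes\rho_N^\prime|\varphi_N\rangle=\frac{1}{d_T}\mathrm{tr}(\rho_N\rho_N^{\prime*})$. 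Therefore $\Lambda^2(\rho_N\otimes\rho_N^\prime)\geq\frac{\mathrm{tr}(\rho_N\rho_N^{\prime*})}{d_T}$, and applying $-\log$ (a decreasing function) yields $G(\rho_N\otimes\rho_N^\prime)\leq\log d_T-\log\mathrm{tr}(\rho_N\rho_N^{\prime*})$. Note that $\mathrm{tr}(\rho_N\rho_N^{\prime*})=\mathrm{tr}(\rho_N^{1/2}\rho_N^{\prime*}\rho_N^{1/2})\geq0$ since $\rho_N^{\prime*}$ is positive semidefinite; in the degenerate case where it vanishes the claimed inequality is vacuous, so one may assume it is strictly positive.

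The special cases then drop out. Setting $\rho_N^\prime=\rho_N^*$ gives $\mathrm{tr}(\rho_N\rho_N^{\prime*})=\mathrm{tr}(\rho_N^2)$, hence $G(\rho_N\otimes\rho_N^*)\leq\log d_T-\log\mathrm{tr}(\rho_N^2)$. Setting $\rho_N^\prime=\rho_N$ gives $G(\rho_N^{\otimes2})\leq\log d_T-\log\mathrm{tr}(\rho_N\rho_N^*)$. For the asymptotic bound I would use the subadditivity of GM (equivalently the inequality $\Lambda^2(\rho\otimes\sigma)\geq\Lambda^2(\rho)\Lambda^2(\sigma)$ noted earlier), which makes $n\mapsto G(\rho_N^{\otimes n})$ subadditive; by Fekete's lemma $G^\infty(\rho_N)=\inf_n\frac1n G(\rho_N^{\otimes n})\leq\frac12 G(\rho_N^{\otimes2})$, and combining this with the previous line completes the chain $G^\infty(\rho_N)\leq\frac12 G(\rho_N^{\otimes2})\leq\frac12\log d_T-\frac12\log\mathrm{tr}(\rho_N\rho_N^*)$.

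I do not anticipate a real obstacle here: the argument is short and reuses the maximally-entangled-state construction already exploited in \sref{sec:ass2} and \sref{sec:ass3}. The only points needing a little care are getting the transpose right in the overlap identity (so that Hermiticity of $\rho_N^\prime$ turns $Y^T$ into $\rho_N^{\prime*}$), confirming that $|\varphi_N\rangle$ is a valid fully product state for the coarse-grained $N$-partite structure, and observing that the bound holds trivially when $\mathrm{tr}(\rho_N\rho_N^{\prime*})=0$.
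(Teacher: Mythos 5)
Your proposal is correct and follows essentially the same route as the paper: the authors also take $|\varphi_N\rangle=\bigotimes_{j=1}^N|\phi_{d_j}^+\rangle$ with a maximally entangled state across the two copies of each party, apply the overlap identity \eref{eq:App:innerG} to get $\langle\varphi_N|\rho_N\otimes\rho_N'|\varphi_N\rangle=\mathrm{tr}(\rho_N\rho_N'^*)/d_T$, and conclude $\Lambda^2(\rho_N\otimes\rho_N')\geq \mathrm{tr}(\rho_N\rho_N'^*)/d_T$. Your added remarks on the transpose-versus-conjugate bookkeeping, the vacuous case $\mathrm{tr}(\rho_N\rho_N'^*)=0$, and the use of subadditivity plus Fekete's lemma for $G^\infty(\rho_N)\leq\frac12 G(\rho_N^{\otimes 2})$ are correct fillings-in of details the paper leaves implicit.
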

\begin{proof}
Let $|\varphi_N\rangle=\bigotimes_{j=1}^N |\phi_{d_j}^+\rangle$,
where
$|\phi_{d_j}^+\rangle=\frac{1}{\sqrt{d_j}}\sum_{k=0}^{d_j-1}|kk\rangle$
is a maximally entangled state (also a pure isotropic state) across
the two copies of the $j$th party. According to \eref{eq:App:innerG}
in the Appendix, $\langle
\varphi_N|\rho_N\otimes\rho_N^\prime|\varphi_N\rangle=\mathrm{tr}(\rho_N\rho_N^{\prime*})/d_T$;
hence, $\Lambda^2(\rho_N\otimes
\rho_N^\prime)\geq\mathrm{tr}(\rho_N\rho_N^{\prime*})/d_T$ and thus
$G(\rho_N\otimes \rho_N^\prime)\leq \log
d_T-\log\mathrm{tr}(\rho_N\rho_N^{\prime*})$.
\end{proof}

If GM of $\rho_N$ is strong additive and thus
$G(\rho_N\otimes\rho_N^*)=G(\rho_N)+G(\rho_N^*)=2G(\rho_N)$,
proposition~\ref{pro:upperboundU} implies that $G(\rho_N)\leq
\frac{1}{2}\log d_T-\frac{1}{2}\log\mathrm{tr}(\rho_N^2)$.  In other
words, GM cannot be strong additive if the states  are too entangled
with respect to GM. This intuition will be made more rigorous in
theorem~\ref{thm:GMNonAdd}.
 For states with real entries in the
computational basis (real states for short),
proposition~\ref{pro:upperboundU} sets a universal upper bound for
$G(\rho_N^{\otimes 2})$ and $G^\infty(\rho_N)$, that is,
$G^\infty(\rho_N)\leq \frac{1}{2}G(\rho_N^{\otimes2})\leq
\frac{1}{2}\log d_T-\frac{1}{2}\log\mathrm{tr}(\rho_N^2)$. According
to a similar reasoning as in the proof of
proposition~\ref{pro:upperboundU}, the same upper bound also applies
to any state that is equivalent to its complex conjugate under local
unitary transformations. Hence, GM cannot be  additive  if such
states are too entangled with respect to GM.

If $d_j=d, \forall j$ and $d\geq N$, the universal upper bound for
$G(\rho_N^{\otimes 2})$ of real states $\rho_N$ given in
proposition~\ref{pro:upperboundU} is saturated for the antisymmetric
projector states (see \sref{sec:ass3}). If  $d_j=d, \forall j$ and
$N$ is even, there is a simple scheme for constructing a pure state
whose GM saturates the upper bound: Divide the parties into
$\frac{N}{2}$ pairs, and choose a maximally entangled state for each
pair of parties, then the tensor product of the $\frac{N}{2}$
maximally entangled states (note that all the entries of the state
can be made real by a suitable local unitary transformation) is such
a candidate. Moreover, GM of the state so constructed is additive,
so are REE and LGR. A more attractive example which saturates the
upper bound is the cluster state with even number of qubits, whose
GM, REE and LGR are all equal to $N/2$ and are additive
\cite{mmv07}. Hence, proposition~\ref{pro:upperboundU} implies that,
in any multipartite Hilbert space with even number of parties and
equal local dimension, any pure state with real entries in the
computational basis cannot be more entangled with respect to AGM
than the tensor product of bipartite maximally entangled states, or
the cluster state, for a multiqubit system.

If $N$ is odd, however, there may exists no pure state (even with
complex entries in the computational basis) that can saturate the
upper bound given in proposition~\ref{pro:upperboundU}. For example,
W state has been shown to be the maximally entangled state with
respect to GM among pure three-qubit states \cite{cxz10,twp09},
while its AGM, which equals to its GM $\log\frac{9}{4}$, is strictly
smaller than the upper bound $\frac{3}{2}\log2$ given in the
proposition.

It is interesting to know whether the same bound is true for states
with arbitrary entries and whether there is a similar universal
upper bound for REE and LGR; in particular, whether AREE or ALGR is
 upper bounded by $\frac{1}{2}\log d_T$. It is also not clear
whether REE and LGR are not strong additive for generic multipartite
states. We have shown in \sref{sec:ass3} that AREE  is upper bounded
by $\frac{1}{2}\log d_T$ for antisymmetric basis states. The same is
true for all symmetric basis states, according to \eref{DickeAREE}.
However, a complete picture is still missing. We hope that our
results can stimulate more progress along this direction.

\subsection{\label{sec:NonAddThm}Non-additivity theorem for  geometric measure of generic multipartite states: a statistical approach}
In this section we prove the following theorem.
\begin{theorem}
\label{thm:GMNonAdd} Suppose pure states  are drawn according to the
Haar measure from the Hilbert space $\bigotimes^N_{j=1}
\mathcal{H}_j$ with $N\geq3$ and $\mathrm{Dim}~{\mathcal{H}_j}=d_j$
($d_j\geq 2, \forall j$); define $d_T=\prod_{j=1}^N d_j$ and
$d_S=\sum_{j=1}^N d_j$. The fraction of pure states whose GM is
strong additive is smaller than
$\exp\bigl[-\frac{2}{3}\sqrt{d_T}+d_S\ln(59Nd_T)\bigr]$; the
fraction of pure states $|\psi\rangle$ such that  $\bigl[\log
d_T-\log(d_S\ln d_T)-\log\frac{9}{2}\bigr]\leq  G(|\psi\rangle)<
G(|\psi\rangle\otimes|\psi^*\rangle )\leq \log d_T$ is larger than
$1-d_T^{-d_S}$. For pure states with real entries in the
computational basis, the fraction of pure states whose GM is
additive is smaller than
$\exp\bigl[-\frac{1}{3}\sqrt{d_T}+d_S\ln(59Nd_T)\bigr]$; the
fraction of pure states $|\psi\rangle$ such that  $\bigl[\log
d_T-\log(d_S\ln d_T)-\log9\bigr]\leq  G(|\psi\rangle)<
G(|\psi\rangle^{\otimes 2})\leq \log d_T$ is larger than
$1-d_T^{-d_S}$.
\end{theorem}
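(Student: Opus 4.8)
The plan is to show that all four assertions are governed by a single quantity, the maximal squared overlap $\Lambda^{2}(|\psi\rangle)=\max_{|\varphi\rangle\in\mathrm{PRO}}|\langle\varphi|\psi\rangle|^{2}$, and then to control its distribution by a net-plus-concentration argument. First I would record the reductions. If GM of $|\psi\rangle$ is strong additive, then taking $\sigma=|\psi^{*}\rangle\langle\psi^{*}|$ and using $G(|\psi^{*}\rangle)=G(|\psi\rangle)$ (the set of product states is closed under complex conjugation and $|\langle\varphi^{*}|\psi^{*}\rangle|=|\langle\varphi|\psi\rangle|$) together with proposition~\ref{pro:upperboundU} gives $2G(|\psi\rangle)=G(|\psi\rangle\otimes|\psi^{*}\rangle)\le\log d_{T}$, i.e. $\Lambda^{2}(|\psi\rangle)\ge d_{T}^{-1/2}$. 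If instead $|\psi\rangle$ has real entries and its GM is additive, then $|\psi\rangle^{\otimes2}=|\psi\rangle\otimes|\psi^{*}\rangle$ and proposition~\ref{pro:upperboundU} yields $G(|\psi\rangle)=G^{\infty}(|\psi\rangle)\le\frac12 G(|\psi\rangle^{\otimes2})\le\frac12\log d_{T}$, again $\Lambda^{2}(|\psi\rangle)\ge d_{T}^{-1/2}$. So the first and third assertions reduce to bounding $\Pr[\Lambda^{2}(|\psi\rangle)\ge d_{T}^{-1/2}]$. For the second and fourth, proposition~\ref{pro:upperboundU} already gives $G(|\psi\rangle\otimes|\psi^{*}\rangle)\le\log d_{T}$ (and $G(|\psi\rangle^{\otimes2})\le\log d_{T}$ in the real case), so those reduce to the lower bound $G(|\psi\rangle)\ge\log d_{T}-\log(d_{S}\ln d_{T})-\log(9/2)$, i.e. $\Lambda^{2}(|\psi\rangle)\le(9/2)\,d_{S}\ln d_{T}/d_{T}$ (with $9$ in place of $9/2$ for real states), together with the strict inequality $G(|\psi\rangle)<G(|\psi\rangle\otimes|\psi^{*}\rangle)$ treated below.

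For the concentration step, fix $|\varphi\rangle\in\mathrm{PRO}$. For a Haar-random $|\psi\rangle$ one has $\mathbb{E}\,|\langle\varphi|\psi\rangle|^{2}=\langle\varphi|\,\mathbb{E}(|\psi\rangle\langle\psi|)\,|\varphi\rangle=1/d_{T}$ (for both the complex and the real ensembles, since $\mathbb{E}(|\psi\rangle\langle\psi|)=I/d_{T}$ by unitary/orthogonal invariance), and $|\psi\rangle\mapsto|\langle\varphi|\psi\rangle|$ is $1$-Lipschitz on the unit sphere, which is $S^{2d_{T}-1}$ in the complex case and $S^{d_{T}-1}$ in the real case; write $m$ for the corresponding dimension parameter ($m=2d_{T}$, resp. $m=d_{T}$). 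L\'evy's inequality then gives $\Pr[\,|\langle\varphi|\psi\rangle|>d_{T}^{-1/2}+r\,]\le 2\exp(-mr^{2}/2)$, using $\mathbb{E}\,|\langle\varphi|\psi\rangle|\le(\mathbb{E}\,|\langle\varphi|\psi\rangle|^{2})^{1/2}=d_{T}^{-1/2}$. To pass from a net to all product states I build, in each of the $N$ factors, an $(\epsilon/N)$-net of the unit sphere of $\mathcal{H}_{j}$ of cardinality $\le(3N/\epsilon)^{2d_{j}}$; their tensor products form an $\epsilon$-net $\mathcal{N}$ of $\mathrm{PRO}$ with $|\mathcal{N}|\le(3N/\epsilon)^{2d_{S}}$, and $1$-Lipschitzness gives $\Lambda(|\psi\rangle)\le\max_{|\varphi\rangle\in\mathcal{N}}|\langle\varphi|\psi\rangle|+\epsilon$. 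A union bound then yields
\begin{equation}
\Pr\bigl[\,\Lambda(|\psi\rangle)>d_{T}^{-1/2}+r+\epsilon\,\bigr]\le(3N/\epsilon)^{2d_{S}}\cdot2\exp(-mr^{2}/2).
\end{equation}

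It remains to choose the parameters. For the (strong) additivity bound, fix $\epsilon$ of order $d_{T}^{-1/2}$ and $r$ with $d_{T}^{-1/2}+r+\epsilon\le d_{T}^{-1/4}$ and $r^{2}\ge(2/3)d_{T}^{-1/2}$ (possible once $d_{T}$ is large relative to $N$); then $\Pr[\Lambda^{2}(|\psi\rangle)\ge d_{T}^{-1/2}]\le(3N/\epsilon)^{2d_{S}}\cdot2\exp(-(m/3)d_{T}^{-1/2})$, whose exponent is $-\frac23\sqrt{d_{T}}$ when $m=2d_{T}$ and $-\frac13\sqrt{d_{T}}$ when $m=d_{T}$; absorbing $2d_{S}\ln(3N/\epsilon)+\ln2$ into $d_{S}\ln(59Nd_{T})$ for a suitable $\epsilon$ gives the two stated bounds. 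For the maximal-entanglement statement, pick $r,\epsilon$ with $(d_{T}^{-1/2}+r+\epsilon)^{2}\le(9/2)d_{S}\ln d_{T}/d_{T}$ and $mr^{2}/2\ge d_{S}\ln d_{T}+2d_{S}\ln(3N/\epsilon)+\ln2$ (possible for $d_{T}$ large since $d_{S}\ln d_{T}\gg1$, with $9$ replacing $9/2$ when $m=d_{T}$); then the right-hand side above is $\le d_{T}^{-d_{S}}$, hence $\Pr[G(|\psi\rangle)\ge\log d_{T}-\log(d_{S}\ln d_{T})-\log(9/2)]>1-d_{T}^{-d_{S}}$. On this event $\Lambda^{2}(|\psi\rangle)<1$, so $|\psi\rangle$ is entangled; and one checks $\Lambda^{2}(|\psi\rangle\otimes|\psi^{*}\rangle)\le\Lambda^{2}(|\psi\rangle)$ — tracing out the second copy of a product state over the grouped parties $A_{j}^{1}A_{j}^{2}$ leaves a product state over the first copy, which cannot beat the closest separable state of $|\psi\rangle$ — with equality forcing the optimal product state to factorize across the two copies with second factor $|\psi^{*}\rangle$, impossible for entangled $|\psi\rangle$ (that would make $|\psi^{*}\rangle$ a product state). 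Hence $G(|\psi\rangle)<G(|\psi\rangle\otimes|\psi^{*}\rangle)$, and combined with $G(|\psi\rangle\otimes|\psi^{*}\rangle)\le\log d_{T}$ from proposition~\ref{pro:upperboundU} the full chain holds on an event of probability $>1-d_{T}^{-d_{S}}$ (same with $9/2\to9$ and $|\psi^{*}\rangle\to|\psi\rangle$ in the real case).

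The main obstacle is the constant-chasing: one must invoke a sharp form of L\'evy's inequality (the exponent constant $1/2$, not an unspecified absolute constant) and track the net cardinality and the choice of $\epsilon$ precisely enough to land on $59$ inside $d_{S}\ln(59Nd_{T})$, on the exponents $2/3$ and $1/3$, and on the thresholds $9/2$ and $9$. The reductions and the strict inequality $G(|\psi\rangle)<G(|\psi\rangle\otimes|\psi^{*}\rangle)$ are comparatively routine given proposition~\ref{pro:upperboundU}.
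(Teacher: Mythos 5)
Your architecture is exactly the paper's: the theorem is obtained by combining proposition~\ref{pro:upperboundU} (which gives $2G(|\psi\rangle)=G(|\psi\rangle\otimes|\psi^*\rangle)\le\log d_T$ under strong additivity, $G\le\frac12 G(\cdot^{\otimes2})\le\frac12\log d_T$ under additivity for real states via $G^\infty=\inf_n\frac1n G(\cdot^{\otimes n})$, and the upper bound $G(|\psi\rangle\otimes|\psi^*\rangle)\le\log d_T$) with a concentration statement, the latter proved by a product $\varepsilon$-net on $\mathrm{PRO}$ of cardinality exponential in $d_S$, a pointwise overlap tail bound, and a union bound. The paper states the concentration part separately as proposition~\ref{pro:GMME}. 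The one substantive place where you diverge is the pointwise tail bound: you invoke L\'evy's inequality for the $1$-Lipschitz function $|\langle\varphi|\cdot\rangle|$, whereas the paper computes the distribution of $|\langle\Phi|\Psi\rangle|^2$ exactly, namely $\mathrm{Prob}\{|\langle\Phi|\Psi\rangle|^2\ge\varepsilon\}=(1-\varepsilon)^{d_T-1}$ in the complex case and $\le(1-\varepsilon)^{d_T/2-1}$ in the real case (the latter via an orthogonal rotation reducing $|\langle\Phi|\Psi_{\mathrm R}\rangle|^2$ to $x_0^2+x_1^2$ on the real sphere). This exact computation, together with the specific net $(5\sqrt N/\varepsilon)^{2d_S}$ and the trace-norm transfer $\bigl||\langle\varphi|\Psi\rangle|^2-|\langle\tilde\varphi|\Psi\rangle|^2\bigr|\le\varepsilon/2$, and the choices $\varepsilon=3d_S\ln d_T/d_T$ and $\varepsilon=\frac23 d_T^{-1/2}$, is what produces the literal constants $9/2$, $9$, $2/3$, $1/3$ and $59$. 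You explicitly defer this constant-chasing, and that is the one genuine gap relative to the statement as written: a generic L\'evy constant (the usual forms carry a prefactor $2$ and an exponent constant that is not the sharp one for this function) will give bounds of the same shape but not these numbers, so to close the argument you would in effect have to fall back on the exact beta-type distribution the paper uses. A second, minor point: the strict inequality $G(|\psi\rangle)<G(|\psi\rangle\otimes|\psi^*\rangle)$ is something the paper itself does not argue explicitly; your monotonicity step $\Lambda^2(\rho\otimes\sigma)\le\Lambda^2(\rho)$ is fine once phrased as $\langle\varphi|\rho\otimes\sigma|\varphi\rangle\le\langle\varphi|\rho\otimes I|\varphi\rangle=\mathrm{tr}\bigl[\rho\,\mathrm{tr}_{A^2}(|\varphi\rangle\langle\varphi|)\bigr]\le\Lambda^2(\rho)$ using \eref{eq:EE}, but your equality-case analysis (``forces the optimal product state to factorize'') is asserted rather than proved and would need to be fleshed out; as it stands this is the weakest link if the strict inequality is insisted upon.
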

Theorem~\ref{thm:GMNonAdd} implies that  GM is not strong additive
for almost all pure multipartite states, if the number of parties is
sufficiently large, and the dimensions of the local Hilbert spaces
 are comparable. Moreover, GM of $|\psi\rangle$ and $|\psi\rangle\otimes|\psi^*\rangle$, respectively, is almost equal. If the dimensions
of the local Hilbert spaces are equal,  the probability that GM is
strong additive decreases doubly exponentially with the number of
parties $N$. Concerning real states, GM is non-additive for almost
all pure multipartite states, and GM of one copy and two copies,
respectively, is almost equal. The generalization to mixed states is
immediate, since GM of any mixed state is equal to GM of its
purification \cite{jung08} (see also \eref{eq:ReduceGM}).

Theorem~\ref{thm:GMNonAdd} is an immediate consequence of
proposition~\ref{pro:upperboundU} in \sref{sec:UUB} and proposition
\ref{pro:GMME} presented below. The later proposition, which is
inspired by a similar result on multiqubit pure states of
\cite{GFE09}, shows that almost all multipartite pure states are
nearly maximally entangled with respect to GM.

\begin{proposition}\label{pro:GMME}
Suppose pure states  are drawn according to the Haar measure from
the Hilbert space $\bigotimes^N_{j=1} \mathcal{H}_j$ with $N\geq3$
and $\mathrm{Dim}~{\mathcal{H}_j}=d_j$ ($d_j\geq 2, \forall j$);
define $d_T=\prod_{j=1}^N d_j$ and $d_S=\sum_{j=1}^N d_j$. The
fraction of pure states whose GM is smaller than $\log
d_T-\log(d_S\ln d_T)-\log\frac{9}{2}$ is less than $d_T^{-d_S}$; the
fraction of pure states whose GM is smaller than $\frac{1}{2}\log
d_T$ is less than
$\exp\bigl[-\frac{2}{3}\sqrt{d_T}+d_S\ln(59Nd_T)\bigr]$. For pure
states with real entries in the computational basis, the fraction of
pure states whose GM is smaller than $\log d_T-\log(d_S\ln
d_T)-\log9$ is less than $d_T^{-d_S}$; the fraction of pure states
whose GM is smaller than $\frac{1}{2}\log d_T$ is less than
$\exp\bigl[-\frac{1}{3}\sqrt{d_T}+d_S\ln(59Nd_T)\bigr]$.
\end{proposition}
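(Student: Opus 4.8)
The plan is to combine a pointwise concentration estimate for the overlap of a Haar-random pure state with a \emph{fixed} product state with a covering-net argument over the manifold of product states, in the spirit of Gross, Flammia and Eisert~\cite{GFE09}. Since $G(|\psi\rangle)=-\log\Lambda^2(|\psi\rangle)$ with $\Lambda^2(|\psi\rangle)=\max_{|\varphi\rangle\in\mathrm{PRO}}|\langle\varphi|\psi\rangle|^2$, the event $G(|\psi\rangle)<T$ coincides with $\Lambda^2(|\psi\rangle)>2^{-T}$, so it suffices, writing $\epsilon=2^{-T}$, to upper bound the probability that some fully product state has squared overlap with $|\psi\rangle$ exceeding $\epsilon$. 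The four assertions correspond to $\epsilon=\frac{9 d_S\ln d_T}{2d_T}$ and $\epsilon=d_T^{-1/2}$, in the complex and in the real case respectively; the real thresholds replace $\log\frac{9}{2}$ by $\log 9$ and $\frac{2}{3}$ by $\frac{1}{3}$ for the reason indicated below.

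First I would record the pointwise estimate. For a fixed unit product vector $|\varphi\rangle$ and $|\psi\rangle$ Haar-distributed on the unit sphere of $\bigotimes_j\mathcal H_j\cong\mathbb C^{d_T}$, the variable $|\langle\varphi|\psi\rangle|^2$ is $\mathrm{Beta}(1,d_T-1)$-distributed, hence $\Pr[|\langle\varphi|\psi\rangle|^2>s]=(1-s)^{d_T-1}\le e^{-s(d_T-1)}$; in the real case $|\langle\varphi|\psi\rangle|^2$ is $\mathrm{Beta}(\frac12,\frac{d_T-1}{2})$-distributed and a crude tail bound gives $\Pr[|\langle\varphi|\psi\rangle|^2>s]\le e^{-s(d_T-1)/2}$ up to a prefactor that is at most $1$ for the relevant range of $s$, which is exactly where the factor-two discrepancy between the complex and real statements originates. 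Next I would build a net: for each $j$ take a $\delta$-net $\mathcal N_j$ of the unit sphere of $\mathcal H_j$ with $|\mathcal N_j|\le(1+2/\delta)^{2d_j}$ (exponent $d_j$ in the real case), and let $\mathcal N=\{\bigotimes_j a_j:a_j\in\mathcal N_j\}$; telescoping shows $\mathcal N$ is an $N\delta$-net of $\mathrm{PRO}$ in Euclidean distance with $|\mathcal N|\le(1+2/\delta)^{2d_S}$. Since $|\varphi\rangle\mapsto|\langle\varphi|\psi\rangle|$ is $1$-Lipschitz, if every $|\varphi\rangle\in\mathcal N$ satisfies $|\langle\varphi|\psi\rangle|\le\beta\sqrt\epsilon$ then $\Lambda(|\psi\rangle)\le\beta\sqrt\epsilon+N\delta$, so the choice $\delta=(1-\beta)\sqrt\epsilon/N$ forces $\Lambda^2(|\psi\rangle)\le\epsilon$; contrapositively, $\Lambda^2(|\psi\rangle)>\epsilon$ forces some net point to have squared overlap $>\beta^2\epsilon$.

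Combining the two ingredients by a union bound gives
\[
\Pr\bigl[\Lambda^2(|\psi\rangle)>\epsilon\bigr]\le|\mathcal N|\,e^{-c\beta^2\epsilon(d_T-1)}\le\exp\Bigl[\,2d_S\ln\!\bigl(1+\frac{2N}{(1-\beta)\sqrt\epsilon}\bigr)-c\beta^2\epsilon(d_T-1)\Bigr],
\]
with $c=1$ for complex states and $c=\frac12$ for real ones. For $\epsilon=\frac{9 d_S\ln d_T}{2d_T}$ the exponential term is of order $d_T^{-\alpha d_S}$ with $\alpha>1$ while the net term is $O(d_S\ln(N\sqrt{d_T}))$, and with the constant $\frac{9}{2}$ and a suitable $\beta$ their sum is below $-d_S\ln d_T$, giving the first claim $\Pr<d_T^{-d_S}$. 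For $\epsilon=d_T^{-1/2}$ the exponential term is $\Theta(\sqrt{d_T})$ and a near-optimal choice (essentially $\beta^2=\frac23$) yields $\exp[-\frac23\sqrt{d_T}+d_S\ln(59Nd_T)]$; in the real case the same $\beta$ together with $c=\frac12$ turns the exponent $\frac23$ into $\frac13$, and the calibrated constant $\frac92$ into $9$. The hypotheses $d_j\ge2$ and $N\ge3$ are used only to guarantee $d_T$ is large enough that the lower-order terms are absorbed into the stated constants.

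The step I expect to be the main obstacle is not conceptual but the constant-chasing in this last calculation: one must choose the net fineness $\delta$ (equivalently the split parameter $\beta$ between the admissible ``net value'' and the ``net error'') so as to balance the competing contributions, and then estimate $\ln(1+x)$ and $(1-s)^m\le e^{-sm}$ sharply enough---carefully tracking $d_T$ versus $d_T-1$ and the precise $N$-dependence of the net cardinality---so that the exponents come out exactly as $-\frac23\sqrt{d_T}$ (resp.\ $-\frac13\sqrt{d_T}$) and the logarithmic factor exactly as $d_S\ln(59Nd_T)$.
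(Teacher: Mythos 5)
Your strategy is the paper's strategy: a per-party covering net tensored into a net on $\mathrm{PRO}$, the exact Beta/Levy tail $(1-s)^{d_T-1}$ for the overlap with a fixed state (halved exponent in the real case), a Lipschitz transfer from the net to the continuum, a union bound, and the two choices $\epsilon\sim d_S\ln d_T/d_T$ and $\epsilon\sim d_T^{-1/2}$. So the outline is sound and matches the published proof.

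Two bookkeeping points, however, will prevent your version from landing on the stated constants. First, your net: telescoping Euclidean $\delta$-nets forces per-party fineness $\delta\sim(1-\beta)\sqrt\epsilon/N$, giving $\ln|\mathcal N|\sim 2d_S\ln\bigl(N/((1-\beta)\sqrt\epsilon)\bigr)$, i.e.\ an $N^2$ inside the logarithm after squaring; the paper instead uses the overlap-based net of Hayden--Leung--Shor--Winter, taking each $\mathcal M_j$ to be an $(\varepsilon/\sqrt N)$-net in trace distance and multiplying fidelities via $(1-\varepsilon^2/4N)^N\ge 1-\varepsilon^2/4$, which yields $|\mathcal N|\le(5\sqrt N/\varepsilon)^{2d_S}$ and hence exactly $d_S\ln(\frac{225}{4}Nd_T)\le d_S\ln(59Nd_T)$. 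With your $N$-scaling the required inequality fails already at $N=3$, $d_j=2$ ($d_T=8$), so you cannot recover $59Nd_T$ without switching to the finer net (the paper also uses an additive slack $\varepsilon/2$ in the \emph{squared} overlap, via $\bigl||\langle\varphi|\Psi\rangle|^2-|\langle\tilde\varphi|\Psi\rangle|^2\bigr|\le\frac12\|\varphi-\tilde\varphi\|_1$, rather than your multiplicative $\beta$-split, which is why the thresholds come out as $\frac32\varepsilon$). Second, in the real case the net points are general \emph{complex} product states, so the overlap of a real Haar vector with a fixed net point is the squared norm of its projection onto a two-real-dimensional subspace; this gives the clean tail $(1-s)^{d_T/2-1}$ directly (the paper's computation with $x_0^2+x_1^2$), and you should use that in place of the $\mathrm{Beta}(\frac12,\cdot)$ identification, which pertains to a real test vector and whose tail does not obviously satisfy your "prefactor at most $1$" claim in the regime needed. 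With these two repairs your argument reproduces the proposition verbatim.
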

By means  of the relation among the three measures GM, REE, and LGR
(see \eref{eq:REE-GM}), we obtain
\begin{corollary}\label{cor:REEME}
Suppose pure states  are drawn according to the Haar measure from
the Hilbert space $\bigotimes^N_{j=1} \mathcal{H}_j$ with $N\geq3$
and $\mathrm{Dim}~{\mathcal{H}_j}=d_j$ ($d_j\geq 2, \forall j$);
define $d_T=\prod_{j=1}^N d_j$ and $d_S=\sum_{j=1}^N d_j$. The
fraction of pure states whose REE or LGR is smaller than $\log
d_T-\log(d_S\ln d_T)-\log\frac{9}{2}$ is less than $d_T^{-d_S}$; the
fraction of pure states whose REE or LGR is smaller than
$\frac{1}{2}\log d_T$ is less than
$\exp\bigl[-\frac{2}{3}\sqrt{d_T}+d_S\ln(59Nd_T)\bigr]$.  For pure
states with real entries in the computational basis, the fraction of
pure states whose REE or LGR is smaller than $\log d_T-\log(d_S\ln
d_T)-\log9$ is less than $d_T^{-d_S}$; the fraction of pure states
whose REE or LGR  is smaller than $\frac{1}{2}\log d_T$ is less than
$\exp\bigl[-\frac{1}{3}\sqrt{d_T}+d_S\ln(59Nd_T)\bigr]$.
\end{corollary}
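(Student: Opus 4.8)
The plan is to obtain Corollary~\ref{cor:REEME} as an immediate consequence of Proposition~\ref{pro:GMME} together with the chain of inequalities \eref{eq:REE-GM}. Since every state under consideration here is pure, its von Neumann entropy vanishes, so \eref{eq:REE-GM} specializes to
\begin{equation}
R_{\mathrm{L}}(|\psi\rangle)\geq E_{\mathrm{R}}(|\psi\rangle)\geq G(|\psi\rangle).
\end{equation}
Hence, for any threshold $t$, one has the set inclusions $\{|\psi\rangle: R_{\mathrm{L}}(|\psi\rangle)<t\}\subseteq\{|\psi\rangle: E_{\mathrm{R}}(|\psi\rangle)<t\}\subseteq\{|\psi\rangle: G(|\psi\rangle)<t\}$. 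In particular, because $R_{\mathrm{L}}\geq E_{\mathrm{R}}$, the event ``$E_{\mathrm{R}}(|\psi\rangle)<t$ or $R_{\mathrm{L}}(|\psi\rangle)<t$'' coincides with ``$E_{\mathrm{R}}(|\psi\rangle)<t$'', and both are contained in the event ``$G(|\psi\rangle)<t$''. The same inclusions hold verbatim when the Haar measure is restricted to the subspace of states with real entries in the computational basis.

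Next I would take the Haar measure (respectively, its real-subspace restriction) of both sides of these inclusions: monotonicity of measure then gives that the fraction of pure states with $R_{\mathrm{L}}$ or $E_{\mathrm{R}}$ below $t$ is at most the fraction of pure states with $G$ below $t$. It remains only to insert the two thresholds supplied by Proposition~\ref{pro:GMME}. For $t=\log d_T-\log(d_S\ln d_T)-\log\frac{9}{2}$ in the general case, and $t=\log d_T-\log(d_S\ln d_T)-\log9$ in the real case, the latter fraction is below $d_T^{-d_S}$; for $t=\frac{1}{2}\log d_T$ it is below $\exp[-\frac{2}{3}\sqrt{d_T}+d_S\ln(59Nd_T)]$ in the general case and below $\exp[-\frac{1}{3}\sqrt{d_T}+d_S\ln(59Nd_T)]$ in the real case. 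These are precisely the bounds asserted in the corollary.

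There is essentially no obstacle here: all the analytic work---the Haar-measure concentration estimates, the $\varepsilon$-net plus union-bound argument controlling the number of ``badly aligned'' product states, and the reduction through Proposition~\ref{pro:upperboundU}---is already carried out in Proposition~\ref{pro:GMME}. The only point requiring a word of care is that \eref{eq:REE-GM} be applied in the pure-state regime, where $S(|\psi\rangle)=0$, so that $G(|\psi\rangle)$ is genuinely a lower bound for both $E_{\mathrm{R}}(|\psi\rangle)$ and $R_{\mathrm{L}}(|\psi\rangle)$ rather than only for $E_{\mathrm{R}}(|\psi\rangle)-S(|\psi\rangle)$; this is already noted in the text immediately below \eref{eq:REE-GM}.
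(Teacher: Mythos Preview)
Your argument is correct and matches the paper's own derivation: the corollary is stated immediately after Proposition~\ref{pro:GMME} with only the remark that it follows via \eref{eq:REE-GM}, and your reduction $\{R_{\mathrm{L}}<t\}\subseteq\{E_{\mathrm{R}}<t\}\subseteq\{G<t\}$ for pure states is exactly what is intended. One small inaccuracy in your commentary: Proposition~\ref{pro:upperboundU} plays no role in Proposition~\ref{pro:GMME} (it is used for Theorem~\ref{thm:GMNonAdd} instead), but this does not affect your proof.
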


Note that  $G(\rho)\leq E_{\mathrm{R}}(\rho)\leq \log d_T$ for any
pure state $\rho$, since $S(\rho\|I/d_T)=\log d_T$.
Proposition~\ref{pro:GMME} and corollary~\ref{cor:REEME} implies
that  almost all multipartite pure states are nearly maximally
entangled with respect to GM and REE, if the number of parties is
sufficiently large, and the dimensions of the local Hilbert spaces
are comparable. In particular, if the dimensions of the local
Hilbert spaces are equal, then the probability that GM (REE, LGR) is
smaller than $\log d_T-\log(d_S\ln d_T)-\log\frac{9}{2}$ decreases
exponentially with the number of parties $N$, and the probability
that GM (REE, LGR) is smaller than $\frac{1}{2}\log d_T$ decreases
doubly exponentially.

\begin{proof}
To prove the proposition, we need the concept of $\varepsilon$-net.
An $\varepsilon$-net $\mathcal{N}_{\varepsilon,N}$ on the set of
pure product states is a set of states that satisfy
\begin{eqnarray}
\max_{|\varphi\rangle\in
\mathrm{PRO}}\min_{|\tilde{\varphi}\rangle\in\mathcal{N}_{\varepsilon,N}
}\Bigl\||\varphi\rangle\langle\varphi|-|\tilde{\varphi}\rangle\langle\tilde{\varphi}|\Bigr\|_1\leq
\varepsilon,
\end{eqnarray}
or equivalently,
\begin{eqnarray}
\min_{|\varphi\rangle\in
\mathrm{PRO}}\max_{|\tilde{\varphi}\rangle\in\mathcal{N}_{\varepsilon,N}
}|\langle\varphi|\tilde{\varphi}\rangle|^2\geq
1-\frac{\varepsilon^2}{4}.
\end{eqnarray}
We shall show that there exists an $\varepsilon$-net with
$|\mathcal{N}_{\varepsilon,N}|\leq(5\sqrt{N}/\varepsilon)^{2d_S}$,
where $|\mathcal{N}_{\varepsilon,N}|$ denotes the number of elements
in the $\varepsilon$-net. From \cite{HLSW04}, we know that there is
an $\varepsilon$-net $\mathcal{M}$ on the Hilbert space of single
qudit with $|\mathcal{M}|\leq(5/\varepsilon)^{2d}$. Let
$\mathcal{M}_j$ be an $(\varepsilon/\sqrt{N})$-net on
$\mathcal{H}_j$ with
$|\mathcal{M}_j|\leq(5\sqrt{N}/\varepsilon)^{2d_j}$ for
$j=1,\ldots,N$, and
$\mathcal{N}_{\varepsilon,N}:=\bigl\{\bigotimes_{j=1}^{N}
|\tilde{a}_j\rangle: |\tilde{a}_j\rangle\in \mathcal{M}_j\bigr\}$.
Suppose $|\varphi\rangle=\bigotimes_{j=1}^N |a_j\rangle$ is an
arbitrary product state,  by definition of the
$(\varepsilon/\sqrt{N})$-net, for each $j$, there exists
$|\tilde{a}_j\rangle\in \mathcal{M}_j$ such that $|\langle
a_j|\tilde{a}_j\rangle|^2\geq 1-\varepsilon^2/4N$. It follows that
the following relation holds for
$|\tilde{\varphi}\rangle=\bigotimes_{j=1}^N|\tilde{a}_j\rangle\in
\mathcal{N}_{\varepsilon,N}$,
\begin{eqnarray}
|\langle\varphi|\tilde{\varphi}\rangle|^2=\Bigl(1-\frac{\varepsilon^2}{4N}\Bigr)^N\geq
1-\frac{\varepsilon^2}{4}.
\end{eqnarray}
Hence, $\mathcal{N}_{\varepsilon,N}$ is an $\varepsilon$-net on the
set of product states with
$|\mathcal{N}_{\varepsilon,N}|\leq(5\sqrt{N}/\varepsilon)^{2d_S}$.

Another ingredient in our proof is the following result on the
concentration of measure: Let $|\Phi\rangle$ be any given pure
state, and $|\Psi\rangle$ be chosen according to the Haar measure,
then we have (assuming $0<\varepsilon<1$)
\begin{eqnarray} \label{eq:concentrationG}
\mathrm{Prob}\{|\langle\Phi|\Psi\rangle|^2\geq
\varepsilon\}=(1-\varepsilon)^{d_T-1}<\exp[-(d_T-1)\varepsilon].
\end{eqnarray}
Since
\begin{eqnarray}
&&\Bigl|
|\langle\varphi|\Psi\rangle|^2-|\langle\tilde{\varphi}|\Psi\rangle|^2\Bigr|
=\Bigl|\mathrm{tr}\Bigl[(|\varphi\rangle\langle\varphi|-|\tilde{\varphi}\rangle\langle\tilde{\varphi}|)|\Psi\rangle\langle\Psi|\Bigr]\Bigr|\nonumber\\
&\leq&\Bigl\||\varphi\rangle\langle\varphi|-|\tilde{\varphi}\rangle\langle\tilde{\varphi}|\Bigr\|_\infty
=\frac{1}{2}\Bigl\||\varphi\rangle\langle\varphi|-|\tilde{\varphi}\rangle\langle\tilde{\varphi}|\Bigr\|_1\leq\frac{\varepsilon}{2},
\end{eqnarray}
the probability that $G(|\Psi\rangle)\leq
-\log\bigl(\frac{3}{2}\varepsilon\bigr)$ is at
most\begin{eqnarray}\label{eq:ProbG1}
&&\fl\mathrm{Prob}\Bigl\{\max_{|\tilde{\varphi}\rangle\in
\mathcal{N}_{\varepsilon,N}}
|\langle\tilde{\varphi}|\Psi\rangle|^2\geq \varepsilon \Bigr\}<
\exp[-(d_T-1)\varepsilon]\bigl|\mathcal{N}_{\varepsilon,N}\bigr|
\leq\exp[-(d_T-1)\varepsilon]\Bigl(\frac{5\sqrt{N}}{\varepsilon}\Bigr)^{2d_S}.\nonumber\\
\end{eqnarray}

Now let $\varepsilon=3d_S\ln d_T/d_T$, the probability that
$G(|\Psi\rangle)\leq [\log d_T-\log(d_S\ln d_T)-\log\frac{9}{2}]$ is
at most\begin{eqnarray}\label{eq:ProbG2}
&&\fl\mathrm{Prob}\Bigl\{\max_{|\tilde{\varphi}\rangle\in
\mathcal{N}_{\varepsilon,N}}
|\langle\tilde{\varphi}|\Psi\rangle|^2\geq\frac{3d_S\ln
d_T}{d_T}\Bigr\}< \exp\Bigl[-d_S\ln d_T+d_S\Bigl(\frac{3\ln
d_T}{d_T}+2\ln\frac{5\sqrt{N}}{3d_S\ln d_T}\Bigr)\Bigr]\nonumber\\
&&<\exp(-d_S\ln d_T)=d_T^{-d_S}.
\end{eqnarray}
Here the last inequality can be derived as follows. Under our
assumption $N\geq 3$, $d_j\geq2, \forall j$ ($d_S\geq 6, d_T\geq8$),
the maximum of $3\ln d_T/d_T+2\ln\bigl[5\sqrt{N}/(3d_S\ln
d_T)\bigr]$ for each $N$, which  is obtained when $d_j=2, \forall
j$, decreases monotonically with $N$. Hence, the global maximum of
$3\ln d_T/d_T+2\ln\bigl[5\sqrt{N}/(3d_S\ln d_T)\bigr]$ is obtained
when $N=3$ and $d_1=d_2=d_3=2$. Since this maximum is negative, the
inequality follows.

Next, let $\varepsilon=\frac{2}{3}d_T^{-1/2}$, the probability that
$G(|\Psi\rangle)\leq \frac{1}{2}\log d_T$ is at most
\begin{eqnarray}\label{eq:ProbG3}
&&\fl \mathrm{Prob}\Bigl\{\max_{|\tilde{\varphi}\rangle\in
\mathcal{N}_{\varepsilon,N}}
|\langle\tilde{\varphi}|\Psi\rangle|^2\geq \frac{2}{3\sqrt{d_T}}
\Bigr\}\leq \exp\Bigl[-\frac{2}{3}\sqrt{d_T}+\frac{2}{3\sqrt{d_T}}+d_S\ln\Bigl(\frac{225}{4}Nd_T\Bigr)\Bigr]\nonumber\\
&&= \exp\Bigl[-\frac{2}{3}\sqrt{d_T}+d_S\ln(59Nd_T)\Bigr]
\exp\Bigl[\frac{2}{3}d_T^{-1/2}+d_S\ln\Bigl(\frac{225}{236}\Bigr)\Bigr]\nonumber\\
&&<\exp\Bigl[-\frac{2}{3}\sqrt{d_T}+d_S\ln(59Nd_T)\Bigr].
\end{eqnarray}
Here the last inequality  can be seen as follows. Under our
assumption $N\geq 3$, $d_j\geq2, \forall j$ ($d_S\geq 6, d_T\geq8$),
the maximum of
$\frac{2}{3}d_T^{-1/2}+d_S\ln\bigl(\frac{225}{236}\bigr)$, which is
obtained when $N=3$ and $d_1=d_2=d_3=2$ ($d_S=6, d_T=8$), is
negative.

Now suppose $|\Psi_{\mathrm{R}}\rangle$ is  chosen according to the
Haar measure from kets with real amplitudes in the computational
basis, and $|\Phi\rangle$ is still a given pure state not
necessarily with real amplitudes. Then we have (assuming
$0<\varepsilon<1$)
\begin{eqnarray}\label{eq:concentrationR}
\mathrm{Prob}\{|\langle\Phi|\Psi_{\mathrm{R}}\rangle|^2\geq
\varepsilon\}\leq(1-\varepsilon)^{\frac{d_T}{2}-1}<\exp\Bigl[-\Bigl(\frac{d_T}{2}-1\Bigr)\varepsilon\Bigr];
\end{eqnarray}
the main difference of the above equation from
(\ref{eq:concentrationG}) is the replacement of $d_T$ by $d_T/2$.
\Eref{eq:concentrationR} can be derived as follows. By a suitable
orthogonal transformation if necessary, $|\Phi\rangle$ can be turned
into the form $(a+b\mathrm{i})|0\rangle+c\mathrm{i}|1\rangle$, where
$|0\rangle$ and $|1\rangle$ are two basis kets within the
orthonormal basis $|0\rangle, |1\rangle, \ldots, |d_T-1\rangle$, and
$a,b,c$ are real numbers satisfying $a^2+b^2+c^2=1$. Suppose
$|\Psi_{\mathrm{R}}\rangle=\sum_{j=0}^{d_T-1}x_j|j\rangle$, where
$x_j$s are real numbers satisfying $\sum_{j=0}^{d_T-1} x_j^2=1$;
then $|\langle\Phi|\Psi_{\mathrm{R}}\rangle|^2\leq x_0^2+x_1^2$.
Hence
\begin{eqnarray}
\fl \mathrm{Prob}\{|\langle\Phi|\Psi_{\mathrm{R}}\rangle|^2\geq
\varepsilon\}\leq \mathrm{Prob}\Biggl\{x_0^2+x_1^2\geq
\varepsilon\Biggl|\sum_{j=0}^{d_T-1}
x_j^2=1\Biggr\}=(1-\varepsilon)^{\frac{d_T}{2}-1}.
\end{eqnarray}

According to the same reasoning that leads to \eref{eq:ProbG1}, the
probability that $G(|\Psi_{\mathrm{R}}\rangle)\leq
-\log\bigl(\frac{3}{2}\varepsilon\bigr)$ is at most
\begin{eqnarray}\label{eq:ProbR1}
&&\fl\mathrm{Prob}\Bigl\{\max_{|\tilde{\varphi}\rangle\in
\mathcal{N}_{\varepsilon,N}}
|\langle\tilde{\varphi}|\Psi_{\mathrm{R}}\rangle|^2\geq \varepsilon
\Bigr\}<\exp[-\Bigl(\frac{d_T}{2}-1\Bigr)\varepsilon]\Bigl(\frac{5\sqrt{N}}{\varepsilon}\Bigr)^{2d_S}.
\end{eqnarray}
Let $\varepsilon=6d_S\ln d_T/d_T$, the probability that
$G(|\Psi\rangle)\leq [\log d_T-\log(d_S\ln d_T)-\log9]$ is at most
\begin{eqnarray}\label{eq:ProbR2}
&&\fl\mathrm{Prob}\Bigl\{\max_{|\tilde{\varphi}\rangle\in
\mathcal{N}_{\varepsilon,N}}
|\langle\tilde{\varphi}|\Psi\rangle|^2\geq\frac{6d_S\ln
d_T}{d_T}\Bigr\}< \exp\Bigl[-d_S\ln d_T+2d_S\Bigl(\frac{3\ln
d_T}{d_T}+\ln\frac{5\sqrt{N}}{6d_S\ln d_T}\Bigr)\Bigr]\nonumber\\
&&<\exp(-d_S\ln d_T)=d_T^{-d_S}.
\end{eqnarray}
Next, let $\varepsilon=\frac{2}{3}d_T^{-1/2}$, the probability that
$G(|\Psi\rangle)\leq \frac{1}{2}\log d_T$ is at most
\begin{eqnarray}\label{eq:ProbR3}
&&\fl \mathrm{Prob}\Bigl\{\max_{|\tilde{\varphi}\rangle\in
\mathcal{N}_{\varepsilon,N}}
|\langle\tilde{\varphi}|\Psi\rangle|^2\geq \frac{2}{3\sqrt{d_T}}
\Bigr\}\leq \exp\Bigl[-\frac{1}{3}\sqrt{d_T}+\frac{2}{3\sqrt{d_T}}+d_S\ln\Bigl(\frac{225}{4}Nd_T\Bigr)\Bigr]\nonumber\\
&&<\exp\Bigl[-\frac{1}{3}\sqrt{d_T}+d_S\ln(59Nd_T)\Bigr].
\end{eqnarray}
The derivation of \eref{eq:ProbR2} and \eref{eq:ProbR3} is similar
to that of \eref{eq:ProbG2} and \eref{eq:ProbG3}.
\end{proof}

\subsection{\label{sec:implication}Implications of additivity property for one-way quantum computation
and for asymptotic state transformation} Recently, Gross {\it et al}
\cite{GFE09} (see also \cite{BMW09}) showed that most quantum states
are too entangled to be useful as computational resources. One of
the key ingredient in their proof is the observation that almost all
pure multiqubit  states are nearly maximally entangled with respect
to GM. However, their arguments would break down, if measurements
are allowed on the tensor product of the resource states, since
$\rho\otimes \rho^*$ is just moderate entangled (GM is nearly one
half of the maximal possible value) for a generic pure multiqudit
states $\rho$, according to theorem~\ref{thm:GMNonAdd}. In
particular, two copies of $\rho$ is moderate entangled if $\rho$ is
a real state. Hence, it is conceivable that we may realize universal
quantum computation on certain family of multiqudit states if they
come in pairs, even if this is impossible on a single copy. It would
be very desirable to construct an explicit example of such a family
of multiqudit states or disprove this possibility. However, a
detailed investigation along this direction would well go beyond the
scope of this paper.

Corollary~\ref{cor:REEME} has a significant implication for
asymptotic state transformation. In particular, it implies that
almost all multiqudit pure states cannot be prepared reversibly with
multipartite GHZ  states (of various numbers of parties) under
asymptotic LOCC, unless REE is non-additive for generic multiqudit
states. This can be seen as follows. According to the result of
Linden {\it et al} \cite{lpsw99}, reversible transformation between
two pure states under asymptotic LOCC would mean that the ratio of
the bipartite AREE $E_{\mathrm{R}}^\infty(A_j:\tilde{A}_j)$ to the
$N$-partite AREE $E_{\mathrm{R}}^\infty$ is conserved, for
$j=1,2,\ldots, N$, where $\tilde{A}_j$ denotes all the parties
except $A_j$. As a result, the ratio $\bigl[\sum_{j=1}^N
E_{\mathrm{R}}^\infty(A_j:\tilde{A}_j)\bigr]/E_{\mathrm{R}}^\infty$
is conserved. If a state $|\psi\rangle\langle\psi|$ can be prepared
reversibly with $n_k$ copies of $k$-partite GHZ states for
$k=2,\ldots,N$, then
\begin{eqnarray}\label{eq:AREEratio}
\frac{\sum_{j=1}^N
E_{\mathrm{R}}^\infty\bigl(|\psi\rangle_{A_j:\tilde{A}_j}\bigr)}{E_{\mathrm{R}}^\infty(|\psi\rangle)}=\frac{\sum_{k=2}^N
k n_k}{\sum_{k=2}^N n_k}\geq 2,
\end{eqnarray}
where we have used the fact that REE of the tensor product of GHZ
type states is additive. On the other hand,
$E_{\mathrm{R}}^{\infty}\bigl(|\psi\rangle_{A_j:\tilde{A}_j}\bigr)=E_{\mathrm{R}}\bigl(|\psi\rangle_{A_j:\tilde{A}_j}\bigr)\leq
\log d$, in addition, $E_{\mathrm{R}}(|\psi\rangle)>\frac{1}{2}\log
d_T=\frac{N}{2}\log d$ for almost all multiqudit  pure  states,
according to corollary~\ref{cor:REEME}. If REE of $|\psi\rangle$ is
additive, then we have
\begin{eqnarray}
\frac{\sum_{j=1}^N
E_{\mathrm{R}}^\infty\bigl(|\psi\rangle_{A_j:\tilde{A}_j}\bigr)}{E_{\mathrm{R}}^\infty(|\psi\rangle)}=\frac{\sum_{j=1}^N
E_{\mathrm{R}}\bigl(|\psi\rangle_{A_j:\tilde{A}_j}\bigr)}{E_{\mathrm{R}}(|\psi\rangle)}<
\frac{N\log d}{\frac{N}{2}\log d}= 2,
\end{eqnarray}
which contradicts \eref{eq:AREEratio}. Hence, almost all multiqudit
pure states cannot be prepared reversibly under asymptotic LOCC,
unless  REE is non-additive for generic multiqudit pure states. Our
observation adds to the evidence that  a reversible entanglement
generating set \cite{bpr00,AVC} with a finite cardinality may not
exist.

As a concrete example, similar reasoning has been employed  by
Ishizaka and Plenio \cite{ip05} to  show that $|\psi_{3-}\rangle$
cannot be generated reversibly from the GHZ state and EPR pairs
under asymptotic LOCC if its REE is additive. The same is true for
$|\psi_{N-}\rangle$ with $N\geq3$, since
$E_{\mathrm{R}}(|\psi_{N-}\rangle)=\log (N!)>\frac{N}{2}\log N$
\cite{weg04,wei08,hmm08} (see also \eref{eq:ASP1} in
\sref{sec:ass3}).

\section{Summary}

In this paper, we have studied the additivity property of three main
multipartite entanglement measures, namely GM, REE and LGR.

Firstly, we proved the strong additivity of GM  of  non-negative
states,  thus simplifying the computation of GM and AGM of a large
family of states of  either experimental or theoretical interest.
Thanks to the connection among the three measures, GM of
non-negative states provides a lower bound for AREE and ALGR, and a
new approach for proving the additivity of REE and LGR for states
with certain group symmetry. In particular, we  proved the strong
additivity of GM and the additivity of REE of  Bell diagonal states,
 maximally correlated generalized Bell diagonal states,
generalized Dicke states and their reduced states after tracing out
one party,  the Smolin state and  D\"{u}r's multipartite entangled
states etc. The additivity of LGR of generalized Dicke states and
the Smolin state was also shown. These results can be applied to
studying state discrimination under LOCC \cite{hmm06,hmm08}, the
classical capacity of quantum multi-terminal  channels. The result
on AREE is also useful in studying state transformation  either
under asymptotic LOCC or under asymptotic non-entangling operations.
For non-negative bipartite states, the result on AREE also leads to
a new lower bound for entanglement of formation and entanglement
cost. The result on GM and AGM may  find applications in the study
of quantum channels due to the connection between pure tripartite
states and quantum channels \cite{wh02}.

Secondly, we  established a simple connection between the
permutation symmetry and the additivity property of multipartite
entanglement measures. In particular, we showed that   GM is
non-additive for antisymmetric states shared over three or more
parties. Also, we gave a unified explanation of the non-additivity
of the three measures GM, REE and LGR of the antisymmetric projector
states, and derive analytical formulae of the three measures for one
copy and tow copies of such states.  Our results on antisymmetric
states are
 expected to be useful in the study of fermion systems, which
are described by antisymmetric states due to the super-selection
rule.

Thirdly, we showed  that almost all multipartite pure states are
maximally entangled with respect to GM and REE. However, their GM is
not strong additive; moreover, for generic pure states with real
entries in the computational basis, GM of one copy  and  two copies,
respectively, are almost equal. Based on these observations, we
showed that more states may be suitable for universal quantum
computation, if measurements can be performed on two copies of the
resource states. We also showed that, for almost all multipartite
pure states, the additivity of their REE implies the irreversibility
in generating them from GHZ type states under LOCC, even in the
asymptotic sense.

There are also quite a few open problems  which can be new
directions in the future study of multipartite entanglement.
\begin{enumerate}
  \item Are  GM and REE of arbitrary two-qubit states and pure three-qubit states
  additive?
  \item Are  GM and  REE of arbitrary symmetric states  additive? We cannot find any counterexamples at the moment; however, the
  possibility has not been excluded.

  \item   When are   GM and REE of bipartite mixed antisymmetric  states
  additive or non-additive?

  \item What are  AGM, AREE and ALGR of the antisymmetric projector
  states?  It is enough to compute any one of the three measures, since
  they are related to each other by the simple equalities in proposition~\ref{pro:ASPtensor}.

\item Are GM, REE and LGR non-additive for  generic multipartite
states?

  \item Does there exist a family of quantum states such that two
copies are universal for quantum computation while one copy is not?

\end{enumerate}

\section*{Acknowledgment}
The authors thank Professor Andreas Winter for pointing out the
connection between the non-additivity of GM and the Werner-Holevo
channel, and thank him and Dr. Akimasa Miyake for carefully
introducing the knowledge of one-way quantum computation, which
greatly stimulates the research in \sref{sec:NonAddG}. The authors
also thank Dr. Fernando Guadalupe Santos Lins Brand\~ao and Dr.
Tzu-Chieh Wei for carefully reading the manuscript. LC thanks
Professor Dong Yang for reading the first version of the manuscript.
HZ thanks Dr. Ke Li for pointing out that the proof of
theorem~\ref{thm:additivityGnon-negative2} does not rely on
lemma~\ref{le:closest2}. MH thanks Dr. Masaki Owari for helpful
discussions. The Centre for Quantum Technologies is funded by the
Singapore Ministry of Education and the National Research Foundation
as part of the Research Centres of Excellence programme. MH is also
partially supported by a Grant-in-Aid for Scientific Research in the
Priority Area ``Deepening and Expansion of Statistical Mechanical
Informatics (DEX-SMI),'' No. 18079014 and a MEXT Grant-in-Aid for
Young Scientists (A) No. 20686026.

 \emph{Note added}. Upon completion of this work, we found that Plastino
 {\it et al} \cite{pmd09} had derived a similar criterion as proposition~\ref{pro:Slater} for determining
 when an antisymmetric state is a Slater determinant state.

\section*{Appendix}
Suppose $\rho_N$ and $\rho_N^\prime$ are two $N$-partite states on
the Hilbert space $\bigotimes^N_{j=1} \mathcal{H}_j$ with
$\mathrm{Dim}~{\mathcal{H}_j}=d_j$; define  $d_T=\prod_{j=1}^N d_j$,
\begin{eqnarray}
&&|\varphi_N\rangle=\bigotimes_{j=1}^N |a_{V_{j}}\rangle, \qquad
|a_{V_{j}}\rangle=\frac{1}{\sqrt{d_j}}\sum_{k,l=0}^{d_j-1}V_{j,kl}|kl\rangle,\nonumber\\
&&V_{j}=V_{j,kl}|k\rangle\langle l|,\qquad
\mathrm{tr}(V_jV_j^{\dag})=d_j, \qquad
\mathcal{V}=\bigotimes_{j=1}^NV_j.
\end{eqnarray}
In this appendix, we prove the following formula,
\begin{eqnarray}
\label{eq:App:innerG} &&
\langle\varphi_N|\rho_{N}\otimes\rho_N^\prime|\varphi_N\rangle
=\frac{1}{d_T}\mathrm{tr}\bigl(\rho_{N}^{1/2}\mathcal{V}\rho_N^{\prime*}
\mathcal{V}^{\dag}\rho_{N}^{1/2}\bigl),
\end{eqnarray}
where the complex conjugate is taken in the computational basis. The
formula reduces to
\begin{eqnarray}
\label{eq:App:inner3} &&
\langle\varphi_N|\rho_{N}\otimes\rho_N^\prime|\varphi_N\rangle
=\frac{1}{d^N}\mathrm{tr}\bigl(\rho_{N}^{1/2}V^{\otimes
N}\rho_N^{\prime*} {V^\dag}^{\otimes N}\rho_{N}^{1/2}\bigl).
\end{eqnarray}
in the special case  $d_j=d, V_j=V, \forall j$.

\begin{proof}
\begin{eqnarray}
&& \langle\varphi_N|\rho_{N}\otimes\rho_N^\prime|\varphi_N\rangle
=\Bigl\|\rho_{N}^{1/2}\otimes{\rho_{N}^\prime}^{1/2}|\varphi_N\rangle\Bigr\|^2\nonumber\\
&&=
\frac{1}{d_T}\Biggl\|\sum_{k_1,l_1,\ldots,k_N,l_N}V_{1,k_1l_1}\cdots
V_{N,k_Nl_N}\rho_{N}^{1/2}|k_1,\ldots,k_N\rangle\otimes {\rho_{N}^\prime}^{1/2}|l_1,\ldots,l_N\rangle\Biggr\|^2\nonumber\\
&&=\frac{1}{d_T}
\Biggl\|\sum_{k_1,l_1,\ldots,k_N,l_N}V_{1,k_1l_1}\cdots
V_{N,k_Nl_N}\rho_{N}^{1/2}|k_1,\ldots,k_N\rangle\langle l_1,\ldots,l_N|{\rho_{N}^{\prime T}}^{1/2}\Biggr\|_{\mathrm{HS}}^2\nonumber\\
&& =\frac{1}{d_T}\Biggl\|\rho_{N}^{1/2}\Biggl(\bigotimes_{j=1}^N
V_j\Biggr){\rho_{N}^{\prime *}}^{1/2}
\Biggr\|_{\mathrm{HS}}^2=\frac{1}{d_T}\Bigl\|\rho_{N}^{1/2}\mathcal{V}{\rho_{N}^{\prime*}}^{1/2}
\Bigr\|_{\mathrm{HS}}^2\nonumber\\
&&=\frac{1}{d_T}\mathrm{tr}\Bigl(\rho_{N}^{1/2}\mathcal{V}\rho_{N}^{\prime
*} {\mathcal{V}^\dag}\rho_{N}^{1/2}\Bigl).\nonumber
\end{eqnarray}
\end{proof}

\section*{References}

\end{document}